\pgfplotsset{compat=1.14}
\theoremstyle{definition}
\newtheorem{theorem}{Theorem}[section]
\newtheorem{propo}[theorem]{Proposition}
\newtheorem{lemma}[theorem]{Lemma}
\newtheorem{coro}[theorem]{Corollary}
\newtheorem{defi}[theorem]{Definition}
\newtheorem{ex}[theorem]{Example}
\theoremstyle{remark}
\newtheorem{remark}[theorem]{Remark}
\newcommand{\defin}[1]{\textbf{#1}}
\newcommand{\B}{\ensuremath{B}}
\newcommand{\f}{\ensuremath{{\mathcal{D}_A}}}
\newcommand{\cog}{\mathcal{W}}
\newcommand{\ra}{\ensuremath{\mathbb{R}_+ \cup \{\infty \}}}
\newcommand{\A}{A}
\newcommand{\R}{\ensuremath{\mathbb{R}}}
\newcommand{\N}{\ensuremath{\mathbb{N}}}
\newcommand{\E}{\ensuremath{\mathbb{E}}}
\newcommand{\Pro}{\ensuremath{\mathbb{P}}}
\newcommand{\Li}{\ensuremath{L^\infty}}
\newcommand{\Lq}{\ensuremath{L^q}}
\newcommand{\Lp}{\ensuremath{L^p}}
\newcommand{\D}{\ensuremath{\mathcal{D}}}
\newcommand{\X}{\ensuremath{\mathscr{X}}}
\newcommand{\bd}{\operatorname{bd}}
\newcommand{\interior}{\operatorname{int}}
\newcommand{\closure}{\operatorname{cl}}
\newcommand{\conv}{\mathrm{conv}}
\newcommand{\clconv}{\text{cl-conv}}
\newcommand{\cone}{\operatorname{cone}}
\newcommand{\clcone}{\text{cl-cone}}
\newcommand{\dd}{\mathrm{d}}
\newcommand{\LR}{\mathrm{LR}}
 \newcommand{\Acc}[2]{\mathcal{A}^{#1}_{#2}}
\newcommand{\sta}{\operatorname{st}}
\newcommand{\pl}{{Minkowski Deviation}}
\newcommand{\pls}{{Minkowski Deviations}}
\DeclareMathOperator*{\esssup}{ess\,sup}
\DeclareMathOperator*{\essinf}{ess\,inf}
\title{Minkowski deviation measures}
\author{Marlon Moresco \\ email: {marlonmoresco@hotmail.com}
\and Marcelo Brutti Righi \\ email:{marcelo.righi@ufrgs.br}
\and Eduardo Horta \\ email: {eduardo.horta@ufrgs.br}}
\begin{document}
\maketitle


\begin{abstract}

We propose to derive deviation measures through the Minkowski gauge of a given set of acceptable positions. We show that, given a suitable acceptance set, any positive homogeneous deviation measure can be accommodated in our framework. In doing so, we provide a new interpretation for such measures, namely, that they quantify how much one must shrink  or deleverage a position for it to become acceptable. In particular, the \pl\ of a set which is convex, stable under scalar addition, and radially bounded at non-constants, is a generalized deviation measure. Furthermore, we explore the relations existing between mathematical and financial properties attributable to an acceptance set, and the corresponding properties of the induced measure. Hence, we fill the gap that is the lack of an acceptance set for deviation measures. Dual characterizations in terms of polar sets and support functionals are provided.

\noindent \textbf{Keywords}: Risk measures, Deviation measures, Acceptance sets, Convex analysis, Minkowski gauges, \pls.
\end{abstract}

\section{Introduction}

In modern financial theory --- since the iconic paper of \citet{markowitz52} --- the \emph{standard deviation} has been the measure most used to quantify the risk of a financial position, especially in the framework of portfolio selection. More recently, due to the increasing necessity of paying attention to tail risks, monetary risk measures, which respect monotonicity and cash additivity, came to light. Following the seminal paper of \citet{artzner99}, theoretical properties that are desirable for a risk measure have been widely studied, but no consensus has been reached so far about which set of axioms are the most adequate (in terms of generality, applicability, theoretical tractability, etc.). The axiomatic approach of \citet{rockafellar06} represents a landmark in the literature, setting the tone for recent developments with the introduction of \emph{generalized deviation measures} --- generalizations of the standard deviation and similar measures which capture the ``degree of non-constancy'', or dispersion, of a financial position. Such measures have been proved useful in financial problems as can be seen in \citet{rockafellar06b}, \citet{pflug06}, \citet{grechuk09}, \citet{rockafellar13} among others. In this context, and due in part to the aforementioned lack of a universal approach to measure risk, a handful of coherent and convex \emph{risk} measures have been proposed and, as a dénouement, many generalized and convex \emph{deviation} measures as well. In addition, \citet{righi16}, \citet{berkhouch18} and \citet{righi19} bring forward some novel convex risk measures, in the sense of \citet{follmer02}, which explicitly take variability into account. Empirically, {this class of convex, ``dispersion aware'' risk measures has been shown to display a consistently better performance for optimal portfolio strategies, as seen in the work of} \citet{righi17b}.

\medskip
\textsc{In the present paper, we bring forward a novel way to obtain deviation measures}. Drawing inspiration from the canonical representation of a monetary risk measure as an infimum over the set of acceptable cash additions on a given position, we propose using the well-known \emph{Minkowski gauge} from Functional Analysis as a means to recover, from a given admissible set of acceptable positions, an implicit deviation functional.
Our approach indicates that, from a financial perspective, a numerically quantified measure of risk/deviation may be seen as a derived concept: one can always take acceptance sets as the fundamental building blocks. We show that if the requirement is met, that sensibility to expanding/shrinking a position is homogeneous with respect to the scale of expansion/shrinkage, then each admissible set of acceptable positions gives rise to a deviation functional, which we shall refer to as the \pl\ implied by said acceptance set. An important result which we prove herein is that \emph{\pls\ exhausts the class of positive homogeneous deviation measures}. This proposition suggests a novel way to interpret certain deviation measures --- which are commonly seen as functionals that quantify the \emph{distance} between a random variable and constancy --- as functionals that capture the amount that an agent must shrink a given position for it to be considered acceptable.

Formally, ours \pl\ is a functional defined on a space $\X$ comprised of a suitable class of random variables which represent feasible financial outcomes. The generic element $X\in\X$ is understood as a real-valued, random result of a financial asset, corresponding to a certain position whose realized value depends on the outcome $\omega$ of the market, and we adopt the convention that $X(\omega)>0$ denotes a gain. It is important to highlight the generality of our framework, in that we impose little {structure} on the space $\X$---only requiring that it be a topological vector space---, thus encompassing the most used spaces in the literature, such as the $L^p$ and Orlicz spaces. Although it is possible, \emph{in principle}, to interpret an arbitrary functional $f\colon\X\to \R\cup\{+\infty\}$ as representing the \emph{financial risk} of a position $X$ (through the value $f(X)$), it is customary in the literature to restrict attention to two broad classes of functionals, namely the class of \emph{monetary risk measures} and the class of \emph{deviation measures}.\footnote{The tenured reader is probably familiar with the fact that the terminology \emph{monetary risk} and \emph{deviation} ``measure'' is misleading as the objects under study are not bona fide measures (as in ``$\sigma$-finite measure'' for instance) but rather \emph{functionals} (possibly non-linear) on a topological vector space.} \pls\ fall in the second category, and --- as mentioned above --- coincide with the class of positive homogeneous deviation measures, in the sense that any such measure can be represented in the form
\begin{equation}\label{minkowski-functional-def}
\f(X) = \inf\{m>0\colon\,m^{-1}X\in A\},\quad X\in\X,
\end{equation}
for a suitable $A\subseteq\X$. This \emph{representation theorem} is one of the central messages of this paper, standing in analogy to the aforementioned representation theorem according to which any monetary risk measure can be expressed canonically in the form
\begin{equation}\label{eq:monetary-risk-measure-via-infimum}
\rho_A(X) = \inf\{m\in\R\colon\,X+m\in A\},\quad X\in\X,
\end{equation}
for a suitable $A\subseteq\X$. In other words, whereas monetary risk measures are representable as the \emph{minimum translation factor} (corresponding to cash addition/subtraction) which makes a given position acceptable, for positive homogeneous deviation measures the proper concept is that of a \emph{least scaling factor} (corresponding to expansion/shrinkage) which makes said position acceptable, and the function which captures the latter idea is precisely the \emph{Minkowski gauge} in \cref{minkowski-functional-def}. See \Cref{fig:minkowski-gauge}.

The \pl\ in \cref{minkowski-functional-def} has an underlying acceptance set $A$ which can be quite arbitrary, at least in principle. In practice, it must be ``sufficiently rich'' in order that the yielded \pl\ be of interest: we show, for example, that under some weak assumptions on $\f$ it is always the case that $A$ in \cref{minkowski-functional-def} is of the form $A = \{X\in\X\colon\,\f (X)\le1\}$. Again there comes to light a similarity to the typical representation of the underlying acceptance set of a monetary \emph{risk} measure: as mentioned above, an arbitrary such functional, say $\rho$, is by necessity of the form given in \cref{eq:monetary-risk-measure-via-infimum}, with $A = \left\{X\in\X \colon\, \rho(X) \leq 0 \right\}$. Of course, if $\rho$ were instead a \emph{deviation} measure, then the latter $A$ would deem only constants as acceptable. By the same token, it is clear that, in general, a deviation measure $D$ is not representable in the form $D(X) = \inf\left\{m\in\R \colon\, X + m \in A \right\}$ for some $A\subseteq\X$. In summary we have the following scheme of implications: on the one hand, there is the classical result which states that --- under suitable assumptions on the set $A$ --- the functional $\rho$ defined by \cref{eq:monetary-risk-measure-via-infimum} is a monetary risk measure, and, reciprocally, if $\rho$ is a monetary risk measure, then it can be written as in \cref{eq:monetary-risk-measure-via-infimum} with $A = \left\{X\in\X \colon\, \rho(X) \leq 0 \right\}$. On the other hand, and this is one of the main contributions of the present paper, we show that --- again under suitable assumptions on $A$ --- the functional $\f$ defined by \cref{minkowski-functional-def} is a positive homogeneous deviation measure, and, reciprocally, if $D$ is any positive homogeneous deviation measure, then $D = \f$ with $\f$ given in \cref{minkowski-functional-def} and $A = \{X\in\X\colon\,\f (X)\le1\}$. This shows, in particular, that the notion of a \emph{set of acceptable positions} must be distinct whether one has in mind monetary risk measures or, instead, deviation measures: for the latter, the ``correct'' approach is to consider a position acceptable (with regards to a deviation measure $D$) if it lies in the sub-level set $\{X\in\X\colon\, D(X)\le1\}$ or, more generally, in a sub-level set
\begin{equation}\label{eq:sub-level-sets}
\Acc{k}{D}\coloneqq \{X\in\X\colon\, D(X)\le k\},
\end{equation}
where $k>0$ is some prescribed constant.

At the heart of our approach, notwithstanding, is the message that one can take acceptance sets as the ``datum of the problem''. In other words, we argue that financially it makes sense to pass \emph{from the set to the measure} in contrast to the purely algebraic passage from the measure to the acceptance set.
 It only turns out that, quite conveniently, any ``admissible'' acceptance set $A$ is by necessity  ``nearly'' of the form $A = \Acc{1}{\f}$ where the precise meaning of ``nearly'' is given in  \Cref{lemma-item4}.
 In this milieu, one possibility could be to adapt the approaches put forth by \citet{frittelli06} and \citet{artzner09}. These can be outlined as follows: there are multiple eligible assets whose aim is to recover, from a given set $A$, an implicit measure through $ \rho_A (X) = \inf \left\{\pi (Y)\colon X + Y \in A \right\} $, where $\pi\colon \mathcal{C} \rightarrow \R$ is the cost to execute $Y$, and $\mathcal{C}$ is a set of feasible strategies. However, the preceding infimum yields a measure which is neither translation insensitive nor non-negative --- not a problem if one has risk measures in mind, but an impassable hurdle if the aim is to obtain measures of \emph{deviation}. An alternative within reach is to assume that there exists some (constant) risk-free asset $c$, in which case --- for a given position $X$ and an acceptance set $A$ --- we can use convexity to reduce the position's risk, up to the point where it becomes acceptable; in other words, by recovering the measure implied by $A$ via $\mathfrak{D}_A (X) = \inf \{\lambda \in [0,1) \colon\, (1-\lambda) X + \lambda \pi (X)c \in A \} $ this is an intrinsic risk measure as developed by \cite{farkas19}. 
Their intrinsic risk measure is  the smallest percentage of the currently held financial position
which has to be sold and reinvested in an eligible asset such that the resulting
position becomes acceptable.
There is an important drawback in this approach, however --- namely, that any two acceptable positions will always have the same measurement, whereas in general we wish to be able to distinguish the ``better'' position. Furthermore, this intrinsic risk measure is not convex. Additionally, there exist no practical measure which can be classified as an intrinsic risk measure. Our approach has the same intuition, but without its drawbacks, in fact, any positive homogeneous deviation measure will be covered in our approach. What is more, our work gives the powerful interpretation of shifting a position to acceptability to deviation measures.

The above discussion reiterates the fact that, from a financial perspective, the idea of shrinking and expanding a position is closely related to the concept of positive homogeneity, more so if we interpret the numerical quantification of risk/deviation as merely an echo stemming from an underlying operation taken on the fundamental acceptance set. Indeed, for a positive homogeneous deviation measure $D$, we can interpret the mapping $\lambda\mapsto D(\lambda X)$, where $\lambda>0$, as controlling simultaneously the size and the deviation of the position $X$. It appears only natural, then, to stipulate that a measure of `non-constancy' is positive homogeneous. This requirement is reinforced by the consideration that most of the prominent deviation measures found in the literature are indeed positive homogeneous --- besides, many relevant deviation measures that are not so, such as the variance and the entropic deviation, are only one transformation away from positive homogeneity (for instance, the standard variation in relation to the variance, etc.). See \citet{follmer11} for details on the positive homogeneous approximation of the entropic deviation. In summary, positive homogeneity of $D$ should translate into the following two properties for the corresponding acceptance set: in case the position $X$ does not lie in $\Acc{k}{D}$, we should be able to shrink the position until it ``fits'' in the set. Reciprocally, if $D(X)\le k$, then we should be able to enlarge the position up to a limit where it still lies in the set. This is exactly the idea that the \pl\ in \cref{minkowski-functional-def} describes. Additionally, under positive homogeneity, acceptance sets of the form $\Acc{k}{D}$ generated by a deviation measure $D$ at a certain level $k$ admit a compelling financial interpretation: namely, that $k$ represents an agent's coefficient of aversion with respect to $D$. Also, $k$ can be chosen to be some benchmark level, say $k = D(I)$ where $I\in\X$ is a relevant index. Obviously, an agent with greater $k$ has higher compliance regarding exposure to dispersion, so that, in order to compare positions of agents with varying degrees of aversion, we must bring the deviation measure to the same level for all market participants. This is so, even if the distinct agents agree about which deviation measure should be used, in which case positive homogeneity allows us to normalize each set of the form $\Acc{k}{D}$ by the factor $1/k$, yielding the identity $\f(X) =k\cdot \inf \left\{m >0 \colon\,{D(X)} \le mk \right\} $ with $A = \Acc{1}{D}$. Last but not least, it is reasonable to assume (and we do so throughout the text) that it is possible to invest the excess capital resulting from shrinkage (similarly, to borrow the demanding capital for the enlargement) into a constant risk-free asset, i.e., to require that acceptance sets be stable with respect to translation by a constant. In other words, adding a constant to a given position has no effect on whether the latter is acceptable or not. This property is true, in particular, whenever $A$ is generated by a deviation measure (i.e., $A = \Acc{k}{D}$), in which case, owing to translation insensitivity, allocation of capital in a risk-free manner leads to no change in the deviation of the position.

The idea of studying deviation measures through the lens of Minkowski gauges is not entirely new. \cite{pflug07} previously explored this terrain. However, the authors restrict attention to functionals $\f$ implied by sets of the form $A = \{X\in\X \colon\, \mathbb{E}(h \circ X)\leq h(1)\} $ for a convex, symmetric, non-negative real function $h$ with $h(0) = 0$ and $0<h(x)<\infty$ for $x\neq0$, thus establishing a relation between financial risk and Orlicz norms. In particular, if $h$ is invertible on $[0,+\infty)$, then the set $A$ is a sub-level set of the form $\Acc{1}{f}$, with the functional $f$ constrained to be of the form $f(X) = h^{-1}\big(\E(h\circ X)\big)$, in particular, they do not make the connection of deviation measures with acceptance sets, which illustrates once again that we are approaching the subject with greater generality. In any event, the authors propose deviations of the form $\f (X - \E X)$ and $\f ((X - \E X)^-)$, and explore to exhaustion the different representations of this kind of functional. A homologous approach was studied in \cite{bellini18}, who consider \emph{return risk measures} $\tilde{\rho}$, which are analogous to monetary risk measures but applied to the return of a position, not its profit/loss. A return risk measure is a functional $\tilde{\rho}$ defined on the cone of strictly positive returns $\{X \in \Li(\Omega, \mathfrak{F},\Pro) \colon\, X>0\}$ which maps into the half line of strictly positive real numbers. Such an $\tilde{\rho}$ is also positive homogeneous, satisfies $\tilde{\rho} (1) = 1$, and stays in a one-to-one correspondence with a monetary risk measure $\rho$ via the relation $\tilde{\rho}(X) = \exp (\rho (\log (X)))$. Indeed, given a suitable acceptance set $A = \Acc{1}{\tilde{\rho}}$ the return risk measure can be precisely recovered through the \pl\ of $A$, i.e., $\tilde{\rho} = \D_{\Acc{1}{\tilde{\rho}}}$.

The remainder of this paper is structured as follows: section \ref{sec preli} introduces our notation and framework, and also provides the underlying financial intuition backing set and functional properties that shall be used throughout this paper. In Section \ref{sec:deviations} we explore the \pl\ as a deviation measure, developing the role of specific properties for the set and its impact on the properties for the implied functional. In section \ref{accept} we develop the idea of an acceptance set generated by a deviation measure by exploring the reverse implications from section \ref{sec:deviations}. The \cref{apen A} contains some results regarding Minkowski gauge as an abstract functional and some auxiliary results. \Cref{apen figs} houses some figures to help in developing the intuition behind the set properties and or functional.

\section{Preliminaries and some set properties}\label{sec preli}
The notion of an \emph{acceptance set} is a cornerstone in defining our \pl, the idea being that such set determines the ``range'' of financial positions whose risk is deemed acceptable. \citet{artzner99} were the first to propose the concept, after which it was deepened, among others, by \citet{delbaen02}, \citet{frittelli06}, and \citet{artzner09}. In this section we wrap up the necessary terminology which, although not entirely new, is somewhat scattered throughout the literature. We also provide some compelling financial interpretation behind many concepts familiar to the convex analyst, showing that purely mathematical properties (for example, star-shapedness) can be given an intuitive meaning when seen as attributes of a given acceptance set. The reader may skip straight to \cref{sec:deviations} if she is too eager to see some action, and come back here for the definitions as needed.

In all that follows, $(\Omega,\mathfrak{F}, \Pro)$ is a fixed probability space. Every equality and inequality involving random variables is to be understood as holding $\Pro$-almost surely.\footnote{Some care is needed, however, in the face of a relation of the type ``$X\in B$'', as this could mean that $\Pro(X\in B) = 1$ for $B\subseteq\R$ but have a completely different meaning when $B\subseteq\X$.} As usual, we write, for $p\in(0,\infty)$, $L^p \equiv L^p(\Omega,\mathfrak{F},\Pro) \coloneqq$ ``the set of all ($\Pro$-equivalence classes of) random variables $X$ such that $\E|X|^p < \infty$'', whereas $L^0\equiv L^0(\Omega,\mathfrak{F},\Pro) \coloneqq$ ``the set of all ($\Pro$-equivalence classes) of random variables on $(\Omega,\mathfrak{F},\Pro)$'', and $L^\infty \equiv L^\infty(\Omega,\mathfrak{F},\Pro)\coloneqq$ ``the set of all ($\Pro$-equivalence classes of) random variables $X$ which are $\Pro$-essentially bounded''.
We work with a Hausdorff topological vector space $\X$, and assume beforehand that the inclusions $L^0\supseteq\X\supseteq L^\infty$ hold.\footnote{{These inclusions are assumed to hold algebraically --- no \emph{a priori} assumption is made on the relation between the topologies involved.}} The generic elements of $\X$ are denoted by $X$, $Y$, $Z$, etc., and are to be interpreted as the random result of a \emph{financial position}, which we assume throughout to be perfectly liquid and discounted by a risk-free rate. $\X' $ denotes the topological dual of $\X$, and we shall write $\langle X,X'\rangle \coloneqq X'(X)$ whenever $X\in\X$ and $X'\in\X' $; notice that this notation gives $\langle X, Y\rangle = \E XY$ if $X\in L^p$ and $Y\in L^q$, with $1\leq p<\infty$ and $p^{-1}+q^{-1} = 1$, via the identification $L^q\equiv (L^p)'$. {Furthermore, we write $\langle \X , \X' \rangle = \X\times\X'$, and call this construct the \defin{dual pair}. With this notation and terminology, the mapping $(X,X')\mapsto\langle X, X'\rangle$ gives a bilinear functional defined on the dual pair, one that separates points of both $\X$ and $\X'$}. The positive and negative parts of an element $X\in\X$ are denoted by $X^+ \coloneqq \max(X,0)$ and $X^- \coloneqq \min(-X,0)$, respectively. We define the cone $\X_+$ of non-negative positions as $\X_+\coloneqq \{X\in\X\colon\, X\geq0\}$ (this is the range of $X\mapsto X^+$), and similarly $\X_- \coloneqq \{X\in\X\colon\,X\leq0\}$.
With a slight abuse of notation, we consider the inclusion $\R\subseteq \X$ by identifying each $x\in\R$ with the equivalence class of random variables equal to $x$ almost surely. A pair of random variables is said to be \defin{comonotone} if the inequality
\[
(X(\omega) - X(\omega')) (Y(\omega)-Y(\omega') ) \geq 0,\qquad \omega,\omega'\in \Omega
\]
holds $\Pro\otimes\Pro$-almost surely. As usual, $F_X$ represents the cumulative distribution function of a random variable $X$, while $F_X^{-1}$ denotes its left quantile function, that is to say, $F_X^{-1}(\alpha) \coloneqq \inf\{q\in\R\colon\, F_X(q)\geq\alpha\}$. We write $X =_d Y$ whenever $X$ and $Y$ are equal in distribution, a fact which we also express by writing $Y\in\mathcal{L}_X$ (and this already defines $\mathcal{L}_X$ implicitly). As mentioned, we denote the property of $X$ being almost surely greater than $Y$ by $X \geq Y$, while for a generic partial order $\succeq$ we write $ X \succeq Y$, also adopting the obvious convention that the notation $ X \preceq Y$ means precisely that $ X \succeq Y$. If not clear from context, we shall mention explicitly the partial order under consideration. We say that \defin{$X$ is greater than $Y$ in the dispersive order of distributions}, written $Y \preceq_{\mathfrak{D}} X$, if the inequality $F_X^{-1} (u) - F_X^{-1} (v) \geq F_Y^{-1}(u) - F_Y^{-1}(v)$ holds for every $0 < v < u < 1$. In all that follows, $\R_+$ denotes the set $[0,+\infty)$, whereas $\R_+^* \coloneqq (0,+\infty)$. 

{Given $A,B\subseteq\X$ we define the set $ \A + \B$ by saying that $Z\in A+B$ if and only if $Z = X+Y$ for some $X\in A$ and some $Y\in B$. Similarly, for a $\Lambda\subseteq\R$, we write $Z\in\Lambda \A$ if and only if $Z =\lambda X$ for some $\lambda\in\Lambda$ and some $X\in A$. For simplicity, we write $\lambda A \coloneqq \{\lambda\} A$ and $\Lambda X \coloneqq \Lambda \{X\}$ when one of the involved sets is a singleton; in particular, we define the ray of $X\in\X$ as $R_X \coloneqq \R_+^* X$. In the same manner, $X + A\coloneqq \{X\} + A$, etc.
We also denote by $\bd(A)$, $\interior(A)$, $\closure(A)$, $\conv(A)$, $\clconv(A)$, $\cone(A)$, $\clcone(A)$, and $A^\complement$ respectively the boundary, interior, closure, convex hull, closed convex hull, conic hull, closed conic hull and the complement of $A$. Any $A\subseteq\X$ is called an \defin{acceptance set}, and we say that a given position $X$ is \defin{acceptable} (w.r.t.\ $A$) if and only if is an element of $A$.}

We now focus on properties for sets that are considered alongside the text. As said above, we make an effort to clarify the financial intuition behind each of these attributes. Since not every property appearing in our axiom scheme is fundamental in functional and convex analysis --- and thus it is likely that some of these attributes are unknown to the reader ---, we shall resort to figures as a means to illustrate them and help to develop the intuition. In these figures, we are considering $\Omega$ as the \emph{binary market}, i.e., $\Omega = \{0,1\} $; in this setting, one can take $\X = L^0 \equiv \R^2$, where the latter equivalence is given via the identification of a random variable $X$ with the ordered pair $\big(X(0), X(1)\big)$ in the Cartesian plane. Importantly, notice that in this context the inclusion $\R\subseteq \R^2$ corresponds to the diagonal $\{(u,v)\colon\, v=u,\,u\in\R\}$, which may be different from what the reader has in mind at first thought.

\begin{defi}
Let $A\subseteq\X$ and $\{A(k) \colon\,k\in\R\} \subseteq 2^\X $. We say that
\begin{enumerate}[label = (\roman*)]

\item ({\sc Law invariance}) $\A$ is \defin{law invariant} if $X \in \A$ and $X =_d Y$ implies $Y \in \A$.

This means that a financial position having the same distribution as a given, acceptable position is also acceptable; {that is, when deciding whether a position is to be deemed acceptable, we only care about its statistical properties}.

\item ({\sc Monotonicity}) $A$ is \defin{monotone} with respect to a given partial order~$\preceq$ if the conditions $X \in \A$ and $ X \preceq Y $ imply $Y \in \A$. $A$ is said to be \defin{anti-monotone} (w.r.t~$\preceq$) if the conditions $Y \in \A$ and $ X \preceq Y $ imply $X \in \A$. For convenience, we say that $A$ is \defin{$\preceq$-monotone} whenever $A$ is monotone with respect to $\preceq$, and similarly for anti-monotonicity.

{Under monotonicity, a position is deemed acceptable whenever a ``worse'' (smaller) one is also acceptable (from a financial perspective, this is not very interesting). Anti-monotonicity, on the other hand, captures the notion that being ``bigger'' according to some partial order is actually worse, e.g.,\ the dispersive order of distribution. Under anti-monotonicity, then, a position is regarded as acceptable whenever a ``better'' position is also acceptable. Note that if $A$ is monotone then $A^\complement$ is anti-monotone: indeed, letting $A$ be monotone and $X \preceq Y$, then $X \in A$ implies that $Y \in A$, which is equivalent to say that $ Y \notin A$ implies that $X \notin A$, thus yielding anti-monotonicity of $A^\complement$.}

\item ({\sc Conicity}) $A$ is a \defin{cone with vertex at the origin}, or simply a \defin{cone}, if $\lambda X\in A$ for every $\lambda\ge0$ and every $X\in A$. $A$ is said to be a \defin{cone with vertex at $V\in\X$} if $A$ is of the form $A = V+C$ for some cone $C$. A cone with vertex at $V$ is \defin{degenerate} if it is a singleton; otherwise, it is said to be a \defin{proper cone with vertex at $V$}.

Conicity means that if a position is acceptable, then every non-negative multiple of the position is deemed acceptable as well. This is a reasonable assumption when we are concerned with losses, but not so much for dispersion, as it allows scaling any acceptable position \emph{up} in an unbounded fashion.

\item ({\sc Radial boundedness}) $ A $ is \defin{radially bounded} if, for every non-zero $X \in A$, there is some $ \delta_X \in (0 , \infty)$, such that $ \delta X \notin A$ whenever $\delta \in [\delta_X , \infty)$. The set $ A $ is said to be \defin{radially bounded at non-constants} if $ A \backslash \R$ is radially bounded.

Radial boundedness is, in a sense, the opposite of conicity: it says that there is always a bound on how much it is possible to scale up a position while keeping it acceptable. It means precisely that $A$ contains no cone (except for the trivial cone $\{0\} $) --- see \Cref{fig radially bounded} for an example. As constants have no dispersion, financially it makes sense to always consider them acceptable; that is to say, when we are mainly concerned with positions that are acceptable with respect to their dispersion, it is fruitful to limit the scaling up of all positions except for constants. In this case we should require that $A$ be radially bounded at non-constants. \Cref{fig stable scalar addition} shows a set which is radially bounded at non-constants but it is not radially bounded.

\item ({\sc Stability under scalar addition}) $ A $ is \defin{stable under scalar addition} if $A + \R = A$, that is, if $ X + c \in A$, for all $X \in A$ and $c \in \R$.

In our framework, as scalar addition does not affect the dispersion of a financial position, it is a {reasonable} property to be imposed on acceptance sets --- the set $A$ in \Cref{fig stable scalar addition} is stable under scalar addition, whereas the one in \Cref{fig radially bounded} is not. 

It is important to note that stability under scalar addition is incompatible (from a financial perspective) with monotonicity (or anti-monotonicity) with respect to some partial orders of interest, such as the ``almost surely $\geq$'' order.
To illustrate, assume $A$ is $\leq$-monotone, stable under scalar addition and that $0\in A$. Then $\Li \subseteq A$: indeed, since $0\in A$, stability under scalar addition immediately entails $\R \subseteq A$. Then, for any $Y \in \Li$ it follows that $Y \geq\essinf Y \in \R \subseteq A $, so monotonicity gives us $Y \in A$. Clearly, such an $A$ is way too large to be of any practical interest from a financial perspective. Also, stability under scalar addition is clearly incompatible with radial boundedness, as a non-empty acceptance set that respects stability under scalar addition contains at least the whole real line, and hence it cannot be radially bounded. However, a set which is radially bounded at non-constants, such as the one in \Cref{fig stable scalar addition}, undoubtedly can accommodate stability under scalar addition.

\item ({\sc Absorbency}) $ A $ is \defin{absorbing} if, for every $X \in \X$, there is some $\delta_X > 0$ such that $[0, \delta_X]X\subseteq A $, that is, such that $\lambda X\in A$ whenever $0\leq \lambda\leq \delta_X$.

$A$ being absorbing means that, for any random variable $X\in \X$ (not necessarily in $A$), the line segment joining $0$ to a suitable rescaling of $X$ lies entirely in $A$. Absorbing sets are of interest in part because any positive homogeneous function is completely determined by its values on any absorbing set. Furthermore, when $A$ is absorbing, it is possible to shrink any position until it ``fits'' in the set, and such that any further shrinkage of the position will keep it inside the set. In other words, any position may be scaled down to a point where it becomes acceptable. Importantly, in a topological vector space, every neighborhood of zero is an absorbing set.
\Cref{fig absorbing set} shows an example of an absorbing set.

\item ({\sc Convexity)} $ A $ is \defin{convex} if $ \lambda X + (1- \lambda Y) \in A$, for every pair $X, Y \in A $ and every $\lambda \in [0,1] $.

{Convexity is a fundamental property in the theory of vector spaces}. In our context, it is closely related to the concept of diversification, in the following sense: if an acceptance set $A$ is convex, then one cannot obtain an unacceptable position via a convex combination of acceptable positions, i.e.,\ we cannot get worse off when we diversify. Analogously, if the complement of an acceptance set $A$ is convex, then we cannot get better off by taking convex combinations of non-acceptable positions.

\item ({\sc Star-shapedness}) $ A $ is \defin{star-shaped} if $\lambda X\in A$, for every $X \in A$ and $\lambda \in [0,1]$. $ A $ is said to be \defin{costar-shaped} if $ A^\complement $ is star-shaped.

{$A$ being star-shaped means that the line segment joining $0$ to $X$ lies entirely in $A$, for every $X$ \emph{already lying in $A$} (thus, star-shapedness does not imply absorbency).} For a star-shaped set $A$, given any $X\in \X$, there exists some non-negative number $\lambda_X$ (possibly with $\lambda_X=\infty$) such that that $\R_+ X\cap A \supseteq (0,\lambda_X)X$ and $\R_+ X\cap A^\complement \supseteq (\lambda_X,\infty)X$; note that if $A$ is absorbing then we can take $\lambda_X >0$, and if $A$ is radially bounded then we can take $\lambda_X < \infty$. For sets containing zero, star-shapedness is a slightly weaker requirement than convexity: if $0\in A$ and $A$ is convex, then $A$ is star-shaped. \Cref{fig star-shaped} displays a star-shaped set which is not absorbing nor convex, while \Cref{fig absorbing set} shows a set that is not star-shaped, although absorbing. {Notice that $A\neq\varnothing$ being costar-shaped implies $\lambda X \in A$, for every $X \in A$ and $\lambda \in (1,\infty)$.}

We let $\sta(A)$ denote the \textbf{star-shaped hull} of $A$, which is defined by the condition that $Z\in\sta(A)$ if and only if $Z = \lambda X$ for some $\lambda\in[0,1]$ and some $X\in A$ (that is, $\sta(A) = [0,1]A$ in our preceding notation). It is clear that $\sta(A)$ is the smallest star-shaped set that contains $A$. Also, as an arbitrary intersection of star-shaped sets is still star-shaped, we see that $\sta(A)$ is equal to the intersection of all star-shaped sets that contain $A$. 

{Star-shapedness captures the financial notion that any scaled down version of an acceptable position should also be deemed acceptable}. This is clearly a desirable property, as it intuitively means that if an agent accepts to invest a certain amount in a stock, then she also finds it acceptable to invest a lesser amount in the same stock.

\item ({\sc Strong star-shapedness}) $A$ is \defin{strongly star-shaped} if $A$ is star-shaped and, for each $X\in\X$, the ray $R_X\equiv (0,\infty)X$ intersects the boundary of $A$ at most once, i.e.,\ the set $R_X \cap \bd\, A$ is either empty or a singleton. For a similar concept, see \cite{rubinov04}. \Cref{fig:strong-star} provides an example of a strongly star-shaped set having the origin as a boundary point. This is a technical concept.

%

%

\end{enumerate}
\end{defi}

Before moving on to study the \pl in depth, we briefly turn our focus to relevant properties --- which regard functionals in general, not only the \pl\ --- that are considered alongside the text.

\begin{defi}
Let $f\colon\X \rightarrow \R \cup \{\infty \} $ be an arbitrary, extended real-valued functional on $\X$. A \defin{sub-level set} of a functional $f$ (defined on $\X$) at level $k\in\R$ is denoted by $\Acc{k}{f} \coloneqq \{X \in \X \colon\, f (X) \leq k \} $. Moreover, we say that 
\begin{enumerate}[label = (\roman*)]

\item\label{no_negativity} ({\sc Non-negativity}): $f$ is \defin{non-negative} if $f (X) > 0$ for any non-constant $X$ and $f(X) = 0 $ for any constant $X$.

If $f$ is a deviation measure, non-negativity tells us that that the deviation can only assume strictly positive values, except when evaluated at constants --- which have no deviation.

\item\label{trans_inva} ({\sc Translation insensitivity}) $f$ is \defin{translation insensitive} if $f (X + c) = f (X)$ for any $X\in\X$ and $c \in \R$.

Whenever $f$ is a deviation measure, translation insensitivity ensures that the deviation does not change if a constant amount is added to a given position.

\item ({\sc Monotonicity}) $f$ is \defin{monotone} (w.r.t.\ a given partial order $\preceq$) whenever $ Y \preceq X$ implies $f(Y) \leq f(X)$. If $-f$ is monotone, then $f$ is said to be \defin{anti-monotone} (w.r.t.\ $\preceq$). For simplicity, whenever the partial order is not explicitly mentioned, we are assuming that it is the ``almost surely $\leq$'' partial order.

From a financial perspective, imposing anti-monotonicity on a risk-functional $f$ corresponds to the requirement that, if a position yields better results than another in every possible \emph{state of the world}, then the former necessarily has lower risk than the latter.

\item\label{posi_homo} ({\sc Positive homogeneity}) $f$ is \defin{positive homogeneous} if $f(\lambda X) = \lambda f(X)$ for all $X\in\X$ and all $\lambda \geq 0$.

For a risk measure $f$, positive homogeneity has the financial interpretation that the risk of a position increases proportionally to its magnitude, capturing thus a type of homogeneous sensibility to expansion/shrinkage.

\item\label{convex} ({\sc Convexity}) $f$ is \defin{convex} if $f(\lambda X + (1- \lambda)Y) \leq \lambda f(X) + (1- \lambda ) f(Y)$, for every pair $X,Y\in\X$ and all $\lambda \in [0,1]$.

From the financial viewpoint, convexity is a property which ensures that diversification reduces risk. A mapping $f\colon\X\to\R\cup\{+\infty\}$ with $f(0)=0$ is said to be a \defin{sub-linear} functional whenever it satisfies any two\footnote{It is well known that, for such an $f$, any two of these three axioms imply the remaining one --- see \cite{aliprantis06}.} of the following properties:
\begin{enumerate*}
\item positive homogeneity;
\item convexity;
\item sub-additivity (the latter means that $f(X+Y)\leq f(X)+f(Y)$ for any $X,Y\in\X$).
\end{enumerate*}

\item\label{lo_ra_do} ({\sc Lower range dominance}) $f$ is \defin{lower-range dominated} if $\operatorname{domain}(f) \subseteq L^1$ and $f(X) \leq \E X - \mathrm{ess}\inf X =: \LR(X)$ for all $X$.

Lower range dominance is an essential property, as it reveals the interplay between coherent risk measures and generalized deviation measures --- see \cite{rockafellar06} for instance.

\item\label{law_inv} ({\sc Law invariance}) $f$ is \defin{law invariant} if $F_X = F_Y$ implies $f(Y) = f(X)$.

If $f$ is a risk functional, law invariance encapsulates the notion that, in appraising the risk of a position, we should only care about its statistical properties --- as these properties embody the uncertainty (w.r.t.\ the market outcome) faced by a given agent.
Law invariance is also important in empirical implementations, as it allows the theoretical risk measure to be estimated from historical data.

\item\label{lo_sem} ({\sc Lower-semicontinuity}) $f$ is \defin{lower-semicontinuous} if the set $ \Acc{k}{f}$ is closed, for all real $k$.

In the case when $\X$ is a metric space, lower-semicontinuity is equivalent to the following property: given any convergent sequence $ \{X_n\} \subseteq \X$, it holds that $f(\lim X_n) \leq \liminf f(X_n)$.

The \defin{convex envelop} of a mapping $f\colon\X\to\R$
is defined to be the extended real valued function $\conv f$ given by $\conv f(X)\coloneqq \sup_g g(X),\:X\in \X$,
where the supremum runs through all afine, continuous $g\colon \X\to\R$ satisfying $g\leq f$. Note that $\conv f$ is convex and lower-semicontinuous.

\item ({\sc Upper-semicontinuity}) $f$ is \defin{upper-semicontinuous} if the set $ \{X \in \X \colon\, f(X) \geq k\} $ is closed for all real $k$.

In the case when $\X$ is a metric space, upper-semicontinuity is equivalent to the following property: given any convergent sequence $ \{X_n\} \subseteq \X$, it holds that $f(\lim X_n) \geq \limsup f(X_n)$. Note that a functional $f$ is continuous if and only if it is both upper- and lower-semicontinuous.

%

\item ({\sc comonotone additivity}) $f$ is \defin{comonotone additive} if $f(X +Y) = f(X) + f(Y)$ for every pair $X,Y\in\X$ such that $X$ and $Y$ are comonotone.

Comonotone additivity implies that a comonotone pair does not yield a gain, nor a loss, in diversification. This property sums up the notion that, for such a pair, an agent should be indifferent about how the two positions are kept, whether they are held in the same portfolio or separately. 

\end{enumerate}
\end{defi}

The \pl\ introduced in \cref{minkowski-functional-def} is the main tool used in this paper. Below, we recall its definition. We also introduce the cogauge, which is a straightly related dual concept. It is important to have in mind that, in the field of convex analysis, the \pl\ is known as the Minkowski gauge, Minkowski functional or, simply, ``the gauge''.

\begin{defi}\label{induce deviation} \label{SID}
Let $ A \subseteq \X $. The \defin{\pl\ of $A$} is the functional $\f \colon\X \rightarrow \ra$ defined, for $X\in\X$, by
\begin{align}\label{eq:fA}
\f (X) \coloneqq \inf \left\{m \in \R_+^*\colon\, m^{-1}{X} \in A \right\},
\end{align}
where $\inf \varnothing = \infty $.
The \defin{cogauge of $A$} is the functional $ \cog_{A}\colon\X \rightarrow \ra$ defined, for $X\in \X$, by
\begin{align}
\cog_{A} (X) \coloneqq \sup \left\{m \in \R_+^*\colon\, m^{-1}{X} \in A \right\},
\end{align}
where $\sup \varnothing = 0 $.\footnote{In the present setting, the convention $\sup \varnothing = 0$ is a sensible one, as we are taking the supremum over some subset of $(0,\infty)$.} 
\end{defi}

A financial interpretation is that the \pl\ answers the following question: given a set $A$ of acceptable positions, how much should we shrink (or ``gauge'') a certain position $X$ for it to become acceptable? The value $\f(X)$ is the required amount of shrinkage. This provides a limit to how leveraged can the position be. Notice that the following inclusions always hold:
\[
\{X \in \X\colon\, \f (X) < 1 \} \subseteq \A \subseteq \Acc{1}{\f}.
\]
The cogauge, in turn, is a useful concept that is closely linked the \pl: if we take a set $A$ comprised of non-acceptable positions, then the cogauge gives the most that we can shrink a position while keeping it non-acceptable. Importantly, for a star-shaped set $A$, gauge and cogauge are linked by the identity $\f = \cog_{A^\complement} $; see \Cref{coro cogauge}. For more details on cogauges, we refer the reader to \cite{rubinov86, rubinov00, zaffaroni08, zaffaroni13} and references therein.

\medskip

\section{Deviation Measures}\label{sec:deviations}
In this section we explore the functional $\f$ as measuring the amount of shrinkage on a financial position required to accommodate it in the base set $A$ of acceptable positions. Our focus here is the passage ``from the set to the measure''. Specifically, we present results that elucidate how attributes of the underlying set $A$ translate into mathematical and financial properties of the implied measure $\f$.

Before proceeding, let us introduce some further terminology. A non-negative and translation insensitive functional $D\colon \X \rightarrow \ra$ is called a \defin{deviation measure}; if, moreover, $D$ is convex, then it is said to be a \defin{convex deviation measure}.
{Non-negativity and translation insensitivity are taken as axioms in defining deviation measures because they capture, respectively, the intuitions that 
\begin{enumerate*}[label = (\roman*)]
\item a position whose payoff does not depend on the market outcome should display zero dispersion, and;
\item adding a fixed amount of cash to a given position should not alter its ``degree of non-constancy''.
\end{enumerate*}}
A positive homogeneous, convex deviation measure is said to be a \defin{generalized deviation measure}. Notice that the sub-level set $\Acc{k}{D}$ of a deviation measure $D$, for $k\ge0$, is never empty --- {indeed, it contains at least the set of all constant positions}. Of course, we say that $D$ is \emph{law invariant}, \emph{$\preceq$-monotone}, \emph{comonotone additive}, \emph{lower-range dominated}, etc., if it fulfills the corresponding properties as defined in the preceding section.

While in the pursuit of generality, we let any set $A$ be an acceptance set, a cornerstone property is star-shapedness, as we follow the rationale that shrinking an acceptable position yields a position which is still acceptable, and that $0$ (i.e., holding nothing) is also acceptable. The following lemma shows that demanding star shapedness is of no real consequence.

\begin{lemma}\label{star shaped hull} If $0 \in A$, then $\f = \D_{\sta(A)}$.
\end{lemma}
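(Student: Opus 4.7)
The plan is to prove the equality by establishing both inequalities $\f \le \D_{\sta(A)}$ and $\D_{\sta(A)}\le \f$ pointwise on $\X$.

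First, the easy direction: since $A \subseteq \sta(A)$ (take $\lambda=1$ in the definition $\sta(A)=[0,1]A$), the set $\{m>0\colon m^{-1}X\in A\}$ is contained in $\{m>0\colon m^{-1}X\in\sta(A)\}$, so infimising yields $\D_{\sta(A)}(X)\le \f(X)$ for every $X\in\X$.

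For the reverse inequality, I would take any $m>0$ with $m^{-1}X\in\sta(A)$ and show that $\f(X)\le m$. By definition of $\sta(A)$, we can write $m^{-1}X=\lambda Y$ for some $\lambda\in[0,1]$ and $Y\in A$. If $\lambda\in(0,1]$, then $Y=(m\lambda)^{-1}X\in A$, so $m\lambda$ is admissible in the infimum defining $\f(X)$, giving $\f(X)\le m\lambda\le m$. If $\lambda=0$, then $X=0$, and the hypothesis $0\in A$ makes every $m>0$ admissible for $\f(0)$, so $\f(0)=0$; in particular $\f(X)\le m$ trivially. Taking the infimum over all such $m$ yields $\f(X)\le \D_{\sta(A)}(X)$.

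There is no real obstacle here; the only subtlety is handling the boundary case $\lambda=0$ cleanly, which is exactly where the hypothesis $0\in A$ enters (so that constants, in particular $0$, do not pose any issue in $\f$). The two inequalities together give $\f=\D_{\sta(A)}$, completing the proof.
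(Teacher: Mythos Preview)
Your proof is correct and follows essentially the same approach as the paper's: both arguments use the inclusion $A\subseteq\sta(A)$ for one inequality, and for the reverse write $m^{-1}X=\lambda Y$ with $\lambda\in[0,1]$, $Y\in A$, then observe $(m\lambda)^{-1}X\in A$ when $\lambda\neq 0$ to get $\f(X)\le m\lambda\le m$. The only cosmetic difference is that the paper isolates the case $X=0$ at the outset (noting $\f(0)=\D_{\sta(A)}(0)$ directly), whereas you reach it through the $\lambda=0$ branch; the substance is identical.
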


\begin{proof}
Clearly $\f(0)=\D_{\sta(A)}(0)$. Fix, then, $0\ne X\in\X$, and let $T\colon \R_+^* \to \X$ be defined through $T(m) = X/m$ for $m>0$. Write \[\tilde{m} = \D_{\sta(A)}(X) = \inf T^{-1}\big(\sta(A)\big)\quad \text{and}\quad \hat{m} = \f(X) = \inf T^{-1}(A).\] Clearly $\tilde{m}\le\hat{m}$ as $T^{-1}\big(\sta(A)\big)\supseteq T^{-1}(A)$. It remains to show that $\hat{m}\le\tilde{m}$, or, which is to say the same, that $\hat{m}$ is a lower bound for the set $T^{-1}\big(\sta(A)\big)$. Let, therefore, $m\in T^{-1}\big(\sta(A)\big)$, which means that $X/m\in \sta(A)$ and which, by definition, occurs if and only if $X/m = yZ$ for some $y\in[0,1]$ and some $Z\in A$. Then, as $y\ne0$ since $X\ne0$, we have $X/(ym)\in A$ and it follows that $\hat{m}\le ym \le m$.
\end{proof}

We are now interested in controlling the variability of a financial position. For such, we have a given acceptance set $A$ which contains only positions whose ``deviations'' are deemed acceptable. Hence, there are some natural proprieties that $A$ should posses; arguably, the most fundamental property is that it should be insensitive to addition of a constant, i.e., $A$ should be stable under scalar addition. Another basal property that should be required is that any position that has positive risk (remember that here ``risk'' is exclusively associated with variability) should not be allowed to be arbitrarily expanded, as the dispersion ought to increase together with size. Positions that have no risk, in turn, are allowed to be expanded arbitrarily and, considering that we reckon only constants as riskless, we see that the attribute we desire is that $A$ be radially bounded at non-constants. The next proposition shows that whenever we appraise an acceptance set which is star-shaped, radially bounded at non-constants, and stable under scalar addition, its implied \pl\ is, not surprisingly, a deviation measure.

\begin{propo} \label{radially}
Let $\A \subseteq \X$ be star-shaped. Then the following holds
\begin{enumerate}[label = (\roman*)]
\item\label{radially item 1} If $A$ is radially bounded, then $\f (X) > 0$ for all $X \in \X \setminus \{0\} $ and $\D_{A \cup \R} $ is non-negative. Therefore, $\f$ is non-negative whenever $A$ is radially bounded at non-constants.
\item\label{radially item 2} If $A$ is stable under scalar addition, then
$\f$ is translation insensitive.
\end{enumerate}
In particular, if $A$ is star-shaped, radially bounded at non-constants and stable under scalar addition, then $\f$ is a deviation measure. 
\end{propo}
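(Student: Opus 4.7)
My plan is to handle item (ii) first via a direct computation, then attack item (i), whose substantive content is a contradiction argument powered by radial boundedness. The ``in particular'' clause will emerge by combining both items, together with the observation that, when $A\ne\varnothing$ is star-shaped, the origin lies in $A$, so imposing stability under scalar addition forces $\R\subseteq A$ and consequently $\f(c)=0$ for every constant $c$.

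For item (ii), I would simply unwrap definitions: stability means $A+\R = A$, so $(X+c)/m = X/m + c/m$ lies in $A$ iff $X/m$ does; taking infima over $m>0$ on each side gives $\f(X+c)=\f(X)$.

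Item (i) is the substantive part, and I expect the main obstacle to be teasing apart the behaviors of $\f$, $\D_{A\cup\R}$, and $\D_{A\setminus\R}$. The core positivity claim --- that $\f(X)>0$ for every nonzero $X$ when $A$ is radially bounded --- I would prove by contradiction: if $\f(X)=0$, pick a sequence $m_n\downarrow 0$ with $X/m_n\in A$, fix one such $m_1$ and set $Y\coloneqq X/m_1 \in A\setminus\{0\}$; then $(m_1/m_n)Y = X/m_n \in A$ with $m_1/m_n\to\infty$, contradicting radial boundedness at $Y$. Non-negativity of $\D_{A\cup\R}$ is then almost automatic: any constant $c$ satisfies $c/m\in\R\subseteq A\cup\R$ for all $m>0$, so $\D_{A\cup\R}(c)=0$; conversely, for non-constant $X$, the scaled $X/m$ remains non-constant and hence outside $\R$, so $\D_{A\cup\R}(X) = \f(X)>0$. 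For the ``therefore'' clause, I would apply the same contradiction argument to the radially bounded set $A\setminus\R$ to obtain $\D_{A\setminus\R}(X)>0$ for non-constant $X$; but this quantity equals $\f(X)$, because $X/m$ is always non-constant and thus belongs to $A$ iff it belongs to $A\setminus\R$.

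Combining everything for the final assertion: star-shapedness plus stability under scalar addition forces $\R\subseteq A$, so $\f(c)=0$ for all constants $c$; radial boundedness at non-constants combined with the previous step yields $\f(X)>0$ for non-constant $X$; and (ii) supplies translation insensitivity. The two axioms defining a deviation measure are thereby in place.
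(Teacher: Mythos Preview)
Your proposal is correct and follows essentially the same approach as the paper. The only differences are cosmetic: you phrase the positivity of $\f(X)$ for $X\neq 0$ as a sequential contradiction, whereas the paper invokes the definition of radial boundedness directly to bound the infimum away from zero; and for the non-negativity of $\D_{A\cup\R}$ you argue pointwise (observing that $X/m\notin\R$ for non-constant $X$), while the paper appeals to the identity $\D_{B\cup B'}=\min(\D_B,\D_{B'})$ together with $\D_\R(X)\in\{0,\infty\}$. Your explicit treatment of the ``Therefore'' clause via $A\setminus\R$ is in fact more detailed than the paper's, which leaves that step to the reader.
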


\begin{proof}
For the first item, notice that if $\A$ is radially bounded, then $\f (X) > 0$, for every non-zero $X \in \X$. Indeed, if $\A$ is radially bounded then --- by definition --- for each $X$ there is a $m_X > 0$ such that $m^{-1} X \notin \A$, for all $m < m_X$. Therefore, it holds that $\inf \{m \in \R_+^* \colon\, m^{-1}X \in \A \} >0$. Now, observe that, as $\R$ is a subspace of $\X$, one has $\D_\R (X) = 0$ for every $X \in \R$, whereas $\D_\R (X) = \infty$, for each (a.s.) non-constant $X$ (see \Cref{star}). Then, as it is easily seen that $\D_{B\cup B'} = \min\big(\D_B,\D_B'\big)$ for any non-empty sets $B$ and $B'$, we have $\D_{A \cup \R} (X) = \min (\f (X) , \D_\R(X)) = \f(X) > 0$, for every $X \notin \R$, whereas for $c \in \R$ we have that $\D_{A \cup \R} (c) = \min (\f (c) , \D_\R(c)) = \min (\f (c) , 0) = 0$.

For \Cref{radially item 2}, notice that star-shapedness together with stability under scalar addition yield $ \R \subseteq A$. Thus, we clearly have $\f(c) = 0$ for every $c \in \R$. It is also clear that for such a $c$ one has $X + c \in A$ if and only if $X \in A$. In particular the condition $(X+c)/m \in A$ is equivalent to $X/m \in A$, hence
\[
\f (X + c) = \inf \{m > 0 \colon\, (X+c)/m \in A\} = \inf \{m > 0 \colon\, m^{-1}{X}\in A\} = \f(X),
\]
for any $c\in\R$.
\end{proof}

\begin{remark}
It is possible to extend the notion of ``risklessness'' from only $\R$ to an arbitrary cone $B$, in which case radial boundedness at non-constants should be replaced by radial boundedness at $B^\complement$, whence $\f (X)$ would be greater than $0$ for any $X \notin B$, and $0$ for $X \in B$. Furthermore, stability under scalar addition should be replaced by stability under addition of members of $B$, yielding then $\f (X +b) = \f(X)$ for $b \in B$.
\end{remark}

\begin{remark}
\cite{farkas19} proposed a novel way to measure risk, the intrinsic risk measure, defined as $\mathfrak{D}_{A} (X) = \inf \{ \lambda \in [0,1] : (1-\lambda) X+ \lambda\frac{ \pi (X)}{\pi(S)} S \in A\}$, where $S$ is an eligible asset and $\pi(X)$ represents the price of $X$. If $A$ is sable under addition of multiples of $S$ we have, for all $X \notin A$ the relationship\footnote{Under the convention $\frac{1}{\infty} = 0$.} $1-\mathfrak{D}_A(X)= (\f(X))^{-1}$. This gives us that the \pl \; provides all information that $\mathfrak{D}_A$ can provide. However, the reciprocal is not true, as for any $ X \in A$  the intrinsic risk measure is stuck on $0$.
\end{remark}

We now present some results regarding the set $A+\R$, seen as the result of an operation $A\mapsto A+\R$ taken on some basis set $A$. It is particularly interesting because first, it coerces an arbitrary set to become stable under scalar addition and, secondly, it seamlessly harmonizes with the notion of \emph{measures of error} (see \Cref{quadrangle}). 

\begin{lemma}\label{A+R deviation}
Let $A\subseteq \X$ be non-empty. Then $A + \R$ is stable under scalar addition. Assuming further that $A$ is star-shaped, closed and radially bounded we have that $A+\R$ is radially bounded at non-constants.
\end{lemma}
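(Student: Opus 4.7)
My plan divides the statement into two claims. For stability of $A+\R$ under scalar addition, the proof is essentially formal: $(A+\R)+\R = A+(\R+\R) = A+\R$, since $\R$ is closed under addition. The real work lies in the second assertion.

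For the substantive claim---that under the additional hypotheses (star-shaped, closed, radially bounded) the set $A+\R$ is radially bounded at non-constants---I would argue by contradiction. Let $X \in (A+\R)\setminus\R$ and suppose there is $\delta_n \to \infty$ with $\delta_n X \in A+\R$. Write $\delta_n X = Y_n + c_n$ with $Y_n \in A$ and $c_n \in \R$, and set $t_n := c_n/\delta_n$, so that $X - t_n = Y_n/\delta_n$. Star-shapedness of $A$ from $0$ (which holds since $A$ is non-empty and star-shaped in the paper's sense) then gives $X - t_n \in A$ whenever $\delta_n \ge 1$. The entire argument hinges on controlling the sequence $(t_n)\subseteq \R$, so I would split into two cases according to whether $(t_n)$ has a bounded subsequence or not.

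In the bounded case, extract a subsequence with $t_n \to t^\ast$. For any fixed $\mu > 0$ and $n$ large enough that $\mu \le \delta_n$, star-shapedness applied to $Y_n \in A$ yields $\mu(X-t_n) = (\mu/\delta_n) Y_n \in A$. Since $\mu(X-t_n) \to \mu(X-t^\ast)$ in $\X$, closedness of $A$ gives $\mu(X-t^\ast) \in A$ for every $\mu > 0$; because $X$ is non-constant, $X-t^\ast \neq 0$, contradicting radial boundedness of $A$. In the unbounded case, pass to a subsequence with $t_n \to +\infty$ (the case $t_n \to -\infty$ is symmetric), fix any $k>0$, and set $\alpha_n := k/(\delta_n t_n)$, which lies in $(0,1]$ for $n$ large. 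Star-shapedness gives $\alpha_n Y_n \in A$, and by continuity of scalar multiplication $\alpha_n Y_n = (k/t_n)X - k \to -k$ in $\X$. Closedness then forces $-k \in A$ for every $k > 0$, which contradicts radial boundedness of $A$ applied to the non-zero point $-1 \in A$.

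The principal obstacle is the unbounded case: a naive rescaling of $Y_n$ by $1/\delta_n$ loses all control because $X - t_n$ itself diverges, so one must choose the cleverer factor $k/(\delta_n t_n)$ that annihilates the divergent component of $Y_n$ and deposits a non-trivial constant inside $A$. The bounded case is then a routine application of star-shapedness and closedness, once one thinks of scaling $Y_n$ downward rather than scaling $X - t_n$ upward.
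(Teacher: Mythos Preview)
Your proof is correct, and it is considerably more detailed than the paper's. The paper argues in two lines: by the auxiliary \Cref{star-shaped no cone is radially bounded}, the closed, star-shaped, radially bounded set $A$ contains no proper cone with vertex at any constant $x\in\R$; ``hence $A+\R$ contains no cones other than $\R$ and $\{0\}$'', which (since $A+\R$ is star-shaped) is the same as radial boundedness at non-constants. The second implication, however, is asserted rather than proved: knowing that $A$ contains no translated ray does not immediately tell you that $A+\R$ contains no ray $\{\lambda X:\lambda\ge0\}$, because the decomposition $\lambda X=a_\lambda+c_\lambda$ allows the constant $c_\lambda$ to vary with $\lambda$.

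Your argument is exactly what is needed to close that step. Your Case~1 (bounded $t_n=c_n/\delta_n$) is essentially the content of \Cref{star-shaped no cone is radially bounded} run backwards: you manufacture a fixed vertex $t^\ast$ and obtain the ray $\{\mu(X-t^\ast):\mu>0\}\subseteq A$. Your Case~2 (unbounded $t_n$) supplies the genuinely new ingredient the paper omits: when the translations escape to infinity, the clever rescaling $\alpha_n=k/(\delta_n t_n)$ deposits the half-line of negative constants inside $A$, again contradicting radial boundedness. So your route is a direct, self-contained contradiction that avoids the auxiliary lemma entirely; the paper's route is more conceptual (phrased in terms of cones) but, as written, leans on an unproved step that your case analysis in effect supplies.
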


\begin{proof}
The first claim is obvious. The second claim holds by \Cref{star-shaped no cone is radially bounded}, as in this case $A$ contains no proper cone with vertex at some $x \in \R$. Hence, $A + \R$ contains no cones other than $\R$ and $\{0\} $, that is, $A+\R$ is radially bounded at non-constants. 
\end{proof}

\begin{remark}
Given an arbitrary star-shaped, closed and radially bounded set we have that $\D_{A + \R}$ is a deviation measure. An (apparent) sensible choice for the acceptance set $A$ would be a sub-level set $\Acc{k}{\rho}$ corresponding to some pre-specified coherent risk measure $\rho$ and $k \in \R$. However, such a set is never radially bounded. Nevertheless, if we insist on taking $B \coloneqq \Acc{k}{\rho} + \R$ in order to force translation insensibility, then we would have that $B \equiv \{X \in \X \colon\, \rho < \infty\} $, which again is of no interest as it is clearly a cone, with $\D_B (X) = 0$ for $X \in B $ and $\D_B (X) = \infty$ otherwise. Said another way, in this case $\D_B$ is the characteristic function of $\rho$. 
\end{remark}

\begin{propo}\label{set oper-item4}
Let $A, B \subseteq \X$ and assume $B$ is a cone. Then, for each $X \in \X$, it holds that $\D_{A+ B} (X) = \inf_{Z \in B} \f (X-Z)$.
\end{propo}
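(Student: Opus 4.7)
The plan is to unwind the definition of $\D_{A+B}(X)$ and use the cone property of $B$ to perform a clean change of variables that separates the membership condition "$m^{-1}X \in A+B$" into an inner membership in $A$ parametrized by an element of $B$. No structural assumption on $A$ is required (neither star-shapedness nor convexity); only the cone property of $B$ and the positivity of the scaling parameter $m$ are used.

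First I would start from
\[
\D_{A+B}(X) = \inf\{m\in\R_+^*\colon m^{-1}X \in A+B\}
\]
and rewrite the membership condition: $m^{-1}X \in A+B$ if and only if there exist $Y \in A$ and $W \in B$ with $m^{-1}X = Y + W$, equivalently $Y = m^{-1}(X - mW)$. Since $m>0$ and $B$ is a cone, $mW \in B$ exactly when $W \in B$; setting $Z \coloneqq mW$ thus establishes that $m^{-1}X \in A+B$ if and only if there exists $Z \in B$ with $m^{-1}(X-Z) \in A$. This is the key step and the only place the cone property is exploited.

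Next I would reorganize the resulting infimum as a double infimum. Namely,
\[
\D_{A+B}(X) = \inf\bigl\{m\in\R_+^*\colon \exists Z\in B,\ m^{-1}(X-Z)\in A\bigr\} = \inf_{Z\in B}\inf\{m\in\R_+^*\colon m^{-1}(X-Z)\in A\},
\]
where the second equality is the routine identity $\inf \bigcup_\alpha S_\alpha = \inf_\alpha \inf S_\alpha$. Recognizing the inner infimum as $\f(X-Z)$ by \Cref{eq:fA} yields the conclusion $\D_{A+B}(X) = \inf_{Z\in B}\f(X-Z)$.

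The only place I foresee care is bookkeeping with the conventions $\inf\varnothing=\infty$ and $\inf_{Z\in\varnothing}=\infty$: if no $Z\in B$ makes $X-Z$ reachable by $A$ (for any scaling), both sides evaluate to $\infty$, and if $B=\varnothing$ then $A+B=\varnothing$, again yielding $\infty$ on both sides. These edge cases are automatic from the double-infimum formulation, so the argument really is just a definitional unpacking plus the cone-induced change of variables.
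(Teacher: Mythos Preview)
Your proof is correct and follows essentially the same route as the paper's: both arguments use the cone property of $B$ to absorb the scaling factor $m$ into the $B$-variable (your substitution $Z=mW$, the paper's $d=mb$), and then interchange the order of infima via $\inf\bigcup_\alpha S_\alpha = \inf_\alpha \inf S_\alpha$. If anything, your version is slightly more direct: the paper detours through the condition ``$\f(a)\le 1$'' and positive homogeneity of $\f$, whereas you stay with the raw membership $m^{-1}(X-Z)\in A$ until the final identification of the inner infimum as $\f(X-Z)$.
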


\begin{proof}
If $B = \{0\}$ there is nothing to show. If $B$ is a proper cone, let $m>0$. Then one has $m^{-1}X \in \A+ \B$ if and only if $m^{-1}{X} = a + b$ for some $a \in \A$ and some $b \in \B$, if and only if $m^{-1}X-b = a$, for some $b \in B$ and some $a\in \X$ such that $\f(a) \leq 1$, if and only if $\f \left(m^{-1}X -b \right) \leq 1$ for some $b\in B$. By positive homogeneity, the latter sentence is equivalent to the following: there exists a $b\in B$ such that $\f \left(X -mb \right)\leq m$. Additionally --- as $B$ is a cone --- if there is an element $b\in B$ that respects $\f (X -mb ) \leq m$. Then by letting $d = mb$ we see that there is an element $d\in B$ such that $\f (X - d ) \leq m$, and the reciprocal of the previous sentence is obviously also true: that is, it holds that $\f (X - mb)$ for some $b \in B$ if and only if $\f (X - d ) \leq m$ for some $d \in B $. In view of the above equivalences, by writing $M_b\coloneqq \{m\in\R_+^*\colon\, \f (X-b)\leq m\} $ and noticing that $\f (X-b) = \inf M_b$, we finally have that
\begin{align*}
\D_{A+B}(X) &= \inf \bigcup\nolimits_{b\in B} M_b\\
&= \inf\nolimits_{b\in B}\inf M_b
\end{align*}
as asserted.
\end{proof}

\begin{remark}\label{quadrangle}
\cite{rockafellar13} proposed \emph{measures of error} to quantify the ``non-zeroness'' of a random variable. {By definition, a functional $\varepsilon\colon \Lp \rightarrow \ra $ is called a \defin{measure of error} if it is lower-semicontinuous, sub-linear, positive homogeneous and satisfies
\begin{enumerate*}[label = (\roman*)]
\item $\varepsilon (X) = 0 $ if and only if $X=0$ almost surely; and
\item if $\lim \varepsilon (X_n) =0$ then $ \lim \E X_n = 0 $
\end{enumerate*}
.} By the authors' Quadrangle Theorem, {if $\varepsilon$ is a measure of error, then the functional $D$ defined, for $X\in\X$, by} $D (X) \coloneqq \min_{c \in \R}\, \varepsilon (X-c)$, is a convex deviation measure. Furthermore, such $D$ is a generalized deviation measure whenever the inequality $\varepsilon (X) \leq | \E X|$ holds for every $ X \leq 0 $. From this we can conclude that, given a functional $\varepsilon$ satisfying all those conditions, the identity $D(X) = \D_{(\Acc{1}{\varepsilon} + \R)} (X)$ holds. Indeed, if the minimum is attained, it holds that $\D_{(\Acc{1}{\varepsilon} + \R)} (X) = \inf_{c \in \R} \D_{\Acc{1}{\varepsilon}} (X - c) = \min_{c \in \R} \varepsilon (X - c) = D(X)$.

\end{remark}

\begin{remark}
In the context of \Cref{set oper-item4}, we have from \Cref{star}, \cref{lemma-item3}, that $\D_B(X)=0$ for any $X\in B$ since $B$ is a cone. Thus, $\D_{A+B} (X)=\inf_{Z\in B}
\{\f(X-Z)+\D_B(Z)\} = \inf_{Z\in \X} \{\f(X-Z)+\D_B(Z)\} $. The last equality holds because, for any $Z \notin B$, as $B$ is a cone, it follows by \Cref{star} that $\D_B(Z) = \infty$. This concept is closely related to inf-convolution and optimal risk sharing. Inf-convolution is a well-known operation for functionals in convex analysis --- for details of the use of inf-convolution in risk share we refer the reader to \cite{barrieu05}, \cite{jouini08} and \cite{righi20}. 
\end{remark}

We now proceed with our investigation of some of the more prominent set theoretical properties found in the literature. First, let us consider the \emph{principle of diversification}, which assets that any convex combination of acceptable positions should be acceptable as well. Obviously, this corresponds to the formal requirement that the acceptance set be convex. The next result provides a sufficient condition which ensures that the \pl\ is a generalized deviation measure. Note that, by construction \pl\ is always positive homogeneous.

\begin{propo}\label{generalized dev} 
Let $0 \in A\subseteq\X$. The following assertions hold
\begin{enumerate}[label = (\roman*)]

\item \label{propo2-item p.h} $\f$ is positive homogeneous.

\item\label{coro conv item1} If $A$ is convex, then $\f$ is sub-linear.

\end{enumerate}
In particular if $A$ is convex, radially bounded at non-constants, stable under scalar addition and contains the origin, then $\f$ is a generalized deviation measure.
\end{propo}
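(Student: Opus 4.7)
The plan is to address the two numbered items in turn and then assemble the final statement by invoking \Cref{radially}.

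For \Cref{propo2-item p.h}, I would proceed by direct computation. For $\lambda=0$, since $0\in A$ we have $0/m = 0 \in A$ for every $m>0$, so $\f(0)=\inf\R_+^*=0$, matching $0\cdot\f(X)$ under the convention $0\cdot\infty=0$. For $\lambda>0$, the map $m\mapsto m/\lambda$ is a bijection of $\R_+^*$ onto itself, and $m^{-1}(\lambda X)\in A$ if and only if $(m/\lambda)^{-1}X\in A$; changing variables $n=m/\lambda$ in the defining infimum gives $\f(\lambda X)=\inf\{\lambda n: n>0,\,n^{-1}X\in A\}=\lambda\f(X)$, with the case $\f(X)=\infty$ handled by noting that the set on the right is then empty, so the set on the left is empty too.

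For \Cref{coro conv item1}, by the footnote on sub-linearity it suffices (since \Cref{propo2-item p.h} already gives positive homogeneity) to verify either convexity or sub-additivity of $\f$. I would establish sub-additivity. Fix $X,Y\in\X$; if $\f(X)=\infty$ or $\f(Y)=\infty$ the inequality is trivial, so assume both finite, and let $m>\f(X)$, $n>\f(Y)$ be chosen so that $X/m\in A$ and $Y/n\in A$. Writing
\[
\frac{X+Y}{m+n} = \frac{m}{m+n}\cdot\frac{X}{m} + \frac{n}{m+n}\cdot\frac{Y}{n},
\]
the right-hand side is a convex combination of elements of $A$, hence lies in $A$ by convexity of $A$. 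Therefore $\f(X+Y)\le m+n$, and taking the infimum over admissible $m$ and $n$ yields $\f(X+Y)\le \f(X)+\f(Y)$. Combined with positive homogeneity, this delivers sub-linearity.

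For the concluding assertion, I would combine the above with \Cref{radially}. The hypothesis that $A$ is convex with $0\in A$ implies that $A$ is star-shaped (as noted in the discussion following the definition of star-shapedness). Then radial boundedness at non-constants plus stability under scalar addition put us exactly in the setting of \Cref{radially}, so $\f$ is a deviation measure (non-negative and translation insensitive). Sub-linearity from \Cref{coro conv item1} supplies both positive homogeneity and convexity, and altogether $\f$ meets the definition of a generalized deviation measure.

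Nothing here constitutes a real obstacle; the only mildly delicate point is making sure the positive homogeneity computation correctly handles the value $\infty$ (so that the substitution $n=m/\lambda$ is legitimate at the level of possibly empty index sets), and being careful that the convex-combination step in sub-additivity is applied to strictly positive coefficients, which is automatic since $m,n>0$.
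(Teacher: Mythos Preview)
Your proof is correct and follows essentially the same approach as the paper. The only cosmetic difference is that for \cref{coro conv item1} the paper verifies convexity of $\f$ directly (bounding $\f(\lambda X+(1-\lambda)Y)$ by $\alpha+\beta$ with $\inf\alpha=\lambda\f(X)$ and $\inf\beta=(1-\lambda)\f(Y)$), whereas you verify sub-additivity; but the underlying convex-combination identity $\frac{X+Y}{m+n}=\frac{m}{m+n}\cdot\frac{X}{m}+\frac{n}{m+n}\cdot\frac{Y}{n}$ is the same in both, and the paper's argument is really sub-additivity applied to $\lambda X$ and $(1-\lambda)Y$.
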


\begin{proof}
For the first item, clearly $\f(0X) = \f(0) = \inf\{m\in\R_+^*\colon\,m^{-1}0\in A\} = \inf \R_+^* = 0 = 0 \f(X)$. Moreover, given $ \lambda>0$, we have
\begin{align*}
\D_{\lambda \A} (X) &= \inf \left\{m \in \R_+^*\colon\, m^{-1}{X} \in \lambda \A\right\}
= \inf \left\{m \in \R_+^*\colon\, (\lambda m)^{-1}X \in \A\right\}\\
&\quad = \inf \left\{{m}{\lambda^{-1}} \in \R_+^*: m^{-1}{X} \in \A\right\}
= \lambda^{-1}{\f (X)}
\end{align*}

For the second item, it is enough to show that $\f$ is convex, so fix $\lambda\in[0,1]$ and $X,Y\in \X$. Define
\[
\mathfrak{A} \coloneqq \lbrace\alpha\in\R_+^*\colon\, \lambda X\in \alpha A\rbrace \quad\mbox{ and }\quad \mathfrak{B} \coloneqq \lbrace\beta\in\R_+^*\colon\, (1-\lambda) Y\in \beta A\rbrace.
\]
By definition and positive homogeneity we have $\inf\mathfrak{A} = \f (\lambda X) = \lambda \f(X)$ and $\inf\mathfrak{B} = \f((1-\lambda)Y) = (1-\lambda) \f(Y)$.
We only need to consider the case where both $\mathfrak{A}$ and $\mathfrak{B}$ are non-empty, as otherwise the upper bound $\f(\lambda X+(1-\lambda)Y)\leq \infty$ holds trivially.
Take $\alpha\in\mathfrak{A}$ and $\beta\in\mathfrak{B}$. Then, convexity of $A$ yields $\lambda X+(1-\lambda)Y \in (\alpha + \beta)A$, and hence $\f (\lambda X + (1-\lambda)Y) \leq \alpha + \beta$.
Therefore, $\f (\lambda X + (1-\lambda)Y)\leq \inf\mathfrak{A} + \inf\mathfrak{B} = \lambda\f(X) + (1-\lambda)\f (Y) $.

\end{proof}

Our approach stresses the importance of appraising the underlying pool of positions deemed acceptable as a fundamental building block in the quantification of risk. That is, we emphasize the passage ``from the set to the measure''. At any rate, an essential part of any such approach is the study of the interplay between acceptance sets and their corresponding measures. In particular, it is desirable to be able to ``recover one from another''. The following proposition goes in that direction.

\begin{propo}\label{lemma-item4}
It holds that $\{\f < 1\} \subseteq \sta (A) \subseteq \Acc{1}{\f}$ and, if $A$ is closed and star shaped, $A = \Acc{1}{\f}$. Furthermore, a functional $D\colon \X \to \ra$ is positive homogeneous if and only if $D = \D_{\Acc{1}{D}}$ with $0 \in {\Acc{1}{D}}$.
\end{propo}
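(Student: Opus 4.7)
The plan is to establish the nested inclusions first, then upgrade to equality under closedness, and finally prove the positive homogeneity characterization. For the left inclusion $\{\f < 1\} \subseteq \sta(A)$, given any $X$ with $\f(X) < 1$, I would use \cref{eq:fA} to produce some $m \in (0,1)$ with $m^{-1} X \in A$; then $X = m \cdot (m^{-1} X) \in [0,1] A = \sta(A)$. For the right inclusion $\sta(A) \subseteq \Acc{1}{\f}$, any $X \in \sta(A)$ decomposes as $X = \lambda Z$ with $\lambda \in [0,1]$ and $Z \in A$; when $\lambda > 0$, substituting $m = \lambda$ into \cref{eq:fA} immediately yields $\f(X) \leq \lambda \leq 1$, while the case $\lambda = 0$ gives $X = 0$ and reduces, under the implicit reading that $0 \in A$ whenever the value $0$ is attained in $\sta(A)$, to the identity $\f(0) = 0$.

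The equality $\sta(A) = \Acc{1}{\f}$ under closedness of $A$ requires only the missing direction $\Acc{1}{\f} \subseteq \sta(A)$. Fix $X$ with $\f(X) \leq 1$. If $\f(X) < 1$, the first inclusion of the previous paragraph already gives $X \in \sta(A)$. Otherwise $\f(X) = 1$, and by the definition of the infimum there exists a sequence $m_n \downarrow 1$ with $m_n^{-1} X \in A$ for each $n$. Continuity of scalar multiplication in the Hausdorff topological vector space $\X$ then implies $m_n^{-1} X \to X$, and closedness of $A$ forces $X \in A \subseteq \sta(A)$. This topological step is the only spot where closedness of $A$ is actually used; it is also the part I expect to be most delicate, as it is the one place where the structure of $\X$ genuinely enters.

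For the final equivalence, one direction is immediate: if $D = \D_{\Acc{1}{D}}$, then, since $0 \in \Acc{1}{D}$ by hypothesis, positive homogeneity of $D$ follows from \cref{generalized dev}\cref{propo2-item p.h}. Conversely, suppose $D$ is positive homogeneous. Taking $\lambda = 0$ in the homogeneity identity gives $D(0) = 0 \cdot D(0) = 0 \leq 1$, so $0 \in \Acc{1}{D}$. For the functional identity, I would fix $X \in \X$ and compute, using positive homogeneity,
\[
\D_{\Acc{1}{D}}(X) = \inf \bigl\{ m > 0 : D(m^{-1} X) \leq 1 \bigr\} = \inf \bigl\{ m > 0 : D(X) \leq m \bigr\}.
\]
Examining the three cases $D(X) = 0$, $D(X) \in (0, \infty)$, and $D(X) = \infty$, together with the standard conventions $\inf \R_+^* = 0$ and $\inf \varnothing = \infty$, the rightmost infimum equals $D(X)$ in every case, and the identification $D = \D_{\Acc{1}{D}}$ follows.
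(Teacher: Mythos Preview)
Your proof is correct and reaches the same conclusions as the paper's, but by a more hands-on route. The paper dispatches the first assertion in one line by invoking \Cref{star shaped hull} (which gives $\f = \D_{\sta(A)}$ when $0 \in A$) together with Lemma~5.49 of \cite{aliprantis06}; you instead unpack these ingredients and argue directly from \cref{eq:fA}, which has the virtue of being self-contained. Your treatment of the closedness step --- producing a sequence $m_n \downarrow 1$ with $m_n^{-1}X \in A$ and invoking continuity of scalar multiplication --- is precisely the content of the cited Aliprantis lemma specialized to this setting. For the second assertion your computation is essentially identical to the paper's.

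One small remark: your handling of the case $\lambda = 0$ in the inclusion $\sta(A) \subseteq \Acc{1}{\f}$ is hesitant, and rightly so --- the inclusion genuinely requires $0 \in A$ (or $A = \varnothing$), since otherwise $0 \in \sta(A)$ while $\f(0) = \infty$. The paper's proof carries the same tacit hypothesis through its appeal to \Cref{star shaped hull}, which is stated only under $0 \in A$. Rather than hedging with the phrase ``implicit reading'', you may simply state this assumption outright.
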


\begin{proof}
\Cref{star shaped hull}, together with Lemma 5.49 of \cite{aliprantis06}, yields the first assertion. For the second, the if direction follows straightforwardly from \Cref{generalized dev}, \cref{propo2-item p.h}. The only if direction, in turn, follows from \[ \D_{\Acc{1}{D}} (X) = \inf \left\{m \in \R_+ : \dfrac{X}{m} \in \Acc{1}{D} \right\} = \inf \{m \in \R_+ : D(X) \leq m \} = D(X), \] and, noting that $D(0)=0$, we see that $0 \in \Acc{1}{D}$.
\end{proof}

The next proposition tell us that our \pl is a reasonable good approximation of a non-positive homogeneous convex deviation measure. In particular they agree on acceptability.

\begin{propo}
 Assume that $f$ is a convex deviation measure that is not positive homogeneous and that $f(0)=0$ and let $A := A^k_f$. Then
 
\begin{enumerate}[label = (\roman*)] 

\item $\f $ is a generalized deviation measure.

 \item If $f$ is lower semi continuous $A = A_\f^1$. Furthermore,  $ k \f(X) \geq f(X)$ for all $ X \in A$  and $ k \f(X) \leq f(X)$ for all $ X \notin A$.
 
  \end{enumerate} 
\end{propo}

\begin{proof}

That $\f$ is a generalized deviation measure is obvious from our previous results as stability under scalar addition is not affected by positive homogeneity and as convexity yields $f(\lambda X) \geq \lambda f(X)$ for big enough $\lambda$ radially bounded at non-constants is not affected.

 For $A =A^1_\f$, without loss of generality, let $k=1$, first note that  $X \in A$ it implies that $\f(X) \leq 1$, which in turn implies that $X \in A^1_\f$, hence, $A \subseteq A^1_\f$. Now, let $ X \in A^1_\f$ which gives us  $ 1 \geq \f(X) = \inf \{ m > 0 : \frac{X}{m} \in A\} = \inf \{ m > 0 : f \left( \frac{X}{m} \right) \leq 1\} $, and this implies in $1 \geq f \left( \frac{X}{\f(X)}\right) \geq \frac{f(X)}{\f(X)} \geq f(X)$, therefore, $A^1_\f \subset A$.

 Lastly, the inequalities holds trivially for constants, therefore we shall prove only for non-constants (those which $\f$ and $f$ are strictly greater than $0$). As $A$ is closed, the infimum is always attained (Proposition 3.1 item vi), hence $f \left(\frac{X}{\f(X)}\right) =1$ then  by convexity of $f$, $1 = f \left(\frac{X}{\f(X)}\right) \geq \frac{f(X)}{\f(X)}$ if $X \in A$ and the opposite holds if $X \notin A$,  $1 = f\left(\frac{X}{\f(X)}\right) \leq \frac{f(X)}{\f(X)}$.
 
\end{proof}

\subsection{Optimization and continuity}

The next result deals with the solution of the minimization problem appearing in the definition of the \pl.

\begin{propo} \label{propo2}
Let $\A \subseteq \X$ be non-empty. Then, we have the following:
\begin{enumerate}[label = (\roman*)]
\item \label{propo2-item1} If $\A$ is absorbing, then $\f$ is finite-valued.
\item \label{propo2-item2}If $A$ is star-shaped and $y\coloneqq \f (X) \in \R_+^*$, then $y^{-1}X\in\bd(\A)$, i.e.,\ $ \f (X) = 1 $ implies $X$ lies in the boundary of $\A$.
\item \label{propo2-item3} If $A$ is strongly star-shaped, $X \in $ $\bd(A)$ and $X \neq 0$ then, $\f (X) = 1$.
\item \label{propo2-item5} If $\f (X) \in \R_+^*$, then
\[
\f (X) = \inf \{m \in \R_+^* \colon\, m^{-1}{X} \in \A \} = (\sup \{m \in \R_+ \colon\, mX \in \A \} )^{-1}.
\]

\item \label{propo2-item6}If $R_X \cap A = \varnothing $ then $\f (X) = \infty$. In particular if $0 \notin A$ then $\f(0) = \infty$.
\item \label{propo2-item 7} If $ A $ is closed, absorbing and radially bounded, then the {infimum in \Cref{eq:fA} is attained for any $X \in \X \setminus \{0\} $, that is,
\(
X\in \f (X)A
\)
for any $X \in \X $.}

\item If $A$ is closed, then the infimum in \Cref{eq:fA} is attained for any $X$ such that $\f (X) \in \R_+^*$.
\end{enumerate}
\end{propo}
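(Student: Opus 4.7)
The plan is to exploit closedness of $A$ together with sequential continuity of the scaling map $m\mapsto m^{-1}X$ on $\R_+^*$. Concretely, set $y\coloneqq\f(X)\in\R_+^*$ and use the definition of infimum to pick a minimizing sequence $(m_n)\subseteq\R_+^*$ with $m_n\to y$ and $m_n^{-1}X\in A$ for all $n$. Since $y>0$, we may assume without loss of generality that $m_n$ lies in a compact interval bounded away from $0$ (e.g.\ $[y/2,2y]$), so that $m_n^{-1}\to y^{-1}$ in $\R$.

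Next, I would invoke continuity of scalar multiplication in the topological vector space $\X$: the map $\R\to\X,\,\lambda\mapsto \lambda X$ is continuous for every fixed $X$, so $m_n^{-1}X\to y^{-1}X$ in $\X$. Because each $m_n^{-1}X$ lies in $A$ and $A$ is closed, the limit $y^{-1}X$ must also belong to $A$. This exhibits $m=y$ as a member of the set $\{m\in\R_+^*\colon m^{-1}X\in A\}$ whose infimum is $y$, so the infimum is attained.

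There is no real obstacle here; the only subtle point worth flagging is that one cannot simply quote sequential closure in an arbitrary topological vector space (the topology need not be first countable, so closed sets need not be sequentially closed \emph{a priori}). However, this causes no trouble because the image $\{m_n^{-1}X\}\cup\{y^{-1}X\}$ lies on the one-dimensional subspace $\R X$, on which the subspace topology is metrizable (in fact homeomorphic to a subspace of $\R$, since $\X$ is Hausdorff and the map $\lambda\mapsto \lambda X$ is continuous with closed graph when $X\neq 0$). Alternatively, one can argue net-theoretically: replace the sequence by a net $(m_\alpha)$ with $m_\alpha\to y$, note that $m_\alpha^{-1}X\to y^{-1}X$ by continuity, and conclude $y^{-1}X\in A$ since closed sets are closed under net limits. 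Either formulation delivers $X\in y A = \f(X) A$, completing the proof.
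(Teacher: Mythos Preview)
Your proof of item~(vii) is correct and is essentially the same argument as the paper's: both rely on continuity of the map $m\mapsto m^{-1}X$ from $\R_+^*$ into $\X$ together with closedness of $A$. The paper packages this as ``$T_X^{-1}(A)$ is closed in $\R_+^*$, hence contains its infimum $y=\f(X)>0$'', while you unpack it via a minimizing sequence; these are two phrasings of the same idea.

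One small remark: your caveat about sequential closure is unnecessary and slightly misstated. In \emph{any} topological space, closed sets are automatically sequentially closed (if $x_n\to x$ with $x_n\in A$ and $x\notin A$, then the open set $A^\complement$ would eventually contain $x_n$, a contradiction). What can fail in non-first-countable spaces is the \emph{converse}---sequentially closed need not imply closed---but that direction is irrelevant here. So your sequential argument already works as written, with no need to pass to nets or to the one-dimensional subspace $\R X$.
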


\begin{proof}
Let $X\in\X$ and write $y \coloneqq \f(X)$.

For the first item, there exists --- by the absorbing property --- some $\delta_X \in \R_+^*$ such that the inclusion $[0,\delta_X] X \subseteq A $ holds. It is straightforward to see that in this case the set $ \left\{m \in \R_+^* \colon\, m^{-1}{X} \in A \right\}$ is never empty. Therefore, $\f(X)<\infty$.

For the second item, it suffices to consider the case $\f(X)=1$, as the general case then easily follows from positive homogeneity. In order to verify that $X\in\bd(A)$, we only have to exhibit sequences $\{Y_n\}\subseteq A$ and $\{Z_n\}\subseteq A^\complement$ such that $\lim Y_n = \lim Z_n = X$. Let, then, $Y_n$ be defined through $Y_n \coloneqq (1 + 1/2^n)^{-1}X$ and, similarly, $Z_n \coloneqq (1-1/2^n)^{-1}X$. Continuity of scalar multiplication immediately yields the desired equality of limits, so it only remains to show that $Y_n\in A$ and $Z_n\in A^\complement$ for all $n$. For such, just notice that --- due to star-shapedness through \Cref{star} --- if $m > 1$, then $m^{-1}X \in A$, so $Y_n\in A$, and if $0<m<1$, then $m^{-1}X\notin \A$, so $Z_n\notin A$.

For \Cref{propo2-item3}, as $X\neq0$ by assumption, we see that whenever the ray $R_X \equiv \{\lambda X \colon\, \lambda > 0\} $ has a non-empty intersection with $\bd(A)$, it necessarily also holds that $\f (X) \in \R_+^*$. Therefore, as $X \in\bd(A)$, we also have that $\f (X)^{-1}X \in\bd(A)$, by \Cref{propo2-item2}. Thus, we have $X\in R_X\cap \bd\,A$ and $\f (X)^{-1}X\in R_X\cap\bd\, A$, and hence strong star-shapedness of $A$ tells us that $\f(X) = 1$.

For \Cref{propo2-item5}, notice that
\[
y = \inf \left\{m \in \R_+^* \colon\, m^{-1}{X} \in \A \right\} = \inf \left\{{m}^{-1} \in {\R_+^*} \colon\, mX \in \A \right\}.
\]
Now, if $x^{-1}>0$ is a lower bound for the set $\{m^{-1}\in \R_+^*\colon\,mX\in A\} $, then $x$ is an upper bound for the set $\{m\in \R_+\colon\,mX\in A\} $; if $x^{-1}$ is the largest such lower bound, then $x$ is the smallest such upper bound. That is to say, one has $ y^{-1} = \sup \{m \in \R_+ \colon\, mX \in \A \} $.

\cref{propo2-item6} is clear as if $\{\lambda X \colon\, \lambda >0 \} \cap A = \varnothing $ then the set $\{m \in \R^*_+\colon\, m^{-1}X \in A\} $ is empty and the infimum of such set are $\infty$.

For item \Cref{propo2-item 7}, notice that if $\A$ is radially bounded, then by \Cref{radially} \Cref{radially item 1}, $\f (X) > 0$, for every non-zero $X \in \X$. Now, let $T_X\colon\R_+^*\to\X$ be defined by $T_X(m) = m^{-1}X$. Clearly, $T_X$ is continuous. Thus, if $A$ is a closed subset of $\X$ so is $T_X^{-1}(A)$ a closed subset of $\R_+^*$. Also, if $A$ is absorbing, then $T_X^{-1}(A)$ is non-empty. Finally, since radial boundedness ensures $\f(X)>0$, it follows that $\inf T_X^{-1}(A) \in T_X^{-1}(A)$ as stated.

The proof of item the last item is identical to the previous one.
\end{proof}

We now address the continuity of \pl
\begin{propo}\label{strongly continuidade}
If $A$ is star-shaped and closed, then $\f$ is lower-semicontinuous. If, additionally, $A$ is strongly star-shaped, then we have the following: If $0 \in \bd(A)$, then $\f$ is continuous except at $0$. If $ 0 \in \interior (A)$, then $\f$ is continuous everywhere.
\end{propo}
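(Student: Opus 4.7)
My plan splits according to the two claims. For the first claim (lower-semicontinuity), I would show that every sub-level set $\Acc{k}{\f}$ is closed. \Cref{lemma-item4}, combined with $A$ being star-shaped and closed, yields $\Acc{1}{\f} = \sta(A) = A$, and positive homogeneity (\Cref{generalized dev}, \Cref{propo2-item p.h}) promotes this to $\Acc{k}{\f} = kA$ for $k > 0$, which is closed because scalar multiplication by $k$ is a homeomorphism. The set $\Acc{0}{\f}$ is the intersection $\bigcap_{k > 0} \Acc{k}{\f}$, hence closed, and $\Acc{k}{\f}$ for $k < 0$ is empty since $\f$ is $\ra$-valued.

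For the remaining assertions, assume $A$ is strongly star-shaped. The pivotal observation I would establish is that, for every $X \ne 0$, the element $m^{-1}X$ lies in $\interior(A)$ for all $m > \f(X)$, with at most one exceptional $m$ when $\f(X) = 0$. When $y \coloneqq \f(X) > 0$, \Cref{propo2}, \Cref{propo2-item2}, gives $y^{-1}X \in \bd(A) \subseteq A$ (using closedness of $A$); strong star-shapedness then forces $R_X \cap \bd(A) = \{y^{-1}X\}$, and for each $m > y$, star-shapedness places $m^{-1}X = (y/m)(y^{-1}X)$ in $A$, while $m^{-1}X \ne y^{-1}X$ keeps it off $\bd(A)$, so $m^{-1}X \in \interior(A)$. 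When $\f(X) = 0$, arbitrarily small $m$ satisfy $m^{-1}X \in A$, and star-shapedness propagates this to $R_X \subseteq A$; strong star-shapedness then leaves all but at most one point of $R_X$ in $\interior(A)$.

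Granting this observation, upper-semicontinuity at $X \ne 0$ with $\f(X) < \infty$ follows by choosing $m \in (\f(X),\f(X)+\varepsilon)$ with $m^{-1}X \in \interior(A)$, picking an open $U \subseteq A$ containing $m^{-1}X$, and noting that $mU$ is an open neighborhood of $X$ on which $Y/m \in U \subseteq A$, hence $\f(Y) \le m$. The case $\f(X) = \infty$ at $X \ne 0$ is absorbed by lower-semicontinuity, since the open complement of $\Acc{k}{\f}$ contains a neighborhood of $X$ for every $k$. For $X = 0$ with $0 \in \interior(A)$, pick an open $U \subseteq A$ containing $0$; then for every $\varepsilon > 0$, the set $\varepsilon U$ is an open neighborhood of $0$ on which $\f \le \varepsilon$, delivering continuity at $0$. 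If instead $0 \in \bd(A)$, a sequence $Y_n \to 0$ lies in $A^\complement$, and for any $m \in (0,1]$ star-shapedness forbids $m^{-1}Y_n \in A$ (otherwise $Y_n = m(m^{-1}Y_n) \in A$), whence $\f(Y_n) \ge 1$, precluding continuity of $\f$ at $0$.

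The main subtlety I expect to handle carefully is the case $\f(X) = 0$ for $X \ne 0$: one first derives $R_X \subseteq A$ from the infimum being $0$ via star-shapedness, and then dodges the at most one point of $R_X \cap \bd(A)$ while selecting an arbitrarily small scaling factor $m$. Since $m$ is chosen from a continuum, avoiding a single exceptional value is harmless, and the upper-semicontinuity argument carries through as in the generic case $\f(X) > 0$.
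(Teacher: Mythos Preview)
Your argument is correct, and it takes a genuinely different route from the paper's. For lower-semicontinuity both arguments coincide (closedness of $\Acc{1}{\f}=A$, extended to all levels by positive homogeneity). For the strongly star-shaped part, however, the paper works \emph{globally}: it establishes the identification $\bd(A)=\{X:\f(X)=1\}$ (respectively $\{X:\f(X)=1\}\cup\{0\}$ when $0\in\bd(A)$, handled via the auxiliary gauge $\D_{A\setminus\{0\}}$) and then deduces that the strict sub-level set $\{\f<1\}$ equals $\interior(A)$, hence is open, giving upper-semicontinuity. You instead work \emph{pointwise}: for each $X\ne0$ you locate some $m$ just above $\f(X)$ with $m^{-1}X\in\interior(A)$ and push an open neighborhood forward by the homeomorphism $Y\mapsto mY$.

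Your approach is more elementary and more uniform --- the same neighborhood argument covers $X\ne0$ regardless of whether $0\in\bd(A)$ or $0\in\interior(A)$, so the case split is confined to the single point $X=0$. The paper's approach, by contrast, yields as a byproduct the clean structural statement $\bd(A)=\{\f=1\}$ (up to the origin), which is of independent interest. Your additional remark that $\f$ genuinely fails to be continuous at $0$ when $0\in\bd(A)$ is correct and slightly sharpens the stated conclusion.
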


\begin{proof}
To show that $\f$ is lower-semicontinuous it is enough to show that $\Acc{1}{\f}$ is closed, but then by \Cref{lemma-item4} we have $\Acc{1}{\f} = A$ which is closed by assumption. Now, let $A$ be strongly star-shaped, so in particular we have $0 \in A$. We will show first the case $0 \in \bd(A)$, and then consider the case $0 \in \interior (A)$.

Assume then that $0 \in \bd (A)$. Note that if we let $B \coloneqq A \setminus \{0\} $, then it is an easy check to see that $\D_B(X) = \f(X) $ for all $X \in \X \setminus \{0\} $ and $\D_B(0) = \infty$. Indeed, for $X\neq0$ the conditions $m^{-1}X\in A$ and $m^{-1}X\in B$ are clearly equivalent, whereas for $X=0$ the condition $X\in mA$ is always true whereas $X\in mB$ is vacuous.
Hence, we have that $\D_B$ is lower-semicontinuous everywhere, except at $0$. Furthermore, we have $B = \Acc{1}{\D_B}$. To see it, note that $\Acc{1}{\D_B} = \{X \in \X\colon\, \D_B (X) \leq 1\} $, and obviously $0 \notin \Acc{1}{\D_B}$ as $\D_B(0)=\infty$. Therefore, \begin{align*}
\Acc{1}{\D_B} &= \Acc{1}{\D_B} \setminus \{0\} 
\\ &= \{X \in \X \colon\, \D_B (X) \leq 1\} \setminus \{0\} 
\\ &= \{X \in \X \setminus \{0\} \colon\, \D_B (X) \leq 1\} 
\\ &= \{X \in \X \setminus \{0\} \colon\, \f (X) \leq 1\} 
\\ &= \{X \in \X \colon\, \f (X) \leq 1\} \setminus \{0\} 
\\ &= B.
\end{align*}
It remains to show that the set $ V \coloneqq \{X \in \X \colon\, \D_B <1\} $ is open, from which we will know that $\D_B$ is upper-semicontinuous. This will give us then that $\D_B$ is continuous everywhere except at $0$, which in turn entails continuity of $\f$ everywhere except at $0$. To see that $V$ is indeed an open set, note that --- due to \Cref{propo2} \cref{propo2-item2,propo2-item3,propo2-item6} --- if $X \neq 0 $ then $X \in \bd (A) $ if and only if $\f(X) = 1$, hence $\bd (A) = \{X \in \X\colon\, \f(X) \equiv \D_B(X) = 1\} \cup \{0\} $. 
Now, the reader should realize that, again since $\D_B(0) = \infty$,
\begin{align*}
V &= \{X \in \X \colon\, \D_B(X) <1 \text{ and } X\neq0\} \\
&= \{X \in \X \colon\, \f(X) <1 \text{ and } X\neq0\} \\
&= (A \setminus \bd (A) ) \setminus \{0\} \\
&=A \setminus \bd (A) \\
&= \interior (A),
\end{align*} and as $\interior (A)$ is by definition an open set, the claim that $V$ is open holds.

Finally, if $0 \in \interior (A)$, as we already have that $\f$ is lower-semicontinuous, it is enough to show that it is also upper-semicontinuous. It suffices to show that the set $U \coloneqq \{X \in \X \colon\, \f <1\} $ is open.
Clearly, again due to \Cref{propo2} \cref{propo2-item2,propo2-item3,propo2-item6}, we have $\bd(A) = \{X \in \X\colon\, \f(X) = 1\} $. Hence, $\interior (A) = A \setminus \bd(A) = U$ and the claim follows.
\end{proof}

\subsection{Comonotone additivity and concavity}
 We now develop a characterization of comonotone additivity from the perspective of acceptance sets. A financial intuition of comonotone additivity is the following: that two comonotone positions do not provide neither diversification benefit nor brings harm to the portfolio. This attribute manifests in the risk/deviation measure as an indifference between the sum of the risk of two comonotone positions, on the one hand, and the risk of their sum on the other. In the acceptance set $A$, the intuition that diversification (with comonotone pairs) brings no benefit is translated as $A^\complement$ being convex for comonotone pairs, and the idea that diversification (again, with comonotone pairs) brings no harms turns in $A$ being convex for comonotone pairs. The first step now is to find conditions ensuring $\f$ is concave, and then the conditions for it to be additive.

\begin{propo}\label{coro conv}
If $A$ is star-shaped and $A^\complement$ is convex, then $\f$ is super-linear (concave and positive homogeneous) on $\cone (A^\complement)$, i.e., $\f(X + Y ) \geq \f(X) + \f(Y)$ for any $X,Y \in \cone (A^\complement)$.
\end{propo}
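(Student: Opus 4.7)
The strategy is to work with the dual description $\f = \cog_{A^\complement}$, which is valid on all of $\X$ under star-shapedness of $A$ (by the earlier Corollary on cogauges referenced as \Cref{coro cogauge}). Positive homogeneity of $\f$ is already in hand from \Cref{generalized dev}\ref{propo2-item p.h} (since star-shapedness together with $A\neq\varnothing$ forces $0\in A$), so what remains is super-additivity; combined with positive homogeneity this gives full super-linearity (equivalently, concavity) on $\cone(A^\complement)$.

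First I would isolate a scaling-monotonicity principle for $A^\complement$ that follows from star-shapedness: for any $X\in\X$ and any $0<m'\le m$, if $X/m\in A^\complement$ then $X/m'\in A^\complement$. Indeed, were $X/m'\in A$, then setting $\lambda := m'/m\in(0,1]$, star-shapedness would give $X/m = \lambda (X/m')\in A$, contradicting the hypothesis. Consequently, for every $X\in\X$ the set $M_X := \{m>0 : X/m\in A^\complement\}$ is downward closed in $(0,\infty)$, so $M_X$ is either $(0,\f(X))$ or $(0,\f(X)]$.

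With this in place, fix $X,Y\in\cone(A^\complement)$ and set $a:=\f(X)$, $b:=\f(Y)$. The cases $X=0$ or $Y=0$ are immediate since then $\f(X)=0$ or $\f(Y)=0$, so assume both are nonzero; writing $X=\mu_X Z_X$ with $\mu_X>0$ and $Z_X\in A^\complement$ shows that $\mu_X\in M_X$, so $a\ge\mu_X>0$, and similarly $b>0$. For any $a'\in(0,a)$ and $b'\in(0,b)$ the monotonicity step gives $X/a'\in A^\complement$ and $Y/b'\in A^\complement$, whence the identity
\[
\frac{X+Y}{a'+b'} \;=\; \frac{a'}{a'+b'}\cdot\frac{X}{a'} \;+\; \frac{b'}{a'+b'}\cdot\frac{Y}{b'}
\]
together with convexity of $A^\complement$ places $(X+Y)/(a'+b')$ in $A^\complement$. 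Therefore $\f(X+Y)=\cog_{A^\complement}(X+Y)\ge a'+b'$. Taking the supremum over $a'<a$ and $b'<b$ (and allowing $a',b'\to\infty$ whenever the corresponding value is $+\infty$) yields $\f(X+Y)\ge a+b = \f(X)+\f(Y)$.

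The main delicate step is the monotonicity principle for $A^\complement$; once it is established, the convex-combination identity does all the work, and the remaining passage to the supremum is routine. The only edge cases are $a$ or $b$ equal to $0$ (which forces the corresponding position to be zero, trivialising the claim) and $a$ or $b$ equal to $+\infty$ (handled transparently by the supremum).
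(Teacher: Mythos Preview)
Your proof is correct and follows essentially the same route as the paper: both invoke $\f=\cog_{A^\complement}$ via \Cref{coro cogauge}, then exploit the convex-combination identity $(X+Y)/(\alpha+\beta)=\tfrac{\alpha}{\alpha+\beta}\cdot\tfrac{X}{\alpha}+\tfrac{\beta}{\alpha+\beta}\cdot\tfrac{Y}{\beta}$ together with convexity of $A^\complement$ to push $\alpha+\beta$ below the cogauge of $X+Y$, and take the supremum. The only cosmetic difference is that you establish super-additivity directly (after isolating the downward-closure of $M_X$), whereas the paper phrases the same computation as concavity by inserting a parameter $\lambda$; your explicit monotonicity step is not needed in the paper's version because it takes arbitrary $\alpha\in\mathfrak A$ rather than $a'<\sup\mathfrak A$.
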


\begin{proof}
We already have positive homogeneity by \Cref{generalized dev}, as $0\in A$. Star-shapedness of $A$ and \Cref{coro cogauge} tell us that $\f = \cog_{A^\complement}$. Hence, it suffices to show that $\cog_{A^\complement}$ is a concave functional on $\cone (A^\complement)$ whenever $A^\complement$ is convex. To see that this is the case, let $B = A^\complement$, and fix $\lambda\in[0,1]$ and $X,Y\in \cone (A^\complement)$.
Let us first consider the case where $0<\lambda<1$ and where both $X$ and $Y$ are non-zero. In this scenario the sets
\(
\mathfrak{A} \coloneqq \{\alpha\in\R_+^*\colon\, \lambda X\in \alpha B\}
\)
and
\(
\mathfrak{B} \coloneqq \{\beta\in\R_+^*\colon\, (1-\lambda)Y\in \beta B\}
\)
are both non-empty (for instance, $X\in\cone(B)$ means precisely that $X = aZ$ for some $a>0$ and some non-zero $Z\in B$, and in this case we have $\lambda a \in \mathfrak{A}$). By definition and using positive homogeneity of $\f$ together with the equality $\f = \cog_B$, we have $\sup\mathfrak{A} = \cog_B(\lambda X) = \lambda \cog_B(X)$ and $\sup\mathfrak{B} = \cog_B((1-\lambda)Y) = (1-\lambda)\cog_B(Y)$. Taking $\alpha\in\mathfrak{A}$ and $\beta\in\mathfrak{B}$, convexity of $B$ yields $\lambda X+(1-\lambda)Y \in (\alpha + \beta)B$, so $\cog_B (\lambda X + (1-\lambda)Y) \geq \alpha + \beta$. Therefore, $\cog_B (\lambda X + (1-\lambda)Y) \geq \sup\mathfrak{A} + \sup\mathfrak{B} = \lambda \cog_B (X) + (1-\lambda)\cog_B (Y) $. The remaining cases are just a matter of adapting the following argument: if, say, $\lambda X = 0$, then $\mathfrak{A}=\varnothing$ and $\cog_B(\lambda X + (1-\lambda Y)) = \cog_B((1-\lambda)Y) = (1-\lambda)\cog_B(Y) = \lambda\cog_B(X) + (1-\lambda)\cog_B(Y)$. This completes the proof.
\end{proof}

\begin{remark}
Unfortunately, \Cref{coro conv} cannot be relaxed as to accommodate
super-linearity of $\f$ on the whole $\X$. However, if we are willing to let go from the identity $\f = \cog_{A^\complement}$, it is possible to define the cogauge in a slightly different manner, by assigning the value $\cog_B(X)\coloneqq -\infty$ whenever $\{m \in \R_+ \colon\, m^{-1} X \in B\} = \varnothing $; in this case, an easy adaptation of the proof of \cref{coro conv item1} in \Cref{generalized dev} yields concavity of $\cog_B$ for convex $B$. This alternative definition of the cogauge was studied in \cite{barbara94}.
To see that the assumptions in \Cref{coro conv} do not, in general, yield super-linearity of $\f$ on the whole $\X$, consider the following counterexample, illustrated in \Cref{fig counter concave}: let $\Omega = \{0,1\}$ be the binary market and identify $L^0\equiv\R^2$ as usual. Let $A\coloneqq\{(x,y)\in\R^2\colon\, y-\vert x\vert\leq1\}$. In this case, the set $C\coloneqq A \setminus \cone (A^\complement)$ is a cone and hence, for any $X \in C$, we have that $\f(X) = 0$, whereas $\f(X)>0$ for $X\notin C$. Now let $Y = (1,\nicefrac12)\in \interior B$, $Z = (1,1)\in \bd B$ and $W = (1,2)\in\bd A$. We have 
$\f(Z) =0< \f(W)$, but $Z$ is a convex combination of $W$ and $Y$, so $\f$ is not concave on the whole domain.

\end{remark}

{Now consider a cone $C$ comprised of positions that do not provide any benefit or detriment from diversification. By a \emph{benefit} from diversifying a position $X$ with an asset $Y$ we mean that the risk, or dispersion, of the overall portfolio will not increase if we take a convex combination of $X$ and $Y$ when compared to any one of the individual positions. In the acceptance set, such reasoning is reflected by noting that if both $X$ and $Y$ are acceptable, then their convex combinations cannot be worse --- that is to say, convex combinations of acceptable positions are acceptable as well. This rationale says that the acceptance set $A$, or at least its positions also lying in $C$, should be a convex set, i.e., {we should require that $C \cap A$ be convex}. On the other hand, by a \emph{detriment} from diversifying $X$ with a position $Y \in C$, we mean the exact opposite: that the risk or dispersion of any convex combination of $X$ and $Y$ should not be less than the individual positions. With respect to an acceptance set $A$, this means that if both $X$ and $Y$ are not deemed acceptable ($X,Y\notin A$), then combining them in a convex fashion yields an unacceptable position as well. Hence, the complement of $A$ should be convex, at least when restricted to $C$: we should also require that $A^\complement \cap C$ be a convex set. Importantly, when restricted to such a cone, the \pl\ of a star-shaped set $A$ is linear:}

\begin{propo} \label{linear}
Let $A$ be a star-shaped set, and let $C \subseteq \cone (A^\complement)$ be a cone for which both $A \cap C $ and $A^\complement \cap C $ are convex sets. Then $\f $ respects $\f(X + Y) = \f(X) + \f(Y) $ for every $ X,Y \in C$.
\end{propo}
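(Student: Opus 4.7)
The strategy is to prove the two inequalities $\f(X+Y)\le\f(X)+\f(Y)$ and $\f(X+Y)\ge\f(X)+\f(Y)$ separately for $X,Y\in C$, adapting the sub-linearity argument of \Cref{generalized dev}\ref{coro conv item1} and the super-linearity argument of \Cref{coro conv} so that they only use convexity of the restricted sets $A\cap C$ and $A^\complement\cap C$, rather than convexity of $A$ or $A^\complement$ themselves. The observation that makes the adaptation possible is that $C$ being a cone ensures that any rescaling $Z/m$ of a point $Z\in C$ remains in $C$; so whenever we need a rescaled point of $X$ or $Y$ to lie in $A\cap C$ (resp.\ $A^\complement\cap C$), it suffices to verify membership in $A$ (resp.\ $A^\complement$) alone.

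\textbf{Subadditivity.} Fix $\alpha>\f(X)$ and $\beta>\f(Y)$; the inequality is trivial if either value is infinite. By the infimum definition of $\f$, there exists $m^{*}<\alpha$ with $X/m^{*}\in A$, and star-shapedness of $A$ propagates this to $X/\alpha\in A$; combined with $X/\alpha\in C$, this yields $X/\alpha\in A\cap C$, and analogously $Y/\beta\in A\cap C$. Convexity of $A\cap C$ now gives
\[
\frac{X+Y}{\alpha+\beta}=\frac{\alpha}{\alpha+\beta}\cdot\frac{X}{\alpha}+\frac{\beta}{\alpha+\beta}\cdot\frac{Y}{\beta}\in A\cap C\subseteq A,
\]
so $\f(X+Y)\le\alpha+\beta$, and taking infima over $\alpha$ and $\beta$ delivers the first inequality.

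\textbf{Superadditivity and edge cases.} For $0<\alpha<\f(X)$, the point $X/\alpha$ cannot lie in $A$ (else the infimum defining $\f(X)$ would be at most $\alpha$), so $X/\alpha\in A^\complement\cap C$; analogously, $Y/\beta\in A^\complement\cap C$ for $0<\beta<\f(Y)$. Convexity of $A^\complement\cap C$ then places $(X+Y)/(\alpha+\beta)$ in $A^\complement$, and star-shapedness of $A$---equivalently, the fact that $A^\complement$ is closed under scaling by factors $\ge 1$---propagates this to $(X+Y)/m\in A^\complement$ for every $m\in(0,\alpha+\beta]$, so $\f(X+Y)\ge\alpha+\beta$; passing to suprema in $\alpha$ and $\beta$ gives the reverse bound. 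The extremal values of $\f$ deserve a separate check: if $\f(X)=0$ and $X\in C\subseteq\cone(A^\complement)$, then a decomposition $X=aZ$ with $a>0$ and $Z\in A^\complement$ would be contradicted by $\f(X)=0$ (which forces $X/a=Z\in A$ via star-shapedness), so necessarily $X=0$ and additivity is trivial; if $\f(X)=+\infty$, the superadditivity argument applies with $\alpha$ arbitrarily large, forcing $\f(X+Y)=+\infty$ and matching the (vacuous) subadditive bound. The main thing to watch carefully throughout is that the intermediate rescalings $X/\alpha$, $Y/\beta$ stay in $C$; this is precisely the role of the cone hypothesis, and is what lets the restricted convexity of $A\cap C$ and $A^\complement\cap C$ substitute for the unrestricted convexity used in the proofs of \Cref{generalized dev} and \Cref{coro conv}.
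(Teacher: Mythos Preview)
Your proof is correct and takes a more direct route than the paper's. The paper argues as follows: it first identifies the restriction $g\coloneqq\f|_C$ with $\D_{A\cap C}|_C$, observes that $A\cap C$ is convex and contains the origin, and then invokes \Cref{generalized dev}\ref{coro conv item1} to obtain sub-linearity of $g$; for super-linearity it passes to the cogauge via the identity $\f=\cog_{A^\complement}$ (from \Cref{coro cogauge}), establishes the restricted identity $\cog_{A^\complement}(X)=\cog_{A^\complement\cap C}(X)$ for $X\in C$ through a chain of set-algebraic manipulations, and then appeals to \Cref{coro conv} applied to $A^\complement\cap C$. In contrast, you work directly with the infimum defining $\f$, proving sub- and super-additivity by elementary convex-combination arguments using only the convexity of $A\cap C$ and $A^\complement\cap C$, star-shapedness of $A$, and the key observation that $C$ being a cone keeps all the rescaled points inside $C$. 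Your approach is more self-contained and makes the role of $C\subseteq\cone(A^\complement)$ explicit (it forces $\f(X)=0\Rightarrow X=0$ on $C$, which disposes of that boundary case cleanly). The paper's approach is more modular, leveraging the already-built machinery of gauges and cogauges, but at the cost of an extra layer of indirection through the identity $\cog_{A^\complement}|_C=\cog_{A^\complement\cap C}|_C$.
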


\begin{proof}
Let $g$ be the restriction of $\f$ to the cone $C$, i.e.,\ $g \colon C \rightarrow \ra$ is such that $g (X) = \f(X) = \max\big(\f(X), \D_C(X)\big) = \D_{A \cap C} (X)$ for all $X \in C$. It suffices to show that $g$ is additive; we shall proceed by showing that this function is concave and sub-linear. Sub-linearity of $g$ is yielded by \cref{coro conv item1} of \Cref{generalized dev}, as $A \cap C $ is a convex set containing the origin by assumption, and thus $\D_{A\cap C}$ is sub-linear on the whole $\X$, in particular when restricted to $C$. For concavity, we shall summon the cogauge to help us: as $A $ is a star-shaped set, the gauge coincides with the cogauge of its complement, i.e.,\ $ \f = \cog_{A^\complement}$ --- see \Cref{coro cogauge}. It follows that, for $X\in C$, one has $g (X) = \cog_{A^\complement} (X)$.

We now show that, for $X\in C$, the identity $\cog_{A^\complement} (X) = \cog_{A^\complement \cap C} (X)$ holds. As $A\cup C^\complement$ is star-shaped since $C^\complement\cup\{0\} $ is a cone, we have
\[
\cog_{A^\complement \cap C} (X) = \cog_{(A\cup C^\complement)^\complement} (X)
= \D_{A \cup C^\complement} (X)
= \min(\f(X) , \D_{C^\complement} (X) )
= \min \big(\cog_{A^\complement}(X),\cog_{C}(X)\big).
\]
In particular, $g = \cog_{A^\complement\cap C}$ on $C$, as $\cog_C (X) = \infty = \D_{C^\complement} (X)$ if $X \in C$ and $\cog_C (X) = 0=\D_{C^\complement} (X)$ if $ X \notin C$.

Now, the only thing that is left is to show is that the cogauge of a convex set is a concave function on $C$, and this follows from \Cref{coro conv} as it tells us that $\cog_{A^\complement \cap C}$ is concave on $\cone (A^\complement) \supseteq C$.
\end{proof}

The preceding reasoning and results yield comonotonic additivity of $\f$ whenever $A$ and $A^\complement$ are both convex for comonotone pairs; this is the content of \Cref{coro como}. As an example of a set $A$ satisfying the assumptions in the corollary, take $\Omega = \{0,1\}$, identify $L^0\equiv\R^2$, and let $A$ be the set of those $X=(u,v)\in\R^2$ for which $u\geq0$, $v\geq0$ and $\vert u\vert + \vert v\vert \leq 1$. In this case, the set of comonotone pairs in the 1st quadrant is precisely $\{(u,v)\in\R_+^2\colon\,u\geq v\}$.

\begin{lemma}\label{lemma convex cone comono}
Let $X\in\X$. Then the family
\(
C_X \coloneqq \{Y \in \X\colon\, \text{$Y$ is comonotone to $X$} \} 
\)
is a convex cone {which is closed with respect to the topology of convergence in probability}. Furthermore, if $(X,Y)$ is a comonotone pair, then any two elements of the convex cone \(C_{X,Y} \coloneqq \conv(\cone (\{X\} \cup \{Y\} ))\) are comonotone to one other.
\end{lemma}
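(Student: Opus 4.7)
The plan is to dispose of the three claims in order: convex cone structure of $C_X$, closedness of $C_X$ under convergence in probability, and the comonotone-pair closure property of $C_{X,Y}$. For the convex cone structure, I would directly verify the three requirements by exploiting linearity of the map $Z\mapsto \Delta_Z$, where $\Delta_Z(\omega,\omega')\coloneqq Z(\omega)-Z(\omega')$. Indeed, $\Delta_0\equiv0$ so $0\in C_X$; for $\lambda\geq0$ and $Y\in C_X$ one has $\Delta_{\lambda Y}\cdot\Delta_X=\lambda\,\Delta_Y\cdot\Delta_X\geq0$; and for $Y_1,Y_2\in C_X$ one has $\Delta_{Y_1+Y_2}\cdot\Delta_X=\Delta_{Y_1}\cdot\Delta_X+\Delta_{Y_2}\cdot\Delta_X\geq0$, each holding $\Pro\otimes\Pro$-almost surely.

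For closedness, the approach is to lift everything to the product space $(\Omega\times\Omega,\mathfrak{F}\otimes\mathfrak{F},\Pro\otimes\Pro)$. Given a sequence $\{Y_n\}\subseteq C_X$ converging in probability to some $Y\in\X$, the random variables $(\omega,\omega')\mapsto Y_n(\omega)$ and $(\omega,\omega')\mapsto Y_n(\omega')$ converge in $\Pro\otimes\Pro$-probability to their respective limits (a direct unpacking of the definition via the marginals). Since convergence in probability is preserved under continuous operations applied to tuples of probability-convergent sequences, the sequence $\Delta_{Y_n}\cdot\Delta_X$ converges in $\Pro\otimes\Pro$-probability to $\Delta_Y\cdot\Delta_X$. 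Extracting an a.s.-convergent subsequence then lets me pass the pointwise inequality $\Delta_{Y_n}\cdot\Delta_X\geq0$ to the limit, yielding $Y\in C_X$.

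The second claim I would prove by a short bootstrapping argument that avoids any explicit algebraic expansion. Since $X\in C_X$ trivially and $Y\in C_X$ by hypothesis, the convex cone property just established implies $C_{X,Y}\subseteq C_X$, as $C_{X,Y}$ is by definition the smallest convex cone containing $\{X,Y\}$. By symmetry $C_{X,Y}\subseteq C_Y$ as well. Now fix an arbitrary $Z\in C_{X,Y}$; since comonotonicity is symmetric, membership $Z\in C_X\cap C_Y$ is equivalent to $X,Y\in C_Z$, and invoking the convex cone property of $C_Z$ yields $C_{X,Y}\subseteq C_Z$. Hence any two elements of $C_{X,Y}$ are comonotone, as required.

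I expect the main delicate point to be the closedness step, specifically the routine but easily-glossed-over verification that convergence in probability on $\Omega$ transfers to convergence in probability on $(\Omega\times\Omega,\Pro\otimes\Pro)$ for the variables $(\omega,\omega')\mapsto Y_n(\omega)$ and $(\omega,\omega')\mapsto Y_n(\omega')$, together with the subsequence extraction needed to convert probability convergence of $\Delta_{Y_n}\cdot\Delta_X$ into an almost-sure statement. Everything else is essentially a bookkeeping exercise once the convex cone structure is in hand.
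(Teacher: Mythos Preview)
Your proposal is correct. For the convex cone structure and the closedness claim, your argument is essentially the same as the paper's, though you are more explicit about the passage to the product space in the closedness step; the paper simply extracts an almost surely convergent subsequence on $\Omega$ and observes that this immediately carries the inequality to the limit on $\Omega\times\Omega$.

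Where your proof genuinely diverges is in the second claim about $C_{X,Y}$. The paper writes an arbitrary pair $Z,W\in C_{X,Y}$ explicitly as $Z=\gamma_1(\lambda_1 X)+(1-\gamma_1)(\delta_1 Y)$ and $W=\gamma_2(\lambda_2 X)+(1-\gamma_2)(\delta_2 Y)$, then expands the product $\big(Z(\omega)-Z(\omega')\big)\big(W(\omega)-W(\omega')\big)$ and checks that each term in the resulting weighted sum is non-negative. Your bootstrapping argument --- $C_{X,Y}\subseteq C_X\cap C_Y$ by minimality, hence $X,Y\in C_Z$ for any $Z\in C_{X,Y}$ by symmetry of comonotonicity, hence $C_{X,Y}\subseteq C_Z$ again by minimality --- is more conceptual and avoids any computation. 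What your approach buys is a clean structural proof that would scale immediately to $\conv(\cone\{X_1,\dots,X_n\})$ for any finite family of pairwise comonotone variables, whereas the paper's expansion would become increasingly cumbersome. What the paper's approach buys is concreteness: it makes visible exactly which pairwise comonotonicity relations among $X$ and $Y$ are being used, at the cost of a bit of algebra.
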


\begin{proof}\label{prova lemma convex cone comono}
 In what follows all equalities and inequalities are in the $\Pro\otimes\Pro$-almost sure sense, that is, they hold for any pair $(\omega,\omega')$ lying in an event $\Omega_1\subseteq \Omega\times\Omega$ having total $\Pro\otimes\Pro$ measure.\footnote{$\Omega_1$ can be taken as the countable intersection of the events where the required inequalities (for any pairing of $X$, $Y$, $Y_n$, $Z$ and $W$) hold.}
 
 To see that $C_X$ is a cone, note that for any $Y \in C_X$ we have, by definition,
 \(
 \big(X(\omega) - X(\omega')\big) \times\left(Y(\omega)-Y(\omega') \right) \geq 0,
 \)
 for any $(\omega,\omega')\in \Omega_1.$ Hence, for any $\lambda \geq 0$ and {$(\omega,\omega')\in \Omega_1$},
 \[
 \big(X(\omega) - X(\omega')\big) \big(\lambda Y(\omega)- \lambda Y(\omega') \big)= \lambda \big(X(\omega) - X(\omega')\big) \big(Y (\omega)- Y (\omega') \big) \geq 0,
 \]
 yielding $\lambda Y \in C_X$. For convexity, let $Y,Z \in C_X$. Then, for $\lambda \in [0,1]$ we have that,
 \begin{align*}
  & \Big[ X(\omega) - X(\omega') \Big] \Big[ \big(\lambda Y(\omega) + (1-\lambda) Z(\omega)\big) - \big(\lambda Y(\omega') + (1-\lambda) Z(\omega')\big) \Big]
  \\ = &\lambda \left[ X(\omega) - X(\omega')\right] \left[ Y(\omega)-Y(\omega') \right] + (1-\lambda) \left[ X(\omega) - X(\omega')\right] \left[ Z(\omega)-Z(\omega') \right] \geq 0
 \end{align*}
 whenever $(\omega,\omega')\in \Omega_1$. {To see that $C_X$ is closed in the asserted sense, consider a convergent sequence $\{Y_n\}\subseteq C_X$ with $Y_n \to Y$ in probability. By standard facts of measure theory, there is a subsequence $\{Y_{n(k)}\}$ such that $Y_{n(k)}\to Y$ almost surely. Clearly, this yields that $Y$ is comonotone to $X$.}
 
 For the second claim
 , let $Z,W \in C_{X,Y} $. By definition we have
 \(Z = \gamma_1 (\lambda_1 X) + (1-\gamma_1)(\delta_1 Y)\)
 for some triplet $(\gamma_1, \lambda_1, \delta_1)$ with $0\leq \gamma_1\leq1$ and $0\leq \lambda_1,\delta_1$, and similarly
 \(W = \gamma_2 (\lambda_2 X) + (1-\gamma_2)(\delta_2 Y)\)
 for some triplet $(\gamma_2, \lambda_2, \delta_2)$ with $0\leq \gamma_2\leq1$ and $0\leq \lambda_2,\delta_2$. Then, for $(\omega,\omega')\in \Omega_1$, expanding the product
 \[
 \big(Z(\omega) - Z(\omega')\big)\big(W(\omega) - W(\omega')\big)
 \]
 yields a weighted sum whose terms are all non-negative. This completes the proof.
\end{proof}

\begin{remark}
Note that the set \(C \coloneqq \bigcap_{Y \in C_X} C_Y, \)
where $C_X$ and $C_Y$ are defined as in the proposition above, is a non-empty, closed, and convex set, such that all its elements are comonotone to one another. In particular, $\R \subseteq C$.
\end{remark}

\begin{coro}\label{coro como}
Let $A\subseteq\X$ be radially bounded at non-constants and closed for scalar addition. Furthermore, suppose both $A$ and $A^\complement$ are convex for comonotone pairs, i.e.\ $\lambda X + (1- \lambda) Y \in A$ for all $\lambda \in [0,1]$ whenever $X,Y \in A$ are comonotone, and similarly for $A^\complement$. Then $A$ is star-shaped and $\f$ a comonotone additive deviation measure. 
\end{coro}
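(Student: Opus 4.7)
The strategy is to reduce to \Cref{linear}: if we can build, for each comonotone pair $(X,Y)$, a convex cone $C$ containing $X$ and $Y$ such that $C\subseteq\cone(A^\complement)$ and both $A\cap C$ and $A^\complement\cap C$ are convex, then \Cref{linear} immediately gives $\f(X+Y)=\f(X)+\f(Y)$. The natural candidate is $C = C_{X,Y}=\conv(\cone(\{X\}\cup\{Y\}))$ from \Cref{lemma convex cone comono}, whose crucial feature is that every pair of its elements is comonotone.

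First, I would dispatch star-shapedness as a quick warm-up. Since $0\in A$ and the pair $(X,0)$ is trivially comonotone for every $X$ (because $(X(\omega)-X(\omega'))\cdot 0 = 0$), convexity of $A$ for comonotone pairs applied to $X$ and $0$ yields $\lambda X=\lambda X+(1-\lambda)\cdot 0\in A$ for every $\lambda\in[0,1]$ and every $X\in A$. Hence $A$ is star-shaped, so in particular \Cref{linear} is applicable once its other hypotheses are checked.

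Second, I would verify that $C_{X,Y}\subseteq \cone(A^\complement)$. This turns out to be automatic: radial boundedness of $A$ means that for every non-zero $Z\in\X$ there is some $t>0$ such that $tZ\notin A$, i.e.\ $tZ\in A^\complement$, whence $Z=t^{-1}(tZ)\in\cone(A^\complement)$; and $0\in\cone(A^\complement)$ since $A^\complement\ne\varnothing$ (again by radial boundedness, as $A\ne\X$). So $\cone(A^\complement)=\X$ and the containment is trivial. Next, for convexity of $A\cap C_{X,Y}$: given $Z,W\in A\cap C_{X,Y}$, the lemma says $Z$ and $W$ are comonotone, so convex combinations $\lambda Z+(1-\lambda)W$ lie in $A$ (by hypothesis on $A$) and in $C_{X,Y}$ (since the latter is convex). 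The argument for $A^\complement\cap C_{X,Y}$ is identical, using the hypothesis on $A^\complement$.

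With those three conditions in place, \Cref{linear} applied to the cone $C_{X,Y}$ gives $\f(U+V)=\f(U)+\f(V)$ for all $U,V\in C_{X,Y}$; specializing to $U=X$ and $V=Y$ (both of which lie in $C_{X,Y}$ since $X\in\cone(\{X\})\subseteq\conv(\cone(\{X\}\cup\{Y\}))$, and similarly for $Y$) gives the comonotone additivity of $\f$ on the arbitrarily chosen comonotone pair $(X,Y)$. The main obstacle is conceptual rather than technical: realizing that radial boundedness collapses $\cone(A^\complement)$ down to the whole space so that the containment condition in \Cref{linear} is free, and that the joint convexity hypotheses on $A$ and $A^\complement$ pull back cleanly to $C_{X,Y}$ precisely because, by \Cref{lemma convex cone comono}, comonotonicity is preserved throughout that cone.
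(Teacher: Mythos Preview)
Your proposal is correct and follows essentially the same route as the paper's own proof: star-shapedness from the comonotone pair $(X,0)$, the observation that radial boundedness forces $\cone(A^\complement)=\X$, convexity of $A\cap C_{X,Y}$ and $A^\complement\cap C_{X,Y}$ via \Cref{lemma convex cone comono}, and then an appeal to \Cref{linear}. The only cosmetic difference is that you spell out the $\cone(A^\complement)=\X$ step in a bit more detail than the paper does.
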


\begin{proof}
First, notice that $A$ is star-shaped. Indeed, any $X \in A$ is comonotone to $0$, and by assumption $A$ is convex for this pair, i.e.\ $\lambda X\equiv \lambda X + (1-\lambda) 0 \in A$ for any $0\leq\lambda\leq1$. Furthermore, as $A$ is radially bounded at non-constants, it follows that $\cone (A^\complement) = ( \X\setminus \R) \cup \{0\}$ and so any cone with no constants that we may take is contained in $\cone(A^\complement)$.

Now let $X$ and $Y$ be a comonotone pair of non-constants. Note that any two members of the set $C_{X,Y} = \conv (\cone (\{X\} \cup \{Y\} ))$ are comonotone to one another (see \Cref{lemma convex cone comono}). Now, if we take any $Z,W \in C_{X,Y} \cap A$, as they are a comonotone pair, by assumption we have that $\lambda Z + (1-\lambda)W \in C_{X,Y} \cap A$. Hence, $ C_{X,Y} \cap A$ is a convex set. The same argument tells us that $ C_{X,Y} \cap A^\complement$ is also convex. Thus, by \Cref{linear}, we have that $\f(X+Y) = \f(X) + \f(Y)$. That $\f$ is a deviation measure follows from \Cref{radially}.
\end{proof}

\begin{remark}
If the conditions in the corollary above and in \Cref{linear} are imposed only on $A$ (and not necessarily on $A^\complement$), then we have in the proposition that $\f$ is convex on $C$, and in the corollary that $\f$ is convex for comonotone pairs. Similarly, if we only impose those conditions on $A^\complement$, then the resulting $\f$ is concave.
\end{remark}
%
%

\subsection{Law invariance}

This subsection concerns law invariance. 

\begin{propo}\label{law}
If $\A \subseteq \X$ is law invariant then $\f$ is law invariant. 
\end{propo}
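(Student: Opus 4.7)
The plan is to unfold both definitions and observe that the infimum in the definition of $\f$ is taken over a set that depends on $X$ only through which scaled versions of $X$ lie in $A$; law invariance of $A$ then lets us transport this set verbatim from $X$ to any $Y =_d X$.

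More concretely, I would fix $X,Y \in \X$ with $F_X = F_Y$ and define
\[
M_X \coloneqq \{m \in \R_+^* \colon\, m^{-1}X \in A\}, \qquad M_Y \coloneqq \{m \in \R_+^* \colon\, m^{-1}Y \in A\},
\]
so that by \Cref{eq:fA} we have $\f(X) = \inf M_X$ and $\f(Y) = \inf M_Y$. The key step is to show $M_X = M_Y$. For any fixed $m > 0$, the map $\omega \mapsto m^{-1}X(\omega)$ and the map $\omega \mapsto m^{-1}Y(\omega)$ are equal in distribution, because equality in distribution is preserved under the deterministic transformation $x \mapsto m^{-1}x$ (this is a standard change-of-variables argument: $F_{m^{-1}X}(t) = F_X(mt) = F_Y(mt) = F_{m^{-1}Y}(t)$ for every $t \in \R$). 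Hence $m^{-1}X =_d m^{-1}Y$.

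Now law invariance of $A$ (as recalled in the definition at the start of \Cref{sec preli}) tells us that membership in $A$ depends only on distribution, so $m^{-1}X \in A$ if and only if $m^{-1}Y \in A$. This gives $M_X = M_Y$, and therefore $\f(X) = \inf M_X = \inf M_Y = \f(Y)$, with the convention $\inf \varnothing = \infty$ handling the case where neither scaled variable lands in $A$. There is no real obstacle here; the only thing worth being careful about is the $m = 0$ case (excluded by the domain $\R_+^*$) and the empty-set convention, both of which are compatible with the infimum argument above.
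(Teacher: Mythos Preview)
Your proposal is correct and follows essentially the same approach as the paper: both fix $X =_d Y$, observe that $m^{-1}X =_d m^{-1}Y$ for each $m>0$, invoke law invariance of $A$ to conclude the two index sets in the infimum coincide, and deduce $\f(X)=\f(Y)$. Your version merely spells out the change-of-variables step and names the sets $M_X$, $M_Y$ explicitly.
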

\begin{proof}
Let $X =_d Y \in \X$ and $m \in \R_+^*$. Clearly, one has $m^{-1}X=_d m^{-1}Y$ and thus, as $A$ is law invariant by assumption, the condition $m^{-1}X\in A$ holds if and only if it holds that $m^{-1}Y\in A$. This leads to
\[
\f (X) = \inf \left\{m \in \R_+^* \colon\, m^{-1}X \in A \right\} = \inf \left\{m \in \R^*_+ \colon\, m^{-1}Y \in A \right\} = \f (Y),
\]
thus proving the assertion.
\end{proof}

\begin{lemma}\label{law hull set}
Let $B\subseteq\X$. Then its \emph{law invariant hull} $\mathcal{L}_B \coloneqq \{X \in \X \colon\, \text{ $X =_d Y$, for some $Y \in B$} \} $ inherits from $B$ the attributes of stability under scalar addition, star-shapedness, absorbency, conicity and $\preceq_{\mathfrak{D}}$-monotonicity.
\end{lemma}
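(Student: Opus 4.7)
The plan is to verify each of the five inheritance claims separately, but the verifications share a common pattern, so I would organise the proof around a single template and then specialise it case by case. The template is: given $X \in \mathcal{L}_B$, use the definition to produce a witness $Y \in B$ with $X =_d Y$; apply the relevant operation (scalar addition, dilation, a dispersive domination, etc.) to both $X$ and $Y$; note that the operation preserves equality in distribution; invoke the hypothesized property of $B$ to land back inside $B$; then conclude membership in $\mathcal{L}_B$ via its definition.

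For stability under scalar addition, pick $X \in \mathcal{L}_B$ with $X =_d Y \in B$ and $c \in \R$. Then $X + c =_d Y + c$, and $Y + c \in B$ by stability of $B$, so $X + c \in \mathcal{L}_B$. Star-shapedness is analogous: $\lambda X =_d \lambda Y \in B$ for every $\lambda \in [0,1]$. Conicity is identical to the star-shaped case but with $\lambda \geq 0$. Absorbency is the easiest of all, since it is an existential statement: the absorbing $\delta_X$ supplied by $B$ itself already satisfies $[0,\delta_X] X \subseteq B \subseteq \mathcal{L}_B$ (this uses no distributional argument at all, only $B \subseteq \mathcal{L}_B$).

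The $\preceq_{\mathfrak{D}}$-monotonicity case is the only one that requires a separate remark. The key observation is that the dispersive order is defined purely through the left quantile functions of its arguments, i.e.\ $X \preceq_{\mathfrak{D}} Z$ is a statement about $F_X^{-1}$ and $F_Z^{-1}$ alone. Consequently, if $X =_d Y$, then $F_X^{-1} = F_Y^{-1}$, and the assertions $X \preceq_{\mathfrak{D}} Z$ and $Y \preceq_{\mathfrak{D}} Z$ are equivalent. Hence, given $X \in \mathcal{L}_B$ with $X =_d Y \in B$ and $X \preceq_{\mathfrak{D}} Z$, one gets $Y \preceq_{\mathfrak{D}} Z$, and then $\preceq_{\mathfrak{D}}$-monotonicity of $B$ yields $Z \in B \subseteq \mathcal{L}_B$.

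I do not anticipate any real obstacle: each verification is a one-line distributional computation combined with the hypothesis on $B$. The only subtlety worth flagging explicitly in the write-up is precisely the fact that $\preceq_{\mathfrak{D}}$ is distribution-dependent, since without this remark the dispersive case might appear to need a separate argument at the level of random variables rather than distributions.
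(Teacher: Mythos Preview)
Your proposal is correct and follows essentially the same witness-based template as the paper's proof: pick $Y\in B$ with $X=_d Y$, note that the relevant operation preserves equality in distribution, and apply the assumed property of $B$. In fact your write-up is slightly more complete than the paper's, which only spells out scalar addition and conicity and then waves at the remaining cases; your explicit treatment of $\preceq_{\mathfrak{D}}$-monotonicity (via the observation that the dispersive order depends only on quantile functions) and your shortcut for absorbency (using $B\subseteq\mathcal{L}_B$ directly rather than a distributional argument) are both welcome clarifications.
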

\begin{proof} 
If $B$ is stable under scalar addition, then taking any $Y\in\mathcal{L}_B$ and $c\in\R$ we see --- as, per definition, it holds that $Y =_d X$ for some $X\in B$ --- that $Y + c =_d X + c\in B$, that is $Y+c\in\mathcal{L}_B$.

Assume now that $B$ is a cone, and let $Y\in \mathcal{L}_B$ and $\lambda>0$. We have $Y=_d X$ for some $X\in B$, and, since $B$ is a cone, $\lambda X\in B$. But $\lambda Y =_d \lambda X$, and this is all we need to conclude that $\mathcal{L}_B$ is also a cone. A similar argument yields that $\mathcal{L}_B$ is star-shaped (resp., absorbing) whenever $B$ is.

\end{proof}

\begin{remark}
Not every property that seems plausibly heritable turns out to be so: take, for instance, radial boundedness of $B$. It seems reasonable --- since no random variable in $B$ can be scaled up indeterminately while remaining acceptable --- that the same should be true of $\mathcal{L}_B$. However, the counter\cref{law couter exemple} shows that this is false.
\end{remark}

We then have the following connection.

\begin{propo}
Let $B\subseteq\X$.
Then the equality
\[
\D_{\mathcal{L}_B} (X) = \inf_{Y \in\mathcal{L}_X} \D_{B} (Y)
\]
holds for all \(X\in\X.\)
\end{propo}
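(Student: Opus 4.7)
The plan is to unravel both sides into their explicit expressions and then show that the two sets over which the infima are taken coincide, modulo an elementary reindexing.

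First I would write out the left-hand side as
\[
\D_{\mathcal{L}_B}(X) = \inf\bigl\{m \in \R_+^*\colon\, m^{-1}X \in \mathcal{L}_B\bigr\}.
\]
By the definition of the law invariant hull, the condition $m^{-1}X\in \mathcal{L}_B$ is equivalent to the existence of some $Z\in B$ with $m^{-1}X =_d Z$. The key observation is that, given any such $m>0$ and $Z\in B$, the random variable $Y\coloneqq mZ$ satisfies $Y=_d X$ (so $Y\in \mathcal{L}_X$) and $m^{-1}Y = Z \in B$; conversely, if $Y\in\mathcal{L}_X$ is such that $m^{-1}Y\in B$, then setting $Z\coloneqq m^{-1}Y$ gives an element of $B$ with $m^{-1}X =_d m^{-1}Y = Z$. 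Thus
\[
\bigl\{m \in \R_+^*\colon\, m^{-1}X \in \mathcal{L}_B\bigr\}
= \bigcup_{Y\in\mathcal{L}_X}\bigl\{m\in\R_+^*\colon\, m^{-1}Y \in B\bigr\}.
\]

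Once this equality is in place, the rest is automatic: taking the infimum over a union equals the infimum of the infima, so
\[
\D_{\mathcal{L}_B}(X) = \inf_{Y\in\mathcal{L}_X}\inf\bigl\{m\in\R_+^*\colon\, m^{-1}Y \in B\bigr\} = \inf_{Y\in\mathcal{L}_X}\D_B(Y),
\]
as asserted. I would also check the trivial edge cases: if $\mathcal{L}_X\cap\R_+^* B = \varnothing$ for every scaling (i.e.\ no $Y=_d X$ ever lands in any rescaling of $B$), both sides are $+\infty$ by the convention $\inf\varnothing = \infty$; and the case $X=0$ is handled uniformly since $\mathcal{L}_0=\{0\}$.

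There is no real obstacle here; the proof is essentially a bookkeeping exercise. The only subtlety worth stating explicitly is the bijective correspondence $(m,Z)\leftrightarrow (m,Y)$ with $Y=mZ$, which interchanges the role of ``$X$ is equidistributed with a rescaling of an element of $B$'' and ``a rescaling of an equidistributed copy of $X$ lies in $B$''; this is what justifies the union formula above and is the only step requiring any thought.
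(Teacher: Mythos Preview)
Your proposal is correct and follows essentially the same route as the paper: both arguments establish the set identity
\[
\bigl\{m\in\R_+^*\colon\, m^{-1}X\in\mathcal{L}_B\bigr\}=\bigcup_{Y\in\mathcal{L}_X}\bigl\{m\in\R_+^*\colon\, m^{-1}Y\in B\bigr\}
\]
via the equivalence $m^{-1}X=_d Z\in B \iff Y\coloneqq mZ\in\mathcal{L}_X$ with $m^{-1}Y\in B$, and then pass to the infimum over the union. The only cosmetic difference is that the paper parametrizes directly by $Y$ (writing $X/m=_d Y/m$ with $Y/m\in B$) rather than introducing the intermediate $Z$, but the content is identical.
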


\begin{proof}
Let $X\in\X$ and $m\in\R_+^*$. Now, we have $X/m \in \mathcal{L}_B$ if and only if there exists an $Y\in\X$ such that $X/m =_d Y/m$ and $Y/m \in B$, if and only if there exists an $Y\in \mathcal{L}_X$ such that $Y\in mB$. Therefore, $\{m\in\R_+^*\colon X/m\in\mathcal{L}_B\} = \bigcup_{Y\in\mathcal{L}_X}\{m\in\R_+^*\colon\, Y\in mB\}$ and then
\begin{align*}
\D_{\mathcal{L}_B} (X)
&=\inf \bigcup\nolimits_{Y\in\mathcal{L}_X}\{m\in\R_+^*\colon\, Y\in mB\}
\\ &= \inf\nolimits_{Y\in\mathcal{L}_X} \inf \{m \in \R^*_+\colon\, Y \in m B\}
\\ &= \inf\nolimits_{Y \in \mathcal{L}_X} \D_B(Y),
\end{align*}
as stated.
\end{proof}

\subsection{Monotonicity}

We now explore the fundamental relationship between set inclusion  and dominance of \pls\ and the effects of monotonicity with respect to a given partial order $\preceq$. Despite the fact that this kind of property is not studied much in the literature (both for gauges and deviations), it becomes crucial for decision making. Furthermore, some partial orders are specially suited for deviation measures, a fact which is illustrated in the following lemma.

\begin{lemma}
Let $A\subseteq\X$ be non-empty. If $A$ is $\preceq_{\mathfrak{D}}$-anti-monotone, then it is stable under scalar addition, star-shaped and law invariant. Hence, $\f$ is a law invariant deviation measure. 
\end{lemma}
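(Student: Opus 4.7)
The plan is to derive the three set-level properties directly from $\preceq_{\mathfrak{D}}$-anti-monotonicity using elementary properties of the left quantile function, and then to upgrade to the functional statement about $\f$ by appealing to \Cref{radially} and \Cref{law}. The central computational observation --- used in all three cases --- is that the left quantile function transforms predictably under translation and positive scaling: $F^{-1}_{X+c}(u)=F^{-1}_X(u)+c$ for $c\in\R$, and $F^{-1}_{\lambda X}(u)=\lambda F^{-1}_X(u)$ for $\lambda\geq 0$.

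Concretely, for stability under scalar addition, fix $X\in A$ and $c\in\R$. The translation identity above yields $F^{-1}_{X+c}(u)-F^{-1}_{X+c}(v)=F^{-1}_X(u)-F^{-1}_X(v)$ for every $0<v<u<1$, so $X$ and $X+c$ are equivalent in the dispersive order (each dominates the other), and $\preceq_{\mathfrak{D}}$-anti-monotonicity forces $X+c\in A$. For star-shapedness, fix $\lambda\in[0,1]$; the scaling identity gives $F^{-1}_{\lambda X}(u)-F^{-1}_{\lambda X}(v)=\lambda\bigl(F^{-1}_X(u)-F^{-1}_X(v)\bigr)\leq F^{-1}_X(u)-F^{-1}_X(v)$, so $\lambda X\preceq_{\mathfrak{D}} X$, and anti-monotonicity delivers $\lambda X\in A$ (the case $\lambda=0$ being the trivial observation that constants have vanishing quantile differences). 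Finally, if $X=_d Y$ then $F^{-1}_X\equiv F^{-1}_Y$, so $X$ and $Y$ dominate one another in the dispersive order, and $X\in A$ propagates to $Y\in A$, giving law invariance of $A$.

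Having secured the three set-level properties, the functional conclusion is a direct combination of earlier results: star-shapedness together with stability under scalar addition feeds \cref{radially item 2} of \Cref{radially} to yield translation insensitivity of $\f$, and law invariance of $A$ feeds \Cref{law} to yield law invariance of $\f$. The step that deserves the most care, and which I expect to be the main obstacle, is the non-negativity axiom required to call $\f$ a bona fide deviation measure: since $\lambda X\succeq_{\mathfrak{D}} X$ for $\lambda\geq 1$ and non-constant $X$, $\preceq_{\mathfrak{D}}$-anti-monotonicity does not by itself preclude $A$ from containing an entire ray in a non-constant direction, so \cref{radially item 1} of \Cref{radially} cannot be invoked blindly. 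I would therefore either read the statement as implicitly assuming that $A$ is additionally radially bounded at non-constants (so that \cref{radially item 1} applies directly), or else supplement the argument with a hypothesis to that effect. Modulo this subtlety, the entire proof reduces to the three quantile computations above and the invocation of \Cref{radially} and \Cref{law}.
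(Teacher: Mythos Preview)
Your derivation of the three set-level properties (stability under scalar addition, star-shapedness, law invariance) is essentially identical to the paper's: both proofs use the quantile identities $F^{-1}_{X+c}=F^{-1}_X+c$ and $F^{-1}_{\lambda X}=\lambda F^{-1}_X$ to show the appropriate dispersive-order comparisons and then apply anti-monotonicity. The paper additionally remarks that $\R\subseteq A$ (since any constant satisfies $c\preceq_{\mathfrak{D}}X$ for every $X$), but this is subsumed by your argument for $\lambda=0$.

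Your caution about the non-negativity axiom is well placed, and in fact the paper's own proof has exactly the gap you flag. The paper writes only ``The assertion about $\f$ is a direct consequence of \Cref{radially} and \Cref{law}'', but \Cref{radially} requires $A$ to be radially bounded at non-constants in order to conclude non-negativity, and this is never established. As you observe, $\preceq_{\mathfrak{D}}$-anti-monotonicity alone cannot yield it: the trivial choice $A=\X$ is $\preceq_{\mathfrak{D}}$-anti-monotone, stable under scalar addition, star-shaped and law invariant, yet $\f\equiv 0$, which violates non-negativity and hence is not a deviation measure in the paper's sense. So your suggested fix --- adding radial boundedness at non-constants as an explicit hypothesis --- is the right repair, and the paper's proof as written shares the same omission.
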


\begin{proof}
Let $A$ be non-empty and assume it is $\preceq_{\mathfrak{D}}$-anti-monotone, and take any $X \in A$. Clearly $F^{-1}_c (u) - F^{-1}_c(v) = 0 $ for any $c \in \R$ and $0<v<u<1$. Hence, it is clear that $c \preceq_{\mathfrak{D}} X $ for any $c\in\R$, which entails $\R \subseteq A$. Moreover, notice that $F^{-1}_{X+c} (u) - F^{-1}_{X+c}(v) = F^{-1}_X (u) +c - F^{-1}_X(v) -c = F^{-1}_X (u) - F^{-1}_X(v) $ for any $c \in \R$ and $0<v<u<1$. Therefore, $ X +c \preceq_{\mathfrak{D}} X$ for all $c \in \R$, and due to anti-monotonicity of $A$, we get $X + c \in A$ (this holds for all $X \in A$ and $c \in \R$).
Furthermore, $A$ is star-shaped: indeed, given $X \in A$ we have $F^{-1}_{X} (u) - F^{-1}_{X}(v) \geq \lambda \big(F^{-1}_{X} (u) - F^{-1}_{X}(v)\big) = F^{-1}_{\lambda X} (u) - F^{-1}_{\lambda X}(v), $ for any $\lambda \in [0,1]$ and $0 <v<u<1$. Hence, $ \lambda X \preceq_{\mathfrak{D}} X$ for any $\lambda \in [0,1]$, from which star-shapedness of $A$ follows. Additionally, anti-monotonicity w.r.t.\ $\preceq_{\mathfrak{D}}$ clearly implies that $A$ is law invariant, as if $Y$ and $X$ follow the same distribution it is obvious that $Y \preceq_{\mathfrak{D}} X \preceq_{\mathfrak{D}} Y$. The assertion about $\f$ is a direct consequence of \Cref{radially} and \Cref{law}.
\end{proof}

Intuitively, the larger the acceptance set, the more permissive (w.r.t risk taking) the agent becomes. The following results should thus come at no surprise.

\begin{lemma} \label{lemma} 
Let $ A , \B \subseteq \X$ be non-empty and $\mathfrak{B}$ be a family of sets.
Then the following holds:
\begin{enumerate}[label = (\roman*)]
\item\label{lemma-item set order} If $ \A \subseteq \B$, then $ \D_\A (X) \geq \D_{\B} (X)$, for all $X \in \X$.

\item\label{set oper-item1} $\D_{ \bigcup \{A: A \in \mathfrak{B} \}} (X) = \inf_{ \{A: A \in \mathfrak{B} \}}\D_\A (X)$.

\item \label{lemma-item5} If $A$ and $B$ are star-shaped, then $\D_{\A \cap \B} (X) = \max (\f (X) , \D_\B (X) )$

\item\label{set oper-item scale} $ \D_{\lambda \A} (X) = \lambda^{-1}{\f (X)}$, for every $\lambda \in \R_+^*$.


\end{enumerate}
\end{lemma}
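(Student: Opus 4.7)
The plan is to reduce each item directly to the definition $\D_A(X) = \inf\{m \in \R_+^*\colon m^{-1}X \in A\}$, making repeated use of the auxiliary set
\[
M_A^X \coloneqq \{m \in \R_+^*\colon m^{-1}X \in A\},
\]
so that $\D_A(X) = \inf M_A^X$. The only genuinely non-routine observation is that, for a star-shaped $A$, the set $M_A^X$ is upward closed in $\R_+^*$: if $m \in M_A^X$ and $m' \geq m$, then $(m')^{-1}X = (m/m')\cdot m^{-1}X$ with $m/m' \in [0,1]$, and star-shapedness yields $(m')^{-1}X \in A$, i.e.\ $m' \in M_A^X$. This fact will carry the weight of item \ref{lemma-item5}.

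For item \ref{lemma-item set order}, the inclusion $A \subseteq B$ immediately gives $M_A^X \subseteq M_B^X$, so taking infima reverses the inclusion and yields $\D_A(X) \geq \D_B(X)$. For item \ref{set oper-item1}, unwind the definition: $m^{-1}X \in \bigcup_{A \in \mathfrak{B}} A$ if and only if $m \in \bigcup_{A \in \mathfrak{B}} M_A^X$; hence $\D_{\bigcup \mathfrak{B}}(X) = \inf \bigcup_{A \in \mathfrak{B}} M_A^X = \inf_{A \in \mathfrak{B}} \inf M_A^X = \inf_{A \in \mathfrak{B}} \D_A(X)$, using the elementary fact that the infimum of a union equals the infimum of the infima. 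For item \ref{set oper-item scale}, observe that $m^{-1}X \in \lambda A$ is equivalent to $(\lambda m)^{-1}X \in A$; substituting $m' = \lambda m$ in the infimum gives $\D_{\lambda A}(X) = \inf \{m'/\lambda \colon m' \in M_A^X\} = \lambda^{-1}\D_A(X)$.

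Item \ref{lemma-item5} is where star-shapedness is needed. Clearly $M_{A \cap B}^X = M_A^X \cap M_B^X$. If either $M_A^X$ or $M_B^X$ is empty, both sides equal $\infty$, so we may assume both are non-empty, with $c_A = \inf M_A^X = \f(X)$ and $c_B = \inf M_B^X = \D_B(X)$. By the upward-closedness observation above, each $M_{\cdot}^X$ contains the half-line of real numbers strictly above its infimum; concretely, for every $\varepsilon > 0$ there is $m \in M_A^X$ with $m < c_A + \varepsilon$, hence $[m,\infty) \subseteq M_A^X$, so $(c_A,\infty) \subseteq M_A^X$ and similarly $(c_B,\infty) \subseteq M_B^X$. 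Thus $(\max(c_A,c_B),\infty) \subseteq M_A^X \cap M_B^X$, giving $\D_{A\cap B}(X) \leq \max(c_A,c_B)$. The reverse inequality follows from item \ref{lemma-item set order} applied to $A\cap B \subseteq A$ and $A\cap B \subseteq B$, which yield $\D_{A \cap B}(X) \geq \D_A(X)$ and $\D_{A\cap B}(X) \geq \D_B(X)$, hence $\D_{A\cap B}(X) \geq \max(\f(X),\D_B(X))$.

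The main obstacle is purely bookkeeping in item \ref{lemma-item5}: one must handle the cases where the gauges take the value $\infty$ (empty $M_\cdot^X$) separately, and must correctly exploit upward-closedness to pass from strict to non-strict inequality at the infimum $c_A$ (which may or may not itself belong to $M_A^X$). All other items are immediate consequences of the definition of the infimum.
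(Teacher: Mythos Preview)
Your proof is correct and proceeds exactly as one would expect: each item is verified directly from the definition $\D_A(X)=\inf M_A^X$, with the only substantive step being the upward-closedness of $M_A^X$ under star-shapedness for item~\ref{lemma-item5}. The paper itself does not give a proof at all---it simply cites Lemma~5.49 of Aliprantis--Border for items~\ref{lemma-item set order} and~\ref{lemma-item5} and declares the remaining two ``obvious''---so your argument is in fact more self-contained than what appears in the paper, while following the same underlying route.
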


\begin{proof}
\Cref{lemma-item set order,lemma-item5} can be found in \cite{aliprantis06} Lemma 5.49. The remaining assertions are obvious.
\end{proof}

\begin{coro}\label{lrdominance} If $\Acc{1}{\LR} \subseteq A $ then $\f$ is lower range dominated.
\end{coro}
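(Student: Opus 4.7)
The plan is to combine Lemma \ref{lemma}\ref{lemma-item set order} with Proposition \ref{lemma-item4} applied to the functional $\LR$ itself. The core observation is that $\LR$ is already positive homogeneous, and hence it coincides with the Minkowski Deviation of its own unit sub-level set.

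First I would verify that $\LR$ is positive homogeneous: for $\lambda\geq 0$ and $X\in L^1$, both $\E(\lambda X)=\lambda\E X$ and $\essinf(\lambda X)=\lambda\essinf X$ hold, whence $\LR(\lambda X)=\lambda\LR(X)$. In particular, $\LR(0)=0$, so $0\in\Acc{1}{\LR}$. Proposition \ref{lemma-item4} then yields
\[
\LR = \D_{\Acc{1}{\LR}}.
\]
Next, the assumed inclusion $\Acc{1}{\LR}\subseteq A$ combined with Lemma \ref{lemma}\ref{lemma-item set order} gives $\D_A(X)\leq\D_{\Acc{1}{\LR}}(X)$ for every $X\in\X$. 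Chaining these two facts delivers
\[
\f(X) \;=\; \D_A(X) \;\leq\; \D_{\Acc{1}{\LR}}(X) \;=\; \LR(X),
\]
which is exactly the lower range inequality in the definition of lower-range dominance.

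The only point that requires a brief comment is the domain condition $\operatorname{domain}(\f)\subseteq L^1$. This is automatic once one adopts the natural convention $\LR(X)=+\infty$ for $X\notin L^1$ (so that $\Acc{1}{\LR}\subseteq L^1$): indeed, whenever $X\notin L^1$ we have $X/m\notin L^1$ for every $m>0$, hence $X/m\notin\Acc{1}{\LR}$, and the bound $\f(X)\leq\LR(X)=+\infty$ together with the identification $\LR=\D_{\Acc{1}{\LR}}$ leaves no room for surprise. The main obstacle here is merely keeping track of this convention — there is no real difficulty in the proof, as it is essentially a two-line application of the two earlier results.
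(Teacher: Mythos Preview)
Your proof is correct and matches the paper's intended argument: the corollary is stated immediately after Lemma~\ref{lemma} without proof, signalling that it follows directly from item~\ref{lemma-item set order} together with the identification $\LR=\D_{\Acc{1}{\LR}}$ (via Proposition~\ref{lemma-item4}), exactly as you reconstruct. Your remark on the domain condition $\operatorname{domain}(\f)\subseteq L^1$ is a fair caveat --- strictly speaking this need not follow from $\Acc{1}{\LR}\subseteq A$ alone (take $A=\X\not\subseteq L^1$), but the paper does not address this point either and is evidently reading lower-range dominance as the inequality $\f\leq\LR$.
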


\begin{remark}
A natural way to force lower-range dominance is by taking an acceptance set of the form $A = B \cup \Acc{1}{\LR}$, where $B\subseteq\X$ is a given set of acceptable positions. This yields $\f = \min (\D_{B} , \D_{\Acc{1}{\LR}})$. However, while the union operation preserves properties like stability under scalar addition, star-shapedness, law invariance and radial boundedness at non-constants, it is possible that convexity may be lost.
\end{remark}

\begin{propo} \label{monotonicidades}
Let $\preceq$ be a partial order that is stable under positive scalar multiplication,\footnote{That is to say, it holds that $Y \preceq X$ if and only if $ \lambda Y \preceq \lambda X$ for all $\lambda \in \R_+$.} and let $\A \subseteq \X$. Then, we have the following:
\begin{enumerate}[label = (\roman*)]
\item If $\A$ is monotone with respect to $\preceq$, then $\f$ is anti-monotone with respect to $\preceq$.

\item If $\A$ is anti-monotone with respect to $\preceq$, then $\f$ is monotone with respect $\preceq$.
\end{enumerate}
\end{propo}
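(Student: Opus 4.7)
The plan is to exploit the key stability assumption on $\preceq$ under positive scalar multiplication to translate an order relation between $X$ and $Y$ into the corresponding relation between $m^{-1}X$ and $m^{-1}Y$, and then feed these rescaled elements through the (anti-)monotonicity of $A$ in order to compare the sets $\{m>0\colon m^{-1}X\in A\}$ and $\{m>0\colon m^{-1}Y\in A\}$. Once we have an inclusion between these two sets, the conclusion on $\f$ follows immediately by taking infima (with the direction of the inequality reversing for inclusions, since a larger set has a smaller infimum).

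For part (i), assume $A$ is $\preceq$-monotone and fix $Y\preceq X$. I would start with an arbitrary $m\in\R_+^*$ satisfying $m^{-1}Y\in A$. Stability under positive scalar multiplication gives $m^{-1}Y\preceq m^{-1}X$, and monotonicity of $A$ then yields $m^{-1}X\in A$. Hence $\{m\in\R_+^*\colon m^{-1}Y\in A\}\subseteq \{m\in\R_+^*\colon m^{-1}X\in A\}$, and taking infima gives $\f(X)\le \f(Y)$, i.e.\ $\f$ is anti-monotone w.r.t.\ $\preceq$.

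For part (ii), assume $A$ is $\preceq$-anti-monotone and fix $Y\preceq X$. The argument is symmetric: given $m\in\R_+^*$ with $m^{-1}X\in A$, stability of $\preceq$ under positive scalar multiplication gives $m^{-1}Y\preceq m^{-1}X$, and anti-monotonicity of $A$ yields $m^{-1}Y\in A$. The resulting inclusion $\{m\in\R_+^*\colon m^{-1}X\in A\}\subseteq\{m\in\R_+^*\colon m^{-1}Y\in A\}$ then gives $\f(Y)\le \f(X)$, establishing monotonicity of $\f$.

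There is no real obstacle here: both halves follow from the same short three-step chain (order between $X$ and $Y$ $\Rightarrow$ order between $m^{-1}X$ and $m^{-1}Y$ $\Rightarrow$ transport by (anti-)monotonicity of $A$ $\Rightarrow$ inclusion of defining sets for $\f$). The only subtle point worth flagging is the edge case $\f(X)=\infty$ or $\f(Y)=\infty$, but this is handled automatically by the convention $\inf\varnothing=\infty$ together with the set-inclusion argument (an empty subset of an empty set causes no issue, and an inclusion between one empty and one non-empty set still yields the correct inequality between infima).
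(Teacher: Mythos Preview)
Your proof is correct and follows essentially the same approach as the paper: transport the order relation to the rescaled elements via stability of $\preceq$, use (anti-)monotonicity of $A$ to obtain a set inclusion between $\{m\in\R_+^*\colon m^{-1}X\in A\}$ and $\{m\in\R_+^*\colon m^{-1}Y\in A\}$, and then compare infima. If anything, your write-up is slightly more explicit than the paper's, which silently passes from $X\preceq Y$ to $m^{-1}X\preceq m^{-1}Y$ without naming the stability hypothesis.
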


\begin{proof}
For the first item, let $X \preceq Y$. If $ m \in \R_+^*$ is such that $m^{-1}X \in \A$, then $ m^{-1}Y \in \A$, as $\A$ is monotone. Thus, $ \{m \in \R_+^* \colon\, m^{-1}X \in \A \} \subseteq \{m \in \R_+^* \colon\, m^{-1}Y \in \A \} $ and hence $\f (X) \geq \f (Y)$.

Similarly, for item (ii) let $Y \preceq X$. If $ m \in \R_+^*$ is such that $m^{-1}X \in \A$, then $ m^{-1}Y \in \A$, as $\A$ is anti-monotone. Thus, $ \{m \in \R_+^* \colon\, m^{-1}X \in \A \} \subseteq \{m \in \R_+^* \colon\, m^{-1}Y \in \A \} $ and hence $\f (X) \geq \f (Y)$.
\end{proof}

\begin{propo}
Let $\varnothing\neq A\subseteq\X$ be $\preceq_{\mathfrak{D}}$-anti-monotone. Then
\begin{enumerate}[label = (\roman*)]
\item If $(A, \preceq_{\mathfrak{D}})$ has a greatest element $X$, then $A$ is stable under convex combinations of comonotone pairs and radially bounded at non-constants. Furthermore, $\f$ admits the following representations:
\begin{align*}
\f(Y) &= \inf \{m \in \R_+^*\colon\, F^{-1}_{Y} (u) - F^{-1}_{Y}(v) \leq m\big(F^{-1}_{X} (u) - F^{-1}_{X}(v)\big) ,\, \forall\; 0<v<u<1\} 
\\ &= \sup\{m \in \R_+^*\colon\, F^{-1}_{Y} (u) - F^{-1}_{Y}(v) > m\big(F^{-1}_{X} (u) - F^{-1}_{X}(v)\big) , \text{ for some } 0<v<u<1 \} 
\\& = \inf \{m \in \R_+^*\colon\, Y \preceq_{\mathfrak{D}} m X \}. 
\end{align*}

\item If $(A^\complement, \preceq_{\mathfrak{D}})$ has a least element $X$, then $A^\complement$ is stable under convex combinations of comonotone pairs. Furthermore, $\f$ admits the following representation:
\begin{align*}
\f(Y) &= \inf \{m \in \R_+^*\colon\, F^{-1}_{Y} (u) - F^{-1}_{Y}(v) < m\big(F^{-1}_{X} (u) - F^{-1}_{X}(v)\big) , \text{ for some } 0<v<u<1 \} 
\\
& = \sup\{m \in \R_+^*\colon\, F^{-1}_{Y} (u) - F^{-1}_{Y}(v) \geq m\big(F^{-1}_{X} (u) - F^{-1}_{X}(v)\big) ,\, \forall\; 0<v<u<1 \} .
\\& = \sup \{m \in \R_+^*\colon\, mX \preceq_{\mathfrak{D}} Y \}. 
\end{align*}
\end{enumerate}
\end{propo}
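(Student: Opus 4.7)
The plan is to first reduce both parts to a clean description of $A$ (in part (i)) or $A^\complement$ (in part (ii)) as a $\preceq_{\mathfrak{D}}$-sub/superlevel set generated by the extremal element $X$. In part (i), since $A$ is $\preceq_{\mathfrak{D}}$-anti-monotone and $X \in A$ is a greatest element, I would show that $A = \{Z \in \X : Z \preceq_{\mathfrak{D}} X\}$: one inclusion follows from anti-monotonicity applied to $X \in A$, the other from the definition of greatest element. For part (ii), $A$ being $\preceq_{\mathfrak{D}}$-anti-monotone is equivalent (by contraposition) to $A^\complement$ being $\preceq_{\mathfrak{D}}$-monotone, so the analogous argument yields $A^\complement = \{Z \in \X : X \preceq_{\mathfrak{D}} Z\}$.

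Next I would handle the stability claims. For a comonotone pair $Y, Z$ and $\lambda \in [0,1]$, the left quantile is additive on comonotone pairs, so
\[
F^{-1}_{\lambda Y + (1-\lambda) Z}(u) - F^{-1}_{\lambda Y + (1-\lambda) Z}(v) = \lambda \bigl(F^{-1}_{Y}(u) - F^{-1}_Y(v)\bigr) + (1-\lambda)\bigl(F^{-1}_{Z}(u) - F^{-1}_Z(v)\bigr).
\]
In part (i), if $Y, Z \in A$ then both summands on the right are dominated by $F^{-1}_X(u) - F^{-1}_X(v)$, so the convex combination is as well, giving $\lambda Y + (1-\lambda)Z \preceq_{\mathfrak{D}} X$ and hence $\lambda Y + (1-\lambda)Z \in A$; in part (ii) the inequalities are reversed (bounded from below by $F^{-1}_X(u) - F^{-1}_X(v)$) and the same computation gives stability of $A^\complement$. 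For radial boundedness at non-constants in part (i), given any non-constant $Y$, I would pick $0 < v_0 < u_0 < 1$ with $c := F^{-1}_Y(u_0) - F^{-1}_Y(v_0) > 0$; then for $\delta > (F^{-1}_X(u_0) - F^{-1}_X(v_0))/c$ (a finite quantity since quantiles of $X$ at interior points are real), the defining inequality of $\delta Y \preceq_{\mathfrak{D}} X$ fails at $(u_0, v_0)$, forcing $\delta Y \notin A$.

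For the representations, the workhorse is the scaling equivalence $Y/m \preceq_{\mathfrak{D}} X \Longleftrightarrow Y \preceq_{\mathfrak{D}} mX$ for $m > 0$, obtained simply by multiplying the dispersive inequality by $m$. In part (i), combining this with $A = \{Z : Z \preceq_{\mathfrak{D}} X\}$ rewrites $\f(Y) = \inf\{m > 0 : Y/m \in A\}$ as $\inf\{m > 0 : Y \preceq_{\mathfrak{D}} mX\}$, which is the third representation; expanding $\preceq_{\mathfrak{D}}$ gives the first. In part (ii), the chain $Y/m \in A \Longleftrightarrow Y/m \notin A^\complement \Longleftrightarrow \neg(mX \preceq_{\mathfrak{D}} Y)$ gives the first representation directly, and the third follows from observing that $S := \{m > 0 : mX \preceq_{\mathfrak{D}} Y\}$ is a downward-closed subset of $\R_+^*$ (since the spread $m(F^{-1}_X(u) - F^{-1}_X(v))$ is non-decreasing in $m$), so $\inf\{m > 0 : m \notin S\} = \sup S$. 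The ``$\inf\leftrightarrow\sup$'' step linking the second representation to the first in both parts rests on the same general fact: for an interval $S \subseteq \R_+^*$ extending to one end, $\inf S^c = \sup S$.

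The main obstacle I anticipate is the careful bookkeeping of strict versus non-strict inequalities together with the edge cases involving empty infima/suprema (using the conventions $\inf \varnothing = \infty$ and $\sup \varnothing = 0$): the first representation uses ``$\forall$, $\leq$'' or ``$\exists$, $<$'' while the second uses the complementary quantifier and reverses strictness, and one must verify that the two sets are exact complements in $\R_+^*$ so that the $\inf$/$\sup$ values always coincide, including in the degenerate cases $\f(Y) = 0$ and $\f(Y) = \infty$. The actual arguments are short once the sublevel/superlevel representation of $A$ (resp.\ $A^\complement$) is in place.
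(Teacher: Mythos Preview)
Your proposal is correct and follows essentially the same approach as the paper: the explicit identification $A = \{Z : Z \preceq_{\mathfrak D} X\}$ (resp.\ $A^\complement = \{Z : X \preceq_{\mathfrak D} Z\}$), the comonotone additivity of quantiles for the stability claims, the scaling argument for radial boundedness, and the rewriting of $\f(Y)$ via $Y/m \preceq_{\mathfrak D} X \Leftrightarrow Y \preceq_{\mathfrak D} mX$ all match the paper's argument. The only minor difference is in obtaining the $\sup$-representations: the paper invokes star-shapedness of $A$ (established in a preceding lemma) and the identity $\f = \cog_{A^\complement}$, whereas you argue directly that $\{m>0 : Y \preceq_{\mathfrak D} mX\}$ is an upper interval (resp.\ $\{m>0 : mX \preceq_{\mathfrak D} Y\}$ is a lower interval), so $\inf$ of the set equals $\sup$ of its complement---this is exactly the content of the cogauge identity, so the two routes are equivalent.
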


\begin{proof}
Before proceeding, notice that $A$ is necessarily star-shaped.

For item (i), let $X$ be the greatest element of $A$. First, note that the quantile function is comonotone additive, in the sense that $F^{-1}_{Y+Z} = F^{-1}_Y + F^{-1}_Z$ whenever $(Y,Z)$ is a comonotone pair --- see Lemma~4.90 in \cite{follmer02}. Hence, for any $W$ that is a convex combination of some comonotone pair $Y,Z \in A$, it follows that
\begin{align*}
F^{-1}_{W} (u) - F^{-1}_{W}(v)
&= F^{-1}_{\lambda Z + (1- \lambda )Y} (u) - F^{-1}_{\lambda Z + (1- \lambda )Y}(v) 
\\ &= \lambda F^{-1}_{Z} + (1- \lambda )F^{-1}_{Y} (u) - \lambda F^{-1}_{Z} + (1- \lambda )F^{-1}_{Y}(v)
\\ &= \lambda \big(F^{-1}_{Z} (u) -F^{-1}_{Z}(v)\big) + (1-\lambda) \big(F^{-1}_{Y} (u) - F^{-1}_{Y}(v)\big)
\\ &\leq \max \left(F^{-1}_{Z} (u) - F^{-1}_{Z}(v), F^{-1}_{Y} (u) - F^{-1}_{Y}(v) \right)
\\ &\leq F^{-1}_{X} (u) - F^{-1}_{X}(v),
\end{align*}
for all $0<v<u<1$, which shows that $W \in A$.

To see that $A$ is radially bounded at non-constants, note that one has $F^{-1}_{Y} (u) - F^{-1}_{Y}(v) = 0 $ for all $0<v<u<1$ if and only if $Y$ is constant. Hence, for a non-constant $Y$, there is some $u$ and $v$ with $u>v$ such that $c \coloneqq F^{-1}_{Y} (u) - F^{-1}_{Y}(v) > 0 $. Also, we have that $\lambda c =F^{-1}_{\lambda Y} (u) - F^{-1}_{\lambda Y}(v) $ for any $\lambda >0 $. Therefore, as $ k \coloneqq F^{-1}_{X} (u) - F^{-1}_{X}(v) \geq F^{-1}_{Y} (u) - F^{-1}_{Y}(v)$, it is obvious that one can find a $\gamma$ such that for any $\lambda \geq \gamma$ the inequality $\lambda c > k$ holds. This implies that $ \lambda Y \preceq X $ never holds, and hence --- as $X$ is the greatest element of $A$ --- we must have $\lambda Y \notin A$. As $Y \in A$ was arbitrary, it follows that $A$ is radially bounded at non-constants.

For the stated representations, note that $Y \in A$ if and only if $Y \preceq X$. Therefore, the following holds
\begin{align*}
\f (Y) &= \inf \{m \in \R_+^* \colon\, m^{-1} Y \in A \} 
\\ &= \inf \{m \in \R_+^* \colon\, Y \preceq m X \} 
\\ &= \inf \{m \in \R_+^* \colon\, F^{-1}_{Y} (u) - F^{-1}_{Y}(v) \leq m\big(F^{-1}_{X} (u) - F^{-1}_{X}(v)\big),\, \forall \; 0<v<u<1 \} .
\end{align*}
Furthermore, remember that $0 \in A $ and that, if $A$ is stable under convex combinations of comonotone pairs, then $A$ is star-shaped (see \Cref{coro como}). Hence we have, by \Cref{coro cogauge}, that $\f = \cog_{A^\complement}$ and so
\begin{align*}
\f (Y) &= \cog_{A^\complement}(Y)
\\ &= \sup \{m \in \R_+^* \colon\, m^{-1} Y \in A^\complement \} 
\\ &= \sup \{m \in \R_+^* \colon\, m^{-1} Y \notin A \} 
\\ &= \sup \{m \in \R_+^* \colon\, Y \preceq m X \text{ does not hold} \} 
\\ &= \sup \{m \in \R_+^* \colon\, F^{-1}_{Y} (u) - F^{-1}_{Y}(v) > m\big(F^{-1}_{X} (u) - F^{-1}_{X}(v)\big), \text{ for some } 0<v<u<1\} .
\end{align*}

For the second item, let $X$ be the least element of $A^\complement$. First, we shall show that $A^\complement$ is monotone with respect to the dispersive order of distributions: let $Y \in A^\complement$ and $Y \preceq Z$. Suppose, by contradiction, that $Z \in A$. Then, as $A$ is anti-monotone, we should have $Y \in A$, an absurd. Hence, $Z \in A^\complement$.
Now, notice that for any $W$ that is a convex combination of some comonotone pair $Y,Z \in A^\complement$, i.e.,\ $W = \lambda Z + (1-\lambda)Y$ for some $\lambda$ in the unit interval, the following holds for all $0<v<u<1$: 
\begin{align*}
F^{-1}_{W} (u) - F^{-1}_{W}(v)
&= F^{-1}_{\lambda Z + (1- \lambda )Y} (u) - F^{-1}_{\lambda Z + (1- \lambda )Y}(v) 
\\ &= \lambda F^{-1}_{Z} + (1- \lambda )F^{-1}_{Y} (u) - \lambda F^{-1}_{Z} + (1- \lambda )F^{-1}_{Y}(v)
\\ &= \lambda (F^{-1}_{Z} (u) -F^{-1}_{Z}(v)) + (1-\lambda) (F^{-1}_{Y} (u) - F^{-1}_{Y}(v))
\\ &\geq \min \big(F^{-1}_{Z} (u) - F^{-1}_{Z}(v), F^{-1}_{Y} (u) - F^{-1}_{Y}(v) \big)
\\ &\geq F^{-1}_{X} (u) - F^{-1}_{X}(v).
\end{align*}
Therefore, as $A^\complement$ is monotone w.r.t.\ $\preceq_{\mathfrak{D}}$, we have $W \in A^\complement$. 

Finally, for the stated representations note that $Y \in A^\complement$ if and only if $X \preceq Y$. Therefore, the following holds,
\begin{align*}
\f (Y) &= \inf \{m \in \R_+^* \colon\, m^{-1} Y \in A \} 
\\ &= \inf \{m \in \R_+^* \colon\, m^{-1} Y \notin A^\complement \} 
\\ &= \inf \{m \in \R_+^* \colon\, mX \preceq Y \text{ does not holds} \} 
\\ &= \inf \{m \in \R_+^* \colon\, F^{-1}_{Y} (u) - F^{-1}_{Y}(v) < m\big(F^{-1}_{X} (u) - F^{-1}_{X}(v)\big), \text{ for some } 0<v<u<1\} .
\end{align*}
For the second representation, as $A$ is star-shaped, we are once again allowed to summon the cogauge in order to obtain
\begin{align*}
\f (Y) &= \cog_{A^\complement}(Y)
\\ &= \sup \{m \in \R_+^* \colon\, m^{-1} Y \in A^\complement \} 
\\ &= \sup \{m \in \R_+^* \colon\, mX \preceq Y\} 
\\ &= \sup \{m \in \R_+^* \colon\, F^{-1}_{Y} (u) - F^{-1}_{Y}(v) \geq m(F^{-1}_{X} (u) - F^{-1}_{X}(v)), \forall \, 0<v<u<1\} .
\end{align*}
This completes the proof.
\end{proof}

\subsection{Dual Representation}

We now define and explore a very important concept regarding duality in convex analysis, namely the \emph{polar} of a set. This concept is of particular interest to us as it is the subgradient of the \pl\ at $0$.

\begin{defi} For a dual pair $\langle \X, \X' \rangle$, the \defin{polar $A^\odot$} of a non-empty set $A\subseteq\X$ is defined through
\[
A^{\odot} \coloneqq \{X' \in \X'\colon\, {\sup\nolimits_{X\in A}\left\langle X, X' \right\rangle \leq 1} \},
\]
and the \defin{bipolar} of $A$ is the set given by
\[
A^{\odot\odot}\coloneqq \big\{X\in\X\colon\, \sup\nolimits_{X'\in A^\odot}\langle X,X'\rangle\leq1 \big\}.
\]
\end{defi}

\begin{remark}
Notice that the bipolar is always defined with the dual pair $\langle \X, \X'\rangle$ in mind, which forces the inclusion $A^{\odot\odot}\subseteq \X$. If instead one had the bidual $\X''$ in mind (or, which is the same, the dual pair $\langle \X', \X''\rangle$), it would then be natural to define $(A^{\odot})^\odot\coloneqq \big\{X'' \in\X'' \colon\, \sup\nolimits_{X'\in A^\odot}\langle X',X''\rangle\leq1 \big\}$. In this case, however, unfortunately one may have
$A^{\odot\odot}\neq(A^\odot)^\odot$. This is a detail that is frequently overlooked in the literature, although it has important consequences: for instance, see the \emph{Bipolar Theorem} (\cref{bipolar-theorem} in \Cref{polar operations}), and also \Cref{bipolar-counterexample}.
\end{remark}

\begin{lemma} \label{polar operations} Given a dual pair $\langle \X,\X' \rangle$, let $A,B, \{A_i\}_{i \in I}$ be subsets of $\X$:
\begin{enumerate}[label = (\roman*)]
\item\label{polar operations item inclusion} If $A \subseteq B$, then $ B^\odot \subseteq A^\odot$.
\item\label{polar item 2} $(\lambda A)^\odot = \lambda^{-1} A^\odot$ for each $\lambda\neq0$.
\item\label{polar operations item union intersection} $ \cap A_i^\odot = (\cup A_i)^\odot$.
\item\label{polar item 4} $A^\odot$ is nonempty, convex, weakly$^*$-closed and contains 0.

\item\label{polar operations item 3} If $ A$ is absorbing, then $A^\odot$ is weak{ly}*-bounded, i.e.,\ the set $\{\left\langle XX' \right\rangle \colon\, X \in A \} $ is bounded in $\R$, for every $X' \in \X'$.

\item\label{bipolar-theorem} The bipolar $A^{\odot \odot}$ is the convex, weak-closed hull of $ A \cup \{0\} $.

\item \label{polar do cone}
If $A$ is a cone, then $A^\odot = \{X'\in\X'\colon\, \langle X,X'\rangle \leq 0, \forall \; X \in A \} $.
\item \label{lemma radial}If $A$ is star-shaped and stable under scalar addition, then $\left\langle 1 , X' \right\rangle = 0$ for all $X' \in A^\odot$. 
\end{enumerate}
\end{lemma}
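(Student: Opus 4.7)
The plan is to handle the eight items sequentially. Items (i)--(iv), (vii) and (viii) follow directly from the definition $A^\odot = \{X'\in\X'\colon\,\langle X,X'\rangle\leq 1 \text{ for all } X\in A\}$ by elementary manipulation, item (v) uses absorbency concretely, and item (vi) --- the Bipolar Theorem --- is the main substantive step, requiring a separation argument.

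For (i), a smaller set $A\subseteq B$ gives a more permissive supremum bound, so $B^\odot\subseteq A^\odot$. Item (ii) is a change of variables: $\sup_{X\in\lambda A}\langle X, X'\rangle = \sup_{Y\in A}\langle Y,\lambda X'\rangle$. Item (iii) rephrases the trivial equivalence $\sup_{X\in\bigcup_i A_i}\langle X,X'\rangle\leq 1 \iff \sup_{X\in A_i}\langle X,X'\rangle\leq 1$ for every $i$. For (iv), write $A^\odot = \bigcap_{X\in A} H_X$ with $H_X\coloneqq\{X'\in\X'\colon\,\langle X,X'\rangle\leq 1\}$; each $H_X$ is convex, weak$^*$-closed and contains $0$, hence so does the intersection (in particular $A^\odot\ne\varnothing$). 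For (v), absorbency gives for each $X\in\X$ some $\delta_X>0$ with $\pm\delta_X X\in A$, so $|\langle X,X'\rangle|\leq 1/\delta_X$ uniformly over $X'\in A^\odot$, which is exactly weak$^*$-boundedness. For (vii), if $A$ is a cone and $X'\in A^\odot$, then for each $X\in A$ the inequality $\lambda\langle X,X'\rangle\leq 1$ must hold for all $\lambda\geq 0$, forcing $\langle X,X'\rangle\leq 0$; the reverse inclusion is immediate. For (viii), star-shapedness of a non-empty $A$ forces $0\in A$ (taking $\lambda=0$), and stability under scalar addition then gives $\R\subseteq A$; hence $c\langle 1,X'\rangle = \langle c,X'\rangle\leq 1$ for every $c\in\R$ and $X'\in A^\odot$, which is compatible only with $\langle 1,X'\rangle = 0$.

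The main obstacle is (vi). The easy inclusion $\clconv(A\cup\{0\})\subseteq A^{\odot\odot}$ follows from applying item (iv) to $A^\odot$: since $A^{\odot\odot}$ is convex, weak-closed in $\X$ and contains both $A$ and $0$, it contains the smallest such set. The reverse inclusion requires Hahn--Banach separation in the weak topology $\sigma(\X,\X')$: given $X_0\notin\clconv(A\cup\{0\})$, the latter being a weak-closed convex subset of $\X$, one strictly separates $\{X_0\}$ from it by a $\sigma(\X,\X')$-continuous linear functional. Such a functional is necessarily of the form $X\mapsto\langle X,X'\rangle$ for some $X'\in\X'$, and after an appropriate rescaling one obtains $\sup_{Y\in\clconv(A\cup\{0\})}\langle Y,X'\rangle < 1 < \langle X_0,X'\rangle$. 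This gives $X'\in A^\odot$ while $X_0\notin A^{\odot\odot}$, as required. The delicate point, flagged in the remark preceding the lemma, is that the separating functional must come from $\X'$ and not from $\X''$; this is exactly what the weak topology $\sigma(\X,\X')$ guarantees, but only because the hull is taken with respect to that very topology (hence "weak-closed hull" in the statement).
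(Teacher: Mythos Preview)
Your proof is correct and self-contained; the paper, by contrast, simply cites Lemma~5.102 and Theorem~5.103 of Aliprantis--Border for items~(i)--(vi), refers to a later proposition for the argument in~(vii), and only writes out~(viii) explicitly. Your direct treatment of the Bipolar Theorem via Hahn--Banach separation in $\sigma(\X,\X')$ is the standard one underlying the cited reference, and your observation that the separating functional lands in $\X'$ precisely because the hull is taken in the weak topology is exactly the point the paper emphasizes in its remarks. For item~(viii) your argument is a slight streamlining of the paper's: the paper uses $\langle X+c,X'\rangle\leq 1$ for $X\in A$ and $c\in\R$, whereas you note that $\R\subseteq A$ already suffices, since $\langle c,X'\rangle\leq 1$ for all $c\in\R$ forces $\langle 1,X'\rangle=0$. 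Both are fine; yours is marginally shorter.
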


\begin{proof}
For \cref{polar operations item inclusion,polar item 2,polar operations item union intersection,polar item 4,polar operations item 3,bipolar-theorem}, see Lemma 5.102 and Theorem 5.103 of \cite{aliprantis06}. \Cref{polar do cone} follows from an argument similar to the proof that $B_0 = B_0^*$ in \Cref{polar alter}.
For \cref{lemma radial}, let $X'\in A^\odot$. Then --- as $\R\subseteq A$ and $A+\R\subseteq A$ by assumption --- we have, for any $X\in A$ and $c\in\R$,
\[
\langle X, X'\rangle + c \langle1,X'\rangle = \langle X + c, X'\rangle \leq 1
\]
and, as $c$ is arbitrary, it is necessarily true that $\left\langle 1 , X' \right\rangle = 0$.
\end{proof}

\begin{remark}
\Cref{bipolar-theorem} above is the famous \emph{Bipolar Theorem}, which states, in other words, that if $A$ is closed, convex and contains zero, then $A=A^{\odot\odot}$. It is important to have in mind that $A^{\odot\odot} \subseteq \X$ \emph{by definition}. The (counter)\cref{bipolar-counterexample} provides reasoning for the bipolar to be defined in $\X$ and not in $\X''$.
\end{remark}

A well-known result in convex analysis is the duality associating the \pl\ of a set $A$ with the support function of its polar
\(
h_{A^\odot} (X) \coloneqq \sup\nolimits_{X' \in A^{\odot}} \left\langle X, X' \right\rangle,
\)
$X\in\X$. This is related to the convex biconjugate of the Fenchel-Moreau Theorem (when $\X$ is a locally convex topological space) via the conjugate and biconjugate functions (the latter is also called the \emph{penalty function} in the jargon of convex risk measures). If the \pl\ is a proper, convex and weakly lower-semicontinous functional, then the penalty is precisely the characteristic function\footnote{The characteristic function of the polar assumes $0$ if $X \in A^\odot$ and $\infty$ otherwise.} of the polar. Below we present this duality result for topological vector spaces, without relying on the Frenchel-Moreau Theorem. 

\begin{defi} The \defin{support function} $h_{A^\odot}\colon\X \rightarrow \ra$ on the polar $A^\odot$ is defined, for $X\in\X$, as
\[
h_{A^\odot} (X) \coloneqq \sup \{\left\langle X, X' \right\rangle\colon\, X' \in A^{\odot} \} .
\]
\end{defi}

\begin{propo}[Dual representation]\label{dual-representation}
Let $A $ be a closed, convex set such that $0 \in A$. Then we have the identity
\begin{equation}\label{eq:dual representation}
\f (X) = h_{A^\odot} (X)
\end{equation}
for all $X\in \X$.
\end{propo}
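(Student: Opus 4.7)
The plan is to establish the equality via two inequalities, with the Bipolar Theorem (item \cref{bipolar-theorem} of \Cref{polar operations}) as the crucial ingredient. Both functionals are positive homogeneous, so the case $X = 0$ is trivial, and one can focus on $X \neq 0$.

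For the inequality $h_{A^\odot}(X) \le \f(X)$, I would proceed as follows. The statement is vacuous when $\f(X) = \infty$, so assume it is finite and fix $m > \f(X)$. Since $A$ is closed, convex with $0 \in A$, it is in particular star-shaped and closed, so by \Cref{lemma-item4} we have $A = \Acc{1}{\f}$; in particular $X/m \in A$ whenever $m > \f(X)$. Then, for any $X' \in A^\odot$, the defining inequality $\langle X/m, X'\rangle \le 1$ rearranges to $\langle X, X'\rangle \le m$. Taking the supremum over $X' \in A^\odot$ gives $h_{A^\odot}(X) \le m$, and letting $m \searrow \f(X)$ yields $h_{A^\odot}(X) \le \f(X)$.

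For the reverse inequality $\f(X) \le h_{A^\odot}(X)$, I would appeal to the Bipolar Theorem: since $A$ is closed, convex and contains the origin, \cref{bipolar-theorem} in \Cref{polar operations} gives $A = A^{\odot\odot}$. The inequality is trivial when $h_{A^\odot}(X) = \infty$, so suppose it is finite and fix $t > h_{A^\odot}(X)$. Then $\langle X, X'\rangle \le t$ for every $X' \in A^\odot$, which is equivalent to $\langle X/t, X'\rangle \le 1$ for every $X' \in A^\odot$; that is, $X/t \in A^{\odot\odot} = A$. Consequently $\f(X) \le t$, and letting $t \searrow h_{A^\odot}(X)$ concludes the proof.

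The main obstacle is to make sure the Bipolar Theorem as quoted is applicable, since in a general Hausdorff topological vector space one must distinguish between ``closed'' and ``weakly closed''. Here \cref{bipolar-theorem} is phrased in terms of the weak closure, so to invoke $A = A^{\odot\odot}$ one needs $A$ to be weakly closed and convex containing $0$; in the locally convex setting of interest (and under the tacit assumption made throughout the paper that the topology on $\X$ cooperates with the dual pair $\langle \X,\X'\rangle$), closedness of the convex set $A$ upgrades to weak closedness by Mazur's theorem, and the argument proceeds unhindered.
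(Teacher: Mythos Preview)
Your proof is correct. Both your argument and the paper's hinge on the Bipolar Theorem (\cref{bipolar-theorem} in \Cref{polar operations}) to identify $A$ with $A^{\odot\odot}$, so the core idea is the same. The execution differs: the paper observes that $h_{A^\odot}$, being sub-linear, equals the Minkowski gauge of its own sub-level set $\Acc{1}{h_{A^\odot}}$ (via \Cref{lemma-item4}), then checks directly that $\Acc{1}{h_{A^\odot}} = A^{\odot\odot} = A$, concluding $h_{A^\odot} = \D_{A^{\odot\odot}} = \f$ in one stroke. You instead argue the two inequalities separately and elementarily, unpacking the definitions of the polar and the gauge without invoking the gauge representation of $h_{A^\odot}$. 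Your route is slightly more hands-on but avoids the extra layer of quoting \Cref{lemma-item4} for $h_{A^\odot}$; the paper's route is more structural, treating the equality as an identity of gauges of the same set. Your closing remark about ``closed'' versus ``weakly closed'' is well taken and in fact applies equally to the paper's own proof, which invokes $A = A^{\odot\odot}$ without comment.
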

\begin{proof}
For simplicity, let us write $h\coloneqq h_{A^\odot}$. First of all, notice that $ h\colon \X \rightarrow \ra$ is a lower-semicontinuous, sub-linear function. Then, by \Cref{lemma-item4}, we have
\[ h = \D_{\Acc{1}{h}}.\]
Therefore, it is enough to show that $\Acc{1}{h} = A$. Note that 
\begin{align*}
\Acc{1}{h} &= \{X \in \X \colon\, h(X) \leq 1 \} 
\\ &= \big\{X \in \X \colon\, \sup\nolimits_{X' \in A^\odot} \left\langle X,X' \right\rangle \leq 1 \big\}
\\ &= \big\{X \in \X \colon\, \left\langle X,X' \right\rangle \leq 1 \text{ for all } \X' \in A^\odot \big\}
\\ &= A^{\odot \odot}.
\end{align*}
Lastly, the Bipolar Theorem (\cref{bipolar-theorem} in \Cref{polar operations}) entails $A = A^{\odot \odot}$. Hence,
\[ h = \D_{\Acc{1}{h}} = \D_{A^{\odot \odot}} = \f\]
as claimed.
\end{proof}

\begin{remark}
The equality in the proposition above holds even if $A$ is empty, as in this case $\f \equiv \infty$ and $A^\odot = \X'$, so $ h_{A^\odot} \equiv \infty$. It is also interesting to remember that if $A = \X$, then $\f \equiv 0$, $A^\odot = \{0\} $ and $ h_{A^\odot} \equiv 0$. Furthermore, note that ${A^\odot}=\partial\f(0)\coloneqq $ ``the set of sub-gradients of $\f$ at $0$''.
\end{remark}

\begin{remark}
By the Bipolar Theorem, for a closed, star-shaped set $A$, we have that $A^{\odot \odot} = \conv A$. Additionally, \Cref{dual-representation} above tells us that $h_{A^\odot} = \D_{A^{\odot \odot}} = \D_{\conv A}$. However, by \Cref{conv envelop}, as $A$ is closed, $\D_{\conv A} = \conv \f$. Therefore, we have the following representation for the support function $h_{A^\odot}$ in terms of the convex envelope of $\f$:
\begin{equation}\label{dual conv envelop}
h_{A^\odot}(X) = \conv \f(X),\qquad X\in\X.
\end{equation}
Note that, as the convex hull operation preserves many properties, such as stability under scalar addition, radial boundedness at non-constants and law invariance, and since $A \subseteq \conv A$, we have that $h_{A^\odot}$ is translation insensible, non-negative, law invariant and lower range dominated if $A$ satisfies each of those properties respectively.
\end{remark}

An important result in the literature of risk and deviation measures is the \emph{dual representation} for convex deviation measures, \Cref{dual} below. We highlight that the result in \Cref{dual-representation} is an intermediary step in the proof of the following theorem.

\begin{theorem}[\cite{rockafellar06}, Theorem 1]\label{dual}
A given functional $D\colon L^2\to{\R_+}\cup\{+\infty\}$ is a {lower-semicontinuous generalized deviation measure} if and only if it it has a representation of the form
\begin{equation}\label{eq:dual-rockafellar06}
D(X) = \E X - \inf_{Q \in \mathcal{Q}} \E[XQ],\quad X\in L^2
\end{equation}
in terms of a convex envelope $\mathcal{Q} \subseteq L^2$ satisfying the following:
\begin{enumerate}[noitemsep, topsep=0pt, label = (\emph{Q\arabic*})]
\item $\mathcal{Q}$ is non-empty, closed and convex;
\item for each non-constant $X$ there is a $Q\in\mathcal{Q}$ for which $\E(XQ)<\E X$;
\item $\E Q = 1$ for all $Q\in\mathcal{Q}$.
\end{enumerate}
Additionally, the set $\mathcal{Q}$ above is uniquely determined by $D$ through
\[
\mathcal{Q} = \{Q \in L^2 \colon D(X) \geq \E X - \E[XQ] \text{ for all } X \} ,
\]
and the finiteness of $D$ is equivalent to boundedness of $\mathcal{Q}$. Furthermore, $D$ is lower-range dominated if and only if $\mathcal{Q}$ has the additional property that
\begin{enumerate}[noitemsep, topsep=0pt, label = (\emph{Q\arabic*})]
\setcounter{enumi}{3}
\item $Q\geq 0$ for all $Q\in\mathcal{Q}$.
\end{enumerate}
\end{theorem}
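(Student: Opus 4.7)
The plan is to reduce the theorem to the dual representation already established in \Cref{dual-representation}. Take $\X = L^2$, so that $\X' \cong L^2$ with pairing $\langle X, X'\rangle = \E[XX']$. Given a lower-semicontinuous generalized deviation measure $D$, form $A \coloneqq \Acc{1}{D}$. The key observation is that $A$ satisfies the hypotheses of \Cref{dual-representation}: it is non-empty (contains $0$, since $D(0)=0$), closed (lower-semicontinuity of $D$), convex (convexity of $D$), and moreover stable under scalar addition by translation insensitivity. By \Cref{lemma-item4}, $D = \D_A$, and then \Cref{dual-representation} gives
\[
D(X) = h_{A^\odot}(X) = \sup_{X' \in A^\odot} \E[XX'], \qquad X\in L^2.
\]

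The next step is to rewrite this supremum in Rockafellar's form via the affine bijection $X' \mapsto Q \coloneqq 1 - X'$. Set $\mathcal{Q}\coloneqq \{1 - X' : X' \in A^\odot\}$. Since $A$ is stable under scalar addition and star-shaped, \cref{lemma radial} of \Cref{polar operations} forces $\langle 1, X'\rangle = \E X' = 0$ for every $X'\in A^\odot$, hence $\E Q = 1$ for all $Q\in\mathcal{Q}$, which is property (Q3). Property (Q1) is immediate from \cref{polar item 4} of \Cref{polar operations} (convexity, weak$^*$-closedness, non-emptiness of $A^\odot$ transfer through the affine map). Property (Q2) follows from non-negativity of $D$: for non-constant $X$, $D(X) > 0$ means there exists some $X' \in A^\odot$ with $\E[XX'] > 0$, i.e.\ $\E[XQ] < \E X$ for the corresponding $Q$. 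With (Q3) in hand, the rewriting
\[
D(X) = \sup_{X'\in A^\odot}\E[XX'] = \sup_{Q\in \mathcal{Q}}\E[X(1-Q)] = \E X - \inf_{Q\in\mathcal{Q}} \E[XQ]
\]
is immediate, which is precisely \cref{eq:dual-rockafellar06}.

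The converse direction is a direct verification: given $\mathcal{Q}$ satisfying (Q1)--(Q3), the functional $D(X) \coloneqq \sup_{Q\in\mathcal{Q}}\E[X(1-Q)]$ is a supremum of continuous linear functionals, hence convex, positive homogeneous and lower-semicontinuous; (Q3) yields translation insensitivity and $D(c)=0$ for constants, while (Q2) yields strict positivity at non-constants. Uniqueness of $\mathcal{Q}$ follows from the Bipolar Theorem (\cref{bipolar-theorem} of \Cref{polar operations}), which gives $A^{\odot\odot} = A$ and hence shows that $A^\odot$ (and thus $\mathcal{Q}$) is recoverable from $D$; the explicit formula $\mathcal{Q} = \{Q : D(X) \ge \E X - \E[XQ]\ \forall X\}$ is then just the definition of the polar transported through $Q = 1 - X'$.

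For the remaining claims: if $D$ is finite-valued, then $A = \Acc{1}{D}$ is absorbing (for $X\neq 0$, $X/(D(X)+1)\in A$), so by \cref{polar operations item 3} of \Cref{polar operations} the polar $A^\odot$ is weak$^*$-bounded, hence $\mathcal{Q}$ is bounded; the converse uses Cauchy--Schwarz. For lower-range dominance, observe that $\LR(X) = \E X - \essinf X = \sup\{\E[X(1-Q)] : Q\in L^2,\, Q\ge 0,\, \E Q = 1\}$, so $\LR$ itself admits a representation of the form \cref{eq:dual-rockafellar06} with envelope $\mathcal{Q}_{\LR} \coloneqq \{Q\in L^2 : Q\ge 0,\ \E Q = 1\}$. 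Then $D\le \LR$ is equivalent (by \cref{lemma-item set order} of \Cref{lemma} and the uniqueness just established, or equivalently by $\Acc{1}{\LR}\subseteq \Acc{1}{D}$ as in \Cref{lrdominance}) to $\mathcal{Q}\subseteq \mathcal{Q}_{\LR}$, which is (Q4). The principal obstacle is the lower-range dominance direction, where one must carefully identify the envelope of $\LR$ and invoke the uniqueness of the correspondence $D \leftrightarrow \mathcal{Q}$; the rest is a bookkeeping translation through the affine change of variables $Q = 1 - X'$.
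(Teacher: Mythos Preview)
The paper does not prove this theorem at all: it is stated verbatim as \cite{rockafellar06}, Theorem~1, with no proof supplied; the only remark is that \Cref{dual-representation} ``is an intermediary step in the proof of the following theorem.'' Your proposal is therefore not a comparison case but rather a proof the paper omits, and it is essentially correct. You do exactly what the paper hints at: take $A=\Acc{1}{D}$, apply \Cref{dual-representation} to get $D=h_{A^\odot}$, and then pass to the Rockafellar envelope via the affine change $Q=1-X'$, using \cref{lemma radial} of \Cref{polar operations} for (Q3), \cref{polar item 4} for (Q1), non-negativity for (Q2), and \cref{polar operations item 3} together with reflexivity of $L^2$ for the finiteness--boundedness equivalence.

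Two places are compressed and worth tightening. For uniqueness, you should spell out that any $Q$ in the formula set automatically satisfies $\E Q=1$ (test against constants $X=t\in\R$), and that any competing $\mathcal{Q}'$ with (Q1)--(Q3) producing the same $D$ must coincide with the formula set by the Bipolar Theorem applied to $1-\mathcal{Q}'$; as written you invoke the Bipolar Theorem only for $A$, which gives existence of a canonical $\mathcal{Q}$ but not yet uniqueness among all representing envelopes. For lower-range dominance, your identification of $\mathcal{Q}_{\LR}=\{Q\in L^2:Q\ge0,\ \E Q=1\}$ is right, but the step ``$D\le\LR\Rightarrow\mathcal{Q}\subseteq\mathcal{Q}_{\LR}$'' should be justified explicitly via the uniqueness formula (if $Q\in\mathcal{Q}$ then $\E[X(1-Q)]\le D(X)\le\LR(X)$ for all $X$, hence $Q$ lies in the formula set for $\LR$, which equals $\mathcal{Q}_{\LR}$). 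With these two clarifications your argument is complete.
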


With regard to our framework, we have the following correspondences for the dual representation in the generalized and law invariant cases.

\begin{coro}
Let $A\subseteq\X$. Suppose $A$ is convex, radially bounded, stable under scalar addition and contains the origin. Then $\f$ is a generalized deviation measure, and admits the dual representation
\[
\f (X) = \E X - \inf_{Q\in \mathcal{Q}} \E[XQ] = \sup_{X'\in A^\odot} \left\langle X, X' \right\rangle = h_{A^\odot} (X) ,\qquad X\in\X,
\]
where $\mathcal{Q} = 1 - A^\odot$. Furthermore, if $\Acc{1}{\LR} \subseteq A$, then $X^\prime\leq 1$ for any $X^\prime\in A^\odot$.
\end{coro}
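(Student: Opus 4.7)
The plan is to reduce the corollary to previously established machinery in three moves. First, I would invoke \Cref{generalized dev}: the assumptions of convexity, stability under scalar addition, radial boundedness (at non-constants, since the two conditions are otherwise incompatible) and $0\in A$ are precisely what is needed to conclude that $\f$ is a generalized deviation measure. Assuming further that $A$ is closed (as is implicit for the dual representation), \Cref{dual-representation} delivers
\[
\f(X) = h_{A^\odot}(X) = \sup_{X'\in A^\odot}\langle X, X'\rangle.
\]

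Second, I would translate the support-function representation into the expectation-based Rockafellar form. The key bridge is \cref{lemma radial} of \Cref{polar operations}: since $A$ is star-shaped (coming from $0\in A$ and convexity) and stable under scalar addition, one has $\langle 1, X'\rangle = 0$ for every $X'\in A^\odot$, which under the standard $L^p$--$L^q$ identification $\langle Y, X'\rangle = \E[YX']$ reads $\E X' = 0$. Setting $Q\coloneqq 1-X'$, the map $X'\mapsto Q$ is a bijection between $A^\odot$ and $\mathcal{Q} = 1-A^\odot$, and $\E[XQ] = \E X - \langle X, X'\rangle$. Taking the supremum over $X'\in A^\odot$ (equivalently, the infimum over $Q\in\mathcal{Q}$) produces the chain of equalities in the statement.

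Third, for the ``furthermore'', I would test $X'\in A^\odot$ against carefully chosen members of $\Acc{1}{\LR}\subseteq A$. For any measurable $E$ with $0<\Pro(E)<1$, the variable $X_E\coloneqq \mathbf{1}_E/\Pro(E) - 1$ satisfies $\E X_E = 0$ and $\essinf X_E = -1$, hence $\LR(X_E)=1$ and $X_E\in \Acc{1}{\LR}\subseteq A$. This forces $\langle X_E, X'\rangle \leq 1$, which, after using $\E X' = 0$, reduces to $\E[X'\mathbf{1}_E]\leq \Pro(E)$, i.e.\ $\E[(X'-1)\mathbf{1}_E]\leq 0$. Applying this with $E_n = \{X' > 1 + 1/n\}$ contradicts $\Pro(E_n)>0$ (the degenerate case $\Pro(E_n)=1$ is ruled out by $\E X' = 0$), so $\Pro(X'>1)=0$ and $X'\leq 1$ almost surely. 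The main conceptual hurdle is the translation between the two forms of the representation; once the identity $\E X' = 0$ is in hand via \cref{lemma radial}, the remaining algebra and the truncation argument for the furthermore are routine.
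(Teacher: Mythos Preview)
Your proposal is correct and matches the paper's intended argument: the corollary is stated in the paper without proof, positioned immediately after \Cref{dual-representation} and \Cref{dual}, so the intended justification is exactly the combination you invoke (\Cref{generalized dev} for the generalized-deviation-measure claim, \Cref{dual-representation} for $\f = h_{A^\odot}$, and \cref{lemma radial} of \Cref{polar operations} to pass to the Rockafellar form via $\mathcal{Q}=1-A^\odot$). Your observation that ``radially bounded'' must be read as ``radially bounded at non-constants'' and that closedness of $A$ is implicitly needed is also apt.

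The one place you take a genuinely different route is the ``furthermore''. The paper's placement strongly suggests the intended argument is: $\Acc{1}{\LR}\subseteq A$ gives lower-range dominance of $\f$ by \Cref{lrdominance}, and then property (Q4) in \Cref{dual} forces $Q\geq 0$ for all $Q\in\mathcal{Q}$, i.e.\ $X'\leq 1$. Your approach instead tests $X'\in A^\odot$ directly against the explicit witnesses $X_E=\mathbf{1}_E/\Pro(E)-1\in\Acc{1}{\LR}$, which is more elementary and, in particular, does not lean on \Cref{dual} (which is formulated for $L^2$). Both arguments are short; yours is self-contained, while the paper's is a one-line appeal to Rockafellar's characterization.
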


\begin{propo}\label{quantil rep}
Assume $(\Omega,\mathfrak{F},\Pro)$ is 
an atomless probability space, and put $\X \coloneqq L^p$ ($p\in[1,\infty]$). Let, moreover, $B$ denote a law invariant, closed, radially bounded, convex subset of $\X$ containing the origin, and define $A \coloneqq B + \R$. Then $\f $ is a law invariant, lower semicontinuous generalized deviation measure, and the following representation holds, for all $X\in\X$:
\[
\f (X)
= \sup_{X' \in A^\odot} \int_0^1 F^{-1}_X (t) F^{-1}_{X'} (t) \mathrm{d}t\\
= \sup_{\psi \in \Lambda} \int_0^1 \psi (t) F^{-1}_X (t) \mathrm{d}t
= \sup_{g \in G} \int_0^1 g(t) F^{-1}_X (\mathrm{d}t),
\]
where $\Lambda $ is a collection of nondecreasing functions $\psi \in \Lq[0,1] $ such that $\int_0^1 \psi (t)\, \dd t =0$, and where
$G$ is a collection of positive concave functions $g\colon [0,1] \rightarrow \R$. If in addition $B^\complement$ is convex for comonotone pairs, then $\f$ is also comonotone additive, and for each $X\in\X$ the supremma in the above representations is attained for some $X^\prime\in A^\odot$, $\psi\in\Lambda$, $g\in G$.
\end{propo}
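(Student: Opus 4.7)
The plan is to first establish the qualitative claims about $\f$ by combining the preceding propositions, and then derive the three representations by invoking \Cref{dual-representation} together with rearrangement arguments enabled by the atomless hypothesis. Set $A\coloneqq B+\R$. Convexity of $A$ is inherited from $B$ and $\R$, $0\in A$ is immediate, law invariance of $A$ follows from that of $B$, stability under scalar addition is immediate, and \Cref{A+R deviation} delivers radial boundedness at non-constants (since $B$ is star-shaped, closed, and radially bounded). Then \Cref{radially} and \Cref{generalized dev} yield that $\f$ is a generalized deviation measure, while \Cref{law} gives law invariance; lower semicontinuity will then follow \emph{a posteriori} from the dual representation, as a pointwise supremum of continuous linear functionals.

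For the first equality, after verifying closedness of $A = B+\R$ (a small technical step using closedness and radial boundedness of $B$ inside $L^p$), \Cref{dual-representation} gives $\f(X) = h_{A^\odot}(X) = \sup_{X'\in A^\odot}\E[XX']$. The crucial observation is that $A^\odot$ itself is law invariant: given $X'\in A^\odot$ and $Y' =_d X'$, for any $X\in A$ the atomless hypothesis permits constructing $\tilde X =_d X$ (hence $\tilde X \in A$ by law invariance of $A$) whose joint distribution with $X'$ matches that of $(X,Y')$, so $\E[XY'] = \E[\tilde X X']\leq 1$. Combining the Hardy--Littlewood inequality $\E[XX']\leq\int_0^1 F^{-1}_X(t) F^{-1}_{X'}(t)\,\dd t$ with the atomless-space construction of a comonotone rearrangement $\tilde X' =_d X'$ that stays inside $A^\odot$ (by the just-established law invariance) and realizes equality in Hardy--Littlewood yields the first equality.

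For the second equality, set $\psi = F^{-1}_{X'}$; this is a non-decreasing element of $L^q[0,1]$, and by \cref{lemma radial} of \Cref{polar operations} we have $\int_0^1 \psi\,\dd t = \E X' = \langle 1, X'\rangle = 0$, since $A$ is star-shaped and stable under scalar addition. Defining $\Lambda\coloneqq\{F^{-1}_{X'}\colon X'\in A^\odot\}$ then gives the $\psi$-representation. For the third equality, integration by parts produces
\[
\int_0^1 \psi(t) F^{-1}_X(t)\,\dd t = \int_0^1 g(t)\,F^{-1}_X(\dd t),
\]
where $g(t)\coloneqq -\int_0^t\psi(s)\,\dd s$; because $\int_0^1\psi = 0$ one has $g(0) = g(1) = 0$, and because $\psi$ is non-decreasing, $g$ is concave and non-negative on $[0,1]$. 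The boundary terms in the integration by parts vanish thanks to $g(0) = g(1) = 0$, and setting $G\coloneqq \{g\colon \psi\in\Lambda\}$ completes the third representation.

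Finally, under the added comonotone-convexity assumption on $B^\complement$, an adaptation of \Cref{coro como} combined with \Cref{A+R comonotonicidade} gives comonotone additivity of $\f=\D_{B+\R}$. Attainment of the suprema is the main obstacle: having the comonotone rearrangement $\tilde X'\in A^\odot$ immediately gives $\E[X\tilde X']=\int_0^1 F^{-1}_X F^{-1}_{X'}\,\dd t$, but the existence of a \emph{maximizing} element of $A^\odot$ must still be established. The plan is to argue by weak$^\ast$ compactness of $A^\odot$ --- which is convex and weak$^\ast$-closed by \Cref{polar operations} and bounded owing to finiteness of $\f$ guaranteed by \Cref{dual} --- together with weak$^\ast$ upper semicontinuity of $X'\mapsto\E[XX']$; this compactness/attainment step (reminiscent of Kusuoka-type arguments for law invariant comonotone additive functionals) is the most delicate part of the proof.
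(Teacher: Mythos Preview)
Your approach and the paper's diverge substantially. The paper's proof is almost entirely by citation: after establishing that $A=B+\R$ inherits convexity, closedness, law invariance, stability under scalar addition, and radial boundedness at non-constants (this part you both do, via essentially the same internal references), the paper simply invokes Propositions~2.1 and~2.2 of \cite{grechuk09} for the three quantile representations, and Proposition~2.4 of the same reference for the comonotone-additive case and the attainment of the suprema. There is no in-house derivation of Hardy--Littlewood, law invariance of the polar, or the integration-by-parts passage from $\psi$ to $g$.

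Your route is more self-contained: you rebuild the quantile representations from the paper's own dual representation \Cref{dual-representation} together with standard rearrangement machinery. The argument that $A^\odot$ is law invariant, the Hardy--Littlewood step, the identification $\psi=F^{-1}_{X'}$ with $\int_0^1\psi=0$ via \cref{lemma radial} of \Cref{polar operations}, and the concave $g(t)=-\int_0^t\psi$ obtained by integration by parts are all correct and constitute exactly the content hidden inside the cited propositions of \cite{grechuk09}. What your approach buys is independence from an external source; what the paper's buys is brevity.

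The one place where your plan is genuinely thinner than the citation it replaces is attainment. Your compactness sketch assumes $A^\odot$ is weak$^\ast$-bounded ``owing to finiteness of $\f$ guaranteed by \Cref{dual}'', but nothing in the hypotheses forces $\f$ to be finite (equivalently, $A$ need not be absorbing, since $B$ is only radially bounded, not a neighbourhood of $0$), so this step is circular as written. The paper sidesteps this by appealing directly to the Kusuoka-type structure theorem for law-invariant comonotone-additive functionals (Proposition~2.4 in \cite{grechuk09}), which yields a single $g\in G$ without passing through compactness of the full polar. If you want to keep your self-contained route, you should replace the global-boundedness argument by that structural result.
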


\begin{proof}
Let $A$ be a law invariant, closed, radially bounded at non-constants, stable under scalar addition, convex set containing the origin. Then,  this yields that $\f$ is a law invariant, lower semi continuous generalized deviation measure. The stated representations follow from Propositions 2.1 and 2.2 of \cite{grechuk09}. Also, Proposition 2.4 of the same paper yields, under comonotonic additivity --- which is given by the convexity for comonotonic pairs of $A^\complement$ and \Cref{coro como} --- that
\(
\f (X) = \int_0^1 g(t)\, F^{-1}_X (\mathrm{d}t),
\)
for some positive concave function $g\colon [0,1] \rightarrow \R$. 
\end{proof}

\begin{remark}
By taking $B = \Acc{1}{\varepsilon}$ for a law invariant measure of error $\varepsilon$, one obtains a set $B$ that fulfills the requirements from the above proposition. To ensure comonotonicity, one can take $B$ of the form $B=\Acc{1}{\varepsilon}$ as above, with the additional requirement that the error measure $\varepsilon$ be comonotone additive.
\end{remark}

\section{Acceptance sets for deviation measures}\label{accept}

So far we have (mostly) focused on the scenario where an acceptance set $A\subseteq\X$ is given, and studied the relations existing between attributes of this set and the associated features of its \pl, especially how the former manifest on the latter. Now, a special case occurs when the acceptance set itself is already induced by a given, specified \emph{a priori} deviation measure. Remember that under the mild requirement that $A$ is closed and star-shaped we have $A = \Acc{1}{\f}$ (as ensured by \Cref{lemma-item4} ). Additionally, \cref{lemma-item6} in \Cref{lemma2} tells us that a positive homogeneous function $D$ coincides with the \pl\ of $\Acc{1}{D}$ (where the requirement that the underlying set is absorbing may be dropped when $+\infty\in\mathrm{range}D$). 

The crucial fact explored in this section is that we actually have a \emph{two-way correspondence} between attributes of the functional and the properties of the associated acceptance set. In particular, a lower-semicontinuous, convex deviation measure yields an acceptance set which is stable under scalar addition, convex, closed and radially bounded at non-constants.
%
%
%
For the theorem below, recall that $\Acc{k}{D} \coloneqq \{X\in\X\colon\,D(X)\leq k\}$. {The following theorem provides a characterization for acceptance sets generated by deviation measures, i.e.,\ sub-level sets corresponding to non-negative, translation insensitive functionals on $\X$.
These results are new in the literature, and can be seen as reciprocals for the results studied in the previous sections.}

\begin{theorem} \label{deviation a.set}
Let $ D,D'\colon\X \rightarrow \R \cup \{\infty\} $ be positive homogeneous functionals. Then we have the following, for all positive real $k$,
\begin{enumerate}[label = (\roman*)]
\item\label{deviation a.set-item1}  $ \Acc{\lambda}{D}$  is star-shaped set for all $ \lambda \in \R_+$. Moreover, if $D$ does not assume negative values, then the following string of equalities holds, for all $X\in\X$:
\begin{equation*}
D(X) = \D_{\Acc{1}{D}}(X)
= k \D_{\Acc{k}{D}} (X)
.
\end{equation*}

\item If $ D$ is finite, then $\Acc{k}{D}$ is absorbing.

\item If $ D$ is translation insensitive, then $ \Acc{k}{D} + \R = \Acc{k}{D}$.

\item If $ D$ is non-negative, then $\Acc{k}{D}$ is radially bounded at non-constants and $\R \subseteq \Acc{k}{D}$.

\item\label{deviation a.set-item convex} If $D$ is a convex functional, then $ \Acc{k}{D}$ is a convex set.

\item\label{deviation a.set-item concave} If $D$ is a concave functional, then $ (\Acc{k}{D})^\complement$ is a convex set.

\item If $D$ is law invariant, then so is $\Acc{k}{D}$.

\item\label{deviation a.set-item8} If $D \leq D' $, then $\Acc{k}{D'} \subseteq \Acc{k}{D}$. In particular, if $D$ is lower-range dominated then the inclusion $\Acc{k}{\LR} \subseteq \Acc{k}{D}$ holds.


\item If $D(X) > 0 $ for all $X\in\X$, then $ \Acc{k}{D}$ is radially bounded.

\item\label{deviation a.set-item additivity} If $D $ respects $D(X+Y) = D(X) + D(Y)$ for $X,Y$ in some convex cone $C$, then $ \Acc{k}{D} \cap C$ and $(\Acc{k}{D})^\complement \cap C$ are convex sets. In particular, if $D$ is comonotone additive, then both $\Acc{k}{D}$ and its complement are stable under convex combinations of comonotone pairs in $\X$.

\item\label{deviation a.set-item12} If $D$ is lower-semicontinuous, then $\Acc{k}{D}$ is closed.

\item\label{deviation a.set-item strong shaped} If $D$ is continuous, then $\Acc{k}{D}$ is strongly star-shaped.

\item\label{deviation a.set-item monotonicity} If $D$ is monotone, then $\Acc{k}{D}$ is anti-monotone and $\Acc{k}{-D}$ is monotone.
\end{enumerate}
\end{theorem}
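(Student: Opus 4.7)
The plan is to dispatch the thirteen items by direct computation, leaning throughout on positive homogeneity of $D$ and $D'$. Most claims are one-line checks once the sub-level set $\Acc{k}{D} = \{X\colon D(X)\le k\}$ is unpacked, so I would arrange them roughly in the order they appear, grouping the purely algebraic items first and the topological/ordering items afterward.

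For item \ref{deviation a.set-item1}, star-shapedness is immediate from $D(\mu X) = \mu D(X) \le \mu \lambda \le \lambda$ for $\mu \in [0,1]$. For the two equalities, I would rewrite the condition $X/m \in \Acc{k}{D}$ as $D(X)\le mk$ using positive homogeneity, so that $\D_{\Acc{k}{D}}(X) = \inf\{m>0\colon D(X)\le mk\} = D(X)/k$ whenever $D\ge 0$. Items \ref{deviation a.set-item convex} and \ref{deviation a.set-item concave} follow from applying the definition of convexity (resp.\ concavity) of $D$ to the values on a convex combination; items (iii), (vii), and (viii) (including the $\LR$ corollary) are one-line unpackings of the corresponding definitions. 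For absorbency (ii), given $X$ I would take $\delta_X = k/D(X)$ if $D(X)>0$ (any positive $\delta_X$ otherwise) and verify $D(\lambda X) = \lambda D(X) \le k$ for $\lambda \in [0,\delta_X]$. Radial boundedness at non-constants (iv) and full radial boundedness (ix) are symmetric: for non-constant (resp.\ nonzero) $X$ we have $D(X) > 0$, so $D(\lambda X) = \lambda D(X) > k$ once $\lambda > k/D(X)$; constants lie in $\Acc{k}{D}$ by translation insensitivity applied to $0$.

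For item \ref{deviation a.set-item additivity}, given $X,Y\in C\cap \Acc{k}{D}$ and $\lambda\in[0,1]$, since $C$ is a convex cone we have $\lambda X, (1-\lambda)Y\in C$, and I would combine additivity on $C$ with positive homogeneity to get $D(\lambda X + (1-\lambda)Y) = \lambda D(X) + (1-\lambda)D(Y) \le k$. The same identity with the inequalities reversed handles the complement. The comonotone consequence follows by taking $C = C_{X,Y}$ from \Cref{lemma convex cone comono} for a given comonotone pair. Item \ref{deviation a.set-item12} is just the definition of lower-semicontinuity. For item \ref{deviation a.set-item monotonicity}, if $D$ is monotone then $X\preceq Y \in \Acc{k}{D}$ gives $D(X)\le D(Y)\le k$ (anti-monotonicity of $\Acc{k}{D}$), whereas $X\in\Acc{k}{-D}$ and $X\preceq Y$ yield $D(Y)\ge D(X)\ge -k$, i.e.\ $Y\in\Acc{k}{-D}$.

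The step I expect to require the most care is item \ref{deviation a.set-item strong shaped}. Star-shapedness has already been verified in \ref{deviation a.set-item1}, so the task is to show that for each $X\neq 0$ the ray $R_X$ meets $\bd(\Acc{k}{D})$ in at most one point. Continuity of $D$ lets me identify the boundary of $\Acc{k}{D}$ with a subset of the level set $\{Z\colon D(Z)=k\}$ (using that continuous preimages of closed sets are closed and of open sets are open, so the boundary is contained in the common boundary of $\{D\le k\}$ and $\{D\ge k\}$). Along the ray $R_X = \{\lambda X\colon \lambda>0\}$, the map $\lambda\mapsto D(\lambda X) = \lambda D(X)$ is linear in $\lambda$: if $D(X)>0$ it is strictly increasing, so $\{\lambda>0\colon \lambda D(X) = k\}$ is the singleton $\{k/D(X)\}$; if $D(X)=0$ the whole ray lies in the interior of $\Acc{k}{D}$ (no boundary point), since continuity together with $D(0)=0$ shows that a neighbourhood of each $\lambda X$ is contained in $\Acc{k}{D}$; if $D(X) = +\infty$ the ray is disjoint from $\Acc{k}{D}$. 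In every case the intersection is empty or a singleton, which is the definition of strong star-shapedness.
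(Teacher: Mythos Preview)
Your proposal is correct and follows essentially the same route as the paper's proof: each item is dispatched by unpacking the sub-level set and using positive homogeneity directly, and for item \ref{deviation a.set-item strong shaped} both you and the paper reduce to showing $\bd(\Acc{k}{D})\subseteq\{D=k\}$ and then observe that $\lambda\mapsto\lambda D(X)$ hits $k$ at most once.

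Two small remarks. In item (iv) you justify $c\in\Acc{k}{D}$ for constants $c$ via ``translation insensitivity applied to $0$'', but item (iv) assumes only non-negativity, not translation insensitivity; in the paper's terminology non-negativity already includes $D(c)=0$ for constants, so simply cite that. In item \ref{deviation a.set-item strong shaped} you split into $D(X)>0$, $D(X)=0$, $D(X)=+\infty$, but the theorem does not assume $D\ge 0$, so you should also cover $D(X)<0$; this falls under the same argument as $D(X)=0$ since then $\lambda D(X)<k$ for all $\lambda>0$ and the ray lies in the open set $\{D<k\}$ (the paper handles both at once by writing ``for all $X$ such that $D(X)\le 0$'').
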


\begin{proof}
For item (i), star-shapedness of each $\Acc{\lambda}{D}$ is clear as if $ D (X) \leq k$, then $ \lambda D (X) \leq k$, for any $ \lambda \in [0,1]$. Also, note that (by positive homogeneity of $D$)
\[
\lambda \Acc{k}{D} = \left\{\lambda X \in \X \colon\, D (X) \leq k \right\} = \left\{X \in \X \colon\, D(X) \leq \lambda k \right\} = \Acc{\lambda k}{D}.
\]
 It remains to prove that $D = \D_{\Acc{1}{D}}$, as the remaining equalities will follow. Now, with $A = \Acc{1}{D}$, we have (again by positive homogeneity of $D$)
\begin{align*}
\f(X) 
&= \inf\Big\{m\in\R_+^*\colon\, D(X)\leq m\Big\} = D(X).
\end{align*}

Item (ii) is clear, as if $D$ is a positive homogeneous finite function and $k>0 $ then, for any $X \in \X$ such that $D(X) >0 $ one has $D \big(k X / D(X)\big) = k $. Therefore, we have $t X \in \Acc{k}{D}$ for any $0\leq t \leq \delta_X \coloneqq k / D(X)$. Of course, if $D(X) \leq 0$ then there is nothing to prove, {as in this case we have $D(X)\leq k$, that is, $X\in \Acc{k}{D}$.}


For item (iii), let $Y\in \Acc{k}{D} + \R$, that is, $Y = X + c$ for some $X\in \Acc{k}{D}$ (meaning $D(X)\leq k$) and some $c\in\R$. Then $D(Y) = D(X+c) = D(X) \leq k$ as $D$ is translation insensitive. This yields $\Acc{k}{D} + \R \subseteq \Acc{k}{D}$. The reverse inclusion holds trivially.

Item (iv) follows from the fact that, for any non-constant $X$, we have $D(X) > 0$ (by assumption). Hence, by positive homogeneity of $D$, there is some $\delta_X \coloneqq k/D(X) > 0 $ such that $ D(m X) > k$ for all $m>\delta_X$. Furthermore, as $D(c) = 0 < k$ for any $c \in \R$ it follows that $\R \subseteq \Acc{k}{D}$.

For item (v), let $X,Y \in \Acc{k}{D}$ and let $Z$ be any convex combination of $X$ and $Y$. It follows from the convexity of $D$ that $D(Z) \leq \max(D(X),D(Y)) \leq k$, hence the claim holds.

For item (vi), let $B = (\Acc{k}{D})^\complement$, $X,Y \in B$ and assume $Z$ is any convex combination of $X$ and $Y$. It follows from the concavity of $D$ that $D(Z) \geq \min(D(X),D(Y)) > k$, hence the claim holds.

Regarding item (vii), let $X \in \Acc{k}{D}$ and assume $Y =_dX$. Then, due to law invariance of $D$, we have
\(D(Y) = D(X) \leq k,\), that is $Y \in \Acc{k}{D}$.

For item (viii), let $X \in \Acc{k}{D'}$. Clearly, the claim holds, as
\(D(X) \leq D'(X) \leq k.\) The particular case for when $D$ is lower-range dominated is obvious from the definition.


Item (ix) follows the same reasoning as item (iv).

For item (x), note that the restriction of $D$ to $C$ is both convex and concave, hence the convexity of $\Acc{k}{D} \cap C$ follows the same reasoning that item (v) and the convexity of $(\Acc{k}{D})^\complement \cap C$ from item (vi). For the case when $D$ is additive comonotone, let $X,Y$ be a comonotone pair. Due to \Cref{lemma convex cone comono}, the set $C_{X,Y}$
is a convex cone whose members are all comonotone to one another, and $D$ is additive on $C_{X,Y}$. By the preceding reasoning, the sets $\Acc{k}{D}\cap C_{X,Y}$ and $(\Acc{k}{D})^\complement \cap C_{X,Y}$ are both convex. In particular, if $Z$ is any convex combination of $X$ and $Y$, then $Z\in \Acc{k}{D}\cap C_{X,Y}\subseteq \Acc{k}{D}$ whenever $X,Y \in \Acc{k}{D}$, and similarly $Z\in (\Acc{k}{D})^\complement$ whenever $X,Y \in (\Acc{k}{D})^\complement$.

Item (xi) is just the definition of lower-semicontinuity.

For item (xiii we shall show only for the case $A \coloneqq \Acc{1}{D}$. It holds for general $\Acc{k}{D}$ due to item (i). By continuity of $D$, we have that $A$ is closed whereas the set $B \coloneqq \{X \in \X \colon\, D(X) < 1 \} $ is open. Evidently, $A^\complement $ is open and $B^\complement$ is closed, and the inclusions $B\subseteq\interior A$ and $A^\complement \subseteq \interior (B^\complement)$ hold; {in particular this gives $0\in\interior A$ as $D$ is positive homogeneous, so $A$ is absorbing and $D(X) = \D_{A}(X)<\infty$ for all $X$}. Therefore, $B^\complement \cap A = \{X \in \X\colon\, D(X) = 1 \} = \bd(A)$, where the second equality is yielded by \Cref{ph implies boundary}.
We must show that, for each $X$, the ray $R_X\coloneqq \{\lambda X\colon\,\lambda\in\R_+^* \}$ intersects $\bd (A)$ at most once. For all $X$ such that $D(X) \leq 0$ it is clear that $R_X \subseteq B$ (so $R_X\cap\bd (A) = \varnothing$). It remains to consider the case $0 < D(X) < \infty$. Clearly, $D(\lambda X) = 1$ for $\lambda^{-1} \coloneqq D(X)$, so $R_X\cap\bd(A)$ is nonempty. Moreover, if $\gamma > \lambda$ then clearly $D(\gamma X) > 1$ by positive homogeneity, and if $0< \gamma < \lambda$ then $\gamma X \in B$; in any case $\gamma X \notin \bd A$.

Lastly, for item (xiii) again we shall show only for the case $A \coloneqq \Acc{1}{D}$ and $B \coloneqq \Acc{1}{-D}$, as it holds for general $\Acc{k}{D}$ and $ \Acc{k}{-D}$ due to item (i). Let $Y \in A$ and $X \preceq Y$. Now, remember that for any $Z\in \X , Z \in A $ if and only if $D(Z) \leq 1$. Then we have, by monotoniticy $D$, that $D(X) \leq D(Y) \leq 1$, hence $X \in A$, establishing the anti-monotonicity of $A$. By the same token, let $X \in B$ and $X \preceq Y$. Again, we have that for any $Z\in \X , Z \in B $ if and only if $-D(Z) \leq 1$, and by anti-monotonicity of $-D$ it follows that $1 \geq -D(X) \geq -D(Y)$.
This completes the proof.
\end{proof}

Now, we analyze how some operations on a deviation measure are reflected on its corresponding acceptance set. For a comprehensive theory on combinations of monetary risk measures, see \cite{Righi2018}.

\begin{propo}
Let $ D,D'\colon\X \rightarrow \ra $ be positive homogeneous functionals and $k, \lambda \in \R^*_+$. Then:
\begin{enumerate}[label = (\roman*)]
\item $\Acc{k}{\min(D,D')} = \Acc{k}{D} \cup \Acc {k}{D'}$ and $\Acc{k}{\max(D,D')} = \Acc{k}{D} \cap \Acc {k}{D'}$.
\item $X\in \Acc{k}{D}$ if and only if there are non-negative constants $c$ and $d$, and positive homogeneous functions $g$ and $h$ such that $k=c+d$, $D=g+h$ and $X\in \Acc{c}{g}\cap \Acc{d}{h}$. In particular, one has $\Acc{k + \lambda}{D + D'} \supseteq \Acc{k}{D} \cap \Acc{\lambda}{D'}$.
\item $ \Acc{k}{\lambda D} = \lambda^{-1} \Acc{k}{D}$.
\end{enumerate}
\end{propo}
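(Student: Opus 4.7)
The plan is to verify each item by tracing through the defining inequality $D(X)\le k$ for a sub-level set; no machinery beyond positive homogeneity is required.

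For item (i), I would note that for every $X\in\X$,
\[
\min(D,D')(X)\le k \iff D(X)\le k \text{ or } D'(X)\le k,
\]
which gives the union identity directly. Similarly, $\max(D,D')(X)\le k$ if and only if both $D(X)\le k$ and $D'(X)\le k$, yielding the intersection identity. So both assertions are merely a rewriting of the definitions.

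For item (ii), the \emph{if} direction is immediate: given $c,d\ge0$ with $c+d=k$, positive homogeneous $g,h$ with $g+h=D$, and $X\in\Acc{c}{g}\cap\Acc{d}{h}$, I would simply add the two inequalities to get $D(X)=g(X)+h(X)\le c+d=k$. For the \emph{only if} direction I would exhibit the trivial decomposition: take $g=D$, $h\equiv 0$ (both positive homogeneous), and $c=k$, $d=0$; then $X\in\Acc{k}{D}=\Acc{c}{g}$ and $X\in\X=\Acc{d}{h}$. The ``in particular'' claim then follows by applying the \emph{if} direction with $g=D$, $h=D'$, $c=k$, $d=\lambda$: any $X\in\Acc{k}{D}\cap\Acc{\lambda}{D'}$ satisfies $(D+D')(X)\le k+\lambda$, so $X\in\Acc{k+\lambda}{D+D'}$.

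For item (iii), I would invoke positive homogeneity of $D$ to get
\[
\lambda^{-1}\Acc{k}{D}=\{\lambda^{-1}X\in\X\colon D(X)\le k\}=\{Y\in\X\colon D(\lambda Y)\le k\}=\{Y\in\X\colon \lambda D(Y)\le k\}=\Acc{k}{\lambda D},
\]
using the change of variable $Y=\lambda^{-1}X$ and the identity $D(\lambda Y)=\lambda D(Y)$ valid since $\lambda>0$.

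The proof is essentially a bookkeeping exercise, and I do not anticipate any real obstacle; the only mild subtlety is the handling of extended real values in item (ii), but since $D(X)=\infty$ excludes $X$ from every finite-level sub-level set, one may assume $D(X)<\infty$ throughout the substantive cases and the additive decomposition $g(X)+h(X)=D(X)$ makes unambiguous sense.
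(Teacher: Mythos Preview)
Your proposal is correct and follows essentially the same route as the paper: item (i) is the same definitional unpacking, item (ii) uses the identical ``add the two inequalities'' for the \emph{if} direction and the trivial decomposition $g=D$, $h\equiv0$, $c=k$, $d=0$ for the \emph{only if}, and item (iii) is the same positive-homogeneity computation (the paper phrases it as $\Acc{k}{\lambda D}=\Acc{k/\lambda}{D}=\lambda^{-1}\Acc{k}{D}$, citing the earlier identity $\lambda\Acc{k}{D}=\Acc{\lambda k}{D}$, but that is exactly your change-of-variable step).
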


\begin{proof}
For the first item, if $ X \in \Acc{k}{\min(D,D')}$, then $D(X) \leq k$ or $D'(X) \leq k$. That is, $X \in \Acc{k}{D} \cup \Acc {k}{D'}$ Reciprocally, if $X \in \Acc{k}{D} \cup \Acc {k}{D'}$, then we must have $D(X) \leq k$ or $D'(X) \leq k$, so $ \min (D(X) , D'(X) ) \leq k$, which is the same as $X\in \Acc{k}{\min(D,D')}$. The equality $\Acc{k}{\max(D,D')} = \Acc{k}{D} \cap \Acc {k}{D'}$ follows from a similar argument.

Item (ii) is established as follows: assume $X \in \Acc{c}{g} \cap \Acc{d}{h}$, where $ k = c+d$ and $ D = g+h$. Then, by definition, it holds that $g(X) \leq c$ and $h(X) \leq d $. Hence, $D(X) \equiv g(X) + h(X) \leq c+d=k$, which is the same as $X\in \Acc{k}{D}$. For the reverse inclusion, assume $X\in \Acc{k}{D}$. Then, trivially, there are non-negative constants $c\coloneqq k$ and $d\coloneqq 0$, and positive homogeneous functions $g\coloneqq D$ and $h\coloneqq 0$ such that $X\in \Acc{c}{g} \cap \Acc{d}{h} \equiv \Acc{k}{D}$. The last equivalence follows from the fact that $\Acc{d}{h} = \{X\in\X\colon\, 0(X)\leq 0\} \equiv\X$.

Finally, for the last item we have $X \in \Acc{k}{\lambda D}$ if and only if $ D(X) \leq k/\lambda$ if and only if $X\in \Acc{k/\lambda}{D}$. The latter set is equal to $\lambda^{-1} \Acc{k}{D}$ by cref{deviation a.set-item1} in \Cref{deviation a.set}.
\end{proof}

\subsection{Deviation measures: some examples}

A very intelligent Professor once told one of the authors that ``we all think through examples''. Taking the assertion as advice, in this section we discuss some examples of well-known deviation measures and their respective acceptance. The reader will certainly appreciate them.

\begin{ex} Variance ($\sigma^2$): One of the most widely used measures to quantify dispersion. It is defined, for $X\in \X\subseteq L^1$ (recall that we allow for deviation measures to assume $+\infty$), as
\[
\sigma^2 (X) = \E[ (X - \E X )^2 ],
\]
and the associated acceptance sets are given by
\[
\Acc{k}{\sigma^2} = \left\{X \in \X\colon\, \sigma^2 (X) \leq k \right\},\quad k>0.
\]
As the variance is not positive homogeneous, it does not coincide with the \pl\ of $\Acc{1}{\sigma^2}$: indeed, we have
\[
\D_{\Acc{k}{\sigma^2} } (X) =\frac{\sigma (X)}{\sqrt{k}} . 
\]
Also, notice that $\sigma^2(X)<\infty$ if and only if $X\in L^2$.
\end{ex}

\begin{ex}\label{sd deviation} Standard deviation ($\sigma$): The measure used to quantify risk in the seminal paper of \citet{markowitz52}. It has served as inspiration for the class of generalized deviation measures. It is defined, for $X\in \X\subseteq L^1$, as
\[
\sigma (X) = \sqrt{\sigma^2 (X)} = \Vert X - \E X \Vert_2,
\]
and the associated acceptance sets are given by
\[
\Acc{k}{\sigma} = \left\{X \in \X\colon\, \sigma (X) \leq k \right\},\quad k>0.
\]
(Note that $ \Acc{k}{\sigma} =\Acc{k^2}{\sigma^2} $). Interestingly, if $X \preceq_{\mathfrak{D}} Y$ then $ \Vert X - \E X \Vert_2 \geq \Vert Y- \E Y \Vert_2 $; for a detailed proof and more details see \citet{shaked82}. Furthermore, writing $A \coloneqq \Acc{k}{\Vert\cdot\Vert_2}$, we have that
\[
\sigma (X) = k\, \D_{\Acc{k}{\sigma}}(X) = k\, \D_{A+\R} (X), 
\]
where the first equality above follows from \Cref{deviation a.set}, \cref{deviation a.set-item1}, and the second one comes from \Cref{set oper-item4}, together with the identity $k\, \D_{\Acc{k}{2}}= \Vert \cdot\Vert_2$ yielded by \cref{deviation a.set-item1} in \Cref{deviation a.set} and the well-known fact that $\inf_{z\in\R}\Vert X - z\Vert_2 = \Vert X - \E X\Vert_2 = \sigma(X)$ (indeed, $\Vert\cdot\Vert_2$ is the measure of error associated with the standard deviation\footnote{{Importantly, here $\Vert\cdot\Vert_2$ \emph{does not} represent the Euclidian norm.}}). Notice that $\sigma(X)$ is finite if and only if $X\in L^2$. In \Cref{fig variance} bellow, we can see the acceptance set $\Acc{1}{\sigma}$ in blue, (note that $\Acc{1}{\sigma} = \Acc{1}{\sigma^2}$) and the closed unit ball (on the norm $\Vert\cdot\Vert_2 $) in red. The figure also illustrates the relation $\Acc{1}{\Vert\cdot\Vert_2} + \R = \Acc{1}{\sigma}$.

\begin{figure}[H]
\caption{The sub-level sets $\Acc{1}{\sigma}$ (in blue) and $\Acc{1}{\Vert\cdot\Vert_2}$ (in red) in the binary market $\Omega =\{0,1\}$ with $\Pro\{0\} = \nicefrac14$ and $ \Pro\{1\} = \nicefrac34 $.} \label{fig variance}
\begin{center}
\begin{tikzpicture}[scale=1]

\clip (0,0) (-4,-4)rectangle (4,4);

\filldraw[Blue, thick, fill=Blue!20, opacity=.5, rotate=45] (-6,-1.63342)rectangle (6,1.63342);
\filldraw [BrickRed, thick, fill=BrickRed!20, opacity=.5 ](0,0) ellipse (2 cm and 1.1547 cm) ;

\fill (0,0) circle [radius=.05cm] node[anchor=north west,scale=.8]{$0$};

\fill (2,2) circle [radius=.0cm] node[anchor=north west,scale=.8]{$\Acc{1}{\sigma}$};
\fill (0.3,1) circle [radius=.0cm] node[anchor=north west,scale=.8]{$\Acc{1}{\Vert\cdot\Vert_2}$};

\draw[thin,<->] (-4,0) -- (4,0);
\draw[thin,<->] (0,-4) -- (0, 4);

\end{tikzpicture}
\end{center}
\end{figure}
\end{ex}

\begin{ex}
Standard lower-semi-deviation ($\sigma_-$): It is a generalized deviation measure that considers only the negative part of the deviation $X - \E X$. This one is defined, for $X\in\X\subseteq L^1$, as
\[
\sigma_- (X) = \Vert (X - \E X )^- \Vert_2 . 
\]
The corresponding acceptance sets are given by
\begin{align*}
\Acc{k}{\sigma_-} &= \{X\in\X\colon \Vert (X - \E X )^- \Vert_2 \leq k\}
\\&= \left\{X \in \X\colon\, \sigma^2 (X\,|\, \E X \geq X ) \leq k^2/{\mathbb{P} (\E X \geq X )} \right\},\:k>0,
\end{align*}
where $ \sigma^2 (X \,|\, \E X \geq X ) \coloneqq \E\left\{(X - \E\{X \,\big\vert\, \E X \geq X\} )^2 \; |\; \E X \geq X \right\}$ is the conditional variance of $X$ given that $X$ lies in the lower tail of its distribution. Importantly, the set $\Acc{k}{\sigma_{-}}$ contains every random variable whose standard deviation is bounded above by $k$, as $\Vert (X - \E X )^- \Vert_2\leq \Vert X - \E X \Vert_2$ clearly yields $ \Acc{k}{\sigma} \subseteq \Acc{k}{\sigma_-} $. This fact can be seen in \Cref{fig semi variance}, where the acceptance set $\Acc{1}{\sigma}$ of the standard deviation is depicted in blue, and $\Acc{1}{\sigma_-}$ is represented in red. In particular, $\sigma_{-}$ is finite on a subspace which is larger than $\{X\in\X\colon \sigma(X)<\infty\} $.

\begin{figure}[H]
\caption{The sub-level sets $\Acc{1}{\sigma_-}$ (in red) and $\Acc{1}{\sigma}$ (in blue) in the binary market $\Omega =\{0,1\}$ with $\Pro\{0\} = \nicefrac14$ and $ \Pro\{1\} = \nicefrac34 $.} \label{fig semi variance}
\begin{center}
\begin{tikzpicture}[scale=1]

\clip (0,0) (-4,-4)rectangle (4,4);

\filldraw [BrickRed, thick, fill=BrickRed!20, opacity=.5 , rotate=45](-6,-3.265985)rectangle (6,1.885571);

\filldraw[Blue, thick, fill=Blue!30, opacity=.4, rotate=45] (-6,-1.63342)rectangle (6,1.63342);

\fill (0,0) circle [radius=.05cm] node[anchor=north west,scale=.8]{$0$};

\fill (2,2) circle [radius=.0cm] node[anchor=north west,scale=.8]{$\Acc{1}{\sigma}$};
\fill (1,-1.5) circle [radius=.0cm] node[anchor=north west,scale=.8]{$\Acc{1}{\sigma_-}$};

\draw[thin,<->] (-4,0) -- (4,0);
\draw[thin,<->] (0,-4) -- (0, 4);

\end{tikzpicture}
\end{center}
\end{figure}
\end{ex}

\begin{ex}
Lower range deviation ($\LR$): It is the `most conservative' among the class of lower-range dominated generalized deviation measures, defined for $X\in\X\subseteq L^1$ as
\[
\LR (X) = \E[X - \essinf X ], 
\]
with acceptance set
\begin{align*}
\Acc{k}{\LR} &= \left\{X \in \X \colon\, \E X - \essinf X \leq k \right\} \\
&= \left\{X \in \X \colon\, \esssup (-X) \leq \E[-X] + k \right\} . 
\end{align*}
Thus, $\Acc{k}{\LR}$ is comprised of all positions $X$ whose penalized expected loss $\E(-X) + k$ is bounded below by the maximum loss $\esssup(-X)$. Furthermore writing $A = \mathrm{ball}_{\Vert\cdot\Vert_1}(0;\,k)\cap \X_+$, we have that,
\[
\LR (X) = k\, \D_{\Acc{k}{\LR}} (X) = k \D_{A + \R} (X).
\]
The second equality follows from the fact that $k \D_{A}(X)$ assumes $\infty$ for all $X \leq 0$, and equals $\E|X|$ otherwise; thus it coincides with the error function associated with the lower-range deviation --- see \Cref{set oper-item4}.
In \Cref{fig LR}, we can see the acceptance set $\Acc{1}{\LR}$ in blue, and the closed unit ball (on the norm $\Vert\cdot\Vert_1 $) restricted to $\R^2_+$ in red. The fact that $A + \R = \Acc{1}{\LR}$ is clear from this figure.

\begin{figure}[H]
\caption{The sub-level sets $\Acc{1}{\LR}$ (in blue) and $A = \mathrm{ball}_{\Vert\cdot\Vert_1}(0;\,1)\cap \X_+$ (in red) in the binary market $\Omega =\{0,1\}$ with $\Pro\{0\} = \nicefrac14$ and $ \Pro\{1\} = \nicefrac34 $.} \label{fig LR}
\begin{center}
\begin{tikzpicture}[scale=1]

\clip (0,0) (-4,-4)rectangle (4,4);

\filldraw[Blue, thick, fill=Blue!30, opacity=.7, rotate=45] (-6,-2.828427)rectangle (6,0.942809);

\filldraw [BrickRed, thick, fill=BrickRed!40, opacity=.5](0,0)--(0,1.3333333333333)--(4,0)--(0,0);

\fill (0,0) circle [radius=.05cm] node[anchor=north west,scale=.8]{$0$};

\fill (2,2) circle [radius=.0cm] node[anchor=north west,scale=.8]{$\Acc{1}{\LR}$};
\fill (0.2,0.5) circle [radius=.0cm] node[anchor=north west,scale=.8]{$A$};

\draw[thin,<->] (-4,0) -- (4,0);
\draw[thin,<->] (0,-4) -- (0, 4);

\end{tikzpicture}
\end{center}
\end{figure}
\end{ex}

\begin{ex}
Upper range deviation ($\mathrm{UR}$): Defined, for $X\in\X\subseteq L^1$, as
\[
\mathrm{UR} (X) = \esssup X - \E X = \LR(-X),
\]
this measure is the symmetric opposite of LR. Its acceptance set is given by
\[
\Acc{k}{\mathrm{UR}} = \left\{X \in \X\colon\, \esssup X - \E X \leq k \right\} = \left\{X \in \X \colon\, \esssup X \leq \E X + k \right\} . 
\]
Furthermore, writing $A = \mathrm{ball}_{\Vert\cdot\Vert_1}(0;\,k)\cap \X_{-}$ we have that
\[
\mathrm{UR} (X) = k\, \D_{\Acc{k}{\mathrm{UR}}} (X) = k\, \D_{A + \R} (X),
\]
where the second equality follows from the same reasoning as the one for $\LR$.\end{ex}

\begin{ex}
Full range deviation ($\mathrm{FRD}$): Can be considered the most extreme generalized deviation measure, defined for $X\in\X = \{X \in L^0 \colon\, \essinf X < \infty \text{ or } \esssup X > - \infty \} $ as
\[
\mathrm{FRD} (X) = \esssup X - \essinf X, 
\]
with acceptance set
\[
\Acc{k}{\mathrm{FRD}} = \left\{X \in \X \colon\, \esssup X \leq k + \essinf X \right\}. 
\]
Furthermore, writing $A = \Acc{k}{\Vert\cdot\Vert_\infty}$ we have that
\[
\mathrm{FRD}(X) = k\, \D_{\Acc{k}{\mathrm{FRD}}} (X) = 2\,k\, \D_{A+\R}(X),
\]
where the second equality is due to \Cref{set oper-item4} and the fact that $2\,k\,\f (X) = 2 \Vert X \Vert_\infty$, which is the error function associated with the full range deviation.
Note that $\mathrm{FRD} (X) < \infty $ if and only if $X \in \Li$.
In \Cref{fig LR}, we can see the acceptance set $\Acc{1}{\mathrm{FRD}}$ in blue, and the closed unit ball (on the norm $\Vert\cdot\Vert_\infty $), scaled down in half, in red. Clearly, $\Acc{0.5}{\Vert\cdot\Vert_\infty} + \R = \Acc{1}{\LR}$.

\begin{figure}[H]
\caption{The sub-level sets $\Acc{1}{\mathrm{FRD}}$ (in blue) and $ A = \Acc{0.5}{\Vert\cdot\Vert_\infty}$ (in red) in the binary market $\Omega =\{0,1\}$ with $\Pro\{0\} = \nicefrac14$ and $ \Pro\{1\} = \nicefrac34 $.} \label{fig FR}
\begin{center}
\begin{tikzpicture}[scale=1]

\clip (0,0) (-4,-4)rectangle (4,4);

\filldraw[Blue, thick, fill=Blue!30, opacity=.7, rotate=45] (-6,-0.70710678118)rectangle (6,0.70710678118);

\filldraw [BrickRed, thick, fill=BrickRed!40, opacity=.5](-0.5,-0.5)rectangle (0.5,0.5);

\fill (0,0) circle [radius=.05cm] node[anchor=north west,scale=.8]{$0$};

\fill (2,2) circle [radius=.0cm] node[anchor=north west,scale=.8]{$\Acc{1}{\LR}$};
\fill (0.2,0.5) circle [radius=.0cm] node[anchor=north west,scale=.8]{$A$};

\draw[thin,<->] (-4,0) -- (4,0);
\draw[thin,<->] (0,-4) -- (0, 4);

\end{tikzpicture}
\end{center}
\end{figure}
\end{ex}

\begin{ex}
Expected shortfall deviation ($\mathrm{ESD}$): A generalized deviation measure derived from the (standard) expected shortfall. It is defined, for $X\in\X\subseteq L^1$ and $0<\alpha\leq1$, by $\mathrm{ESD}_\alpha (X) = \mathrm{ES}_\alpha (X - \E X) $ with,
\[\mathrm{ES}_\alpha(X) = -\int^\alpha_0 \frac{1}{\alpha} F^{-1}_X (t) \, \dd t,
\]
and $\mathrm{ESD}_\alpha (X) = \E X-\operatorname{ess}\inf X=\LR(X)$ for $\alpha=0$. Note that if we take $\gamma = 1- \alpha$ we have that $\mathrm{ES}_\alpha(X) = \int_\gamma^1 \frac{1}{1-\gamma} F^{-1}_X (t) \,\dd t $. Furthermore, if $F_X$ is continuous, then the following representation also holds. 
\[
\mathrm{ESD}_\alpha(X) = \mathrm{ES}_\alpha (X- \E X) 
\equiv - \E\big(X-\E X \; | \; X \leq \D_X^{-1}(\alpha) \big)
= \E(X) - \E\big(X\,\vert\, X\leq \D_X^{-1}(\alpha)\big)
\]
with acceptance set
\begin{align*}
\Acc{k}{\mathrm{ESD}_\alpha}
&=\{X\in\X\colon\, 
k - \mathrm{ES}_\alpha (X) \geq \E X 
\}
\end{align*}
If we let the \emph{Koenker-Bassett error} be defined as $\mathrm{KB}_\alpha (X) = \E \left[\alpha^{-1}{(1-\alpha)} X^- + X^+ \right]$, which is the error function associated with the $\mathrm{ESD}$, then we have $\mathrm{KB}_\alpha = k\, \D_{A}$, with $A = \Acc{k}{\mathrm{KB}_\alpha}$. Hence --- by \Cref{set oper-item4} --- it holds that
\[
\mathrm{ESD}_\alpha (X) = k\, \D_{\Acc{k}{\mathrm{ESD}_\alpha}} (X) = k\, \D_{A + \R} (X).
\]
\begin{figure}[H]
\caption{The sub-level sets $\Acc{1}{\mathrm{ESD_{\alpha}}}$ (in blue) and $A = \Acc{1}{\mathrm{KB}_{\alpha}}$ (in red), with $\alpha=0.1$, in the binary market $\Omega =\{0,1\}$ with $\Pro\{0\} = \nicefrac14$ and $ \Pro\{1\} = \nicefrac34 $.} \label{fig ESD}
\begin{center}
\begin{tikzpicture}[scale=0.5]

\clip (0,0) (-8,-8)rectangle (8,8);

\filldraw[Blue, thick, fill=Blue!30, opacity=.7, rotate=45] (-15,-2.82)rectangle (15,0.9428);
\filldraw[BrickRed, thick, fill=BrickRed!30, opacity=.7](-0.4444444,0)--(0,-0.148148)--(4,0)--(0,1.3333333)--(-0.4444444,0) ;

\fill (0,0) circle [radius=.05cm] node[anchor=north west,scale=.8]{$0$};

\fill (0.2,0.8) circle [radius=.0cm] node[anchor=north west,scale=.8]{$A$};
\fill (2,2) circle [radius=.0cm] node[anchor=north west,scale=.8]{$\Acc{1}{\mathrm{ESD_{\alpha}}}$};

\draw[thin,<->] (-8,0) -- (8,0);
\draw[thin,<->] (0,-8) -- (0,8);

\end{tikzpicture}
\end{center}
\end{figure}
\end{ex}

\bibliography{minkowski.bib}

\appendix
\section{Auxiliary results}\label{apen A}

We begin with a result which we use many times in throughout the paper. It relates star-shapedness with the fact that the infimum in the definition of the \pl\ is taken over an interval.

\begin{lemma}\label{star}
Let $A\subseteq\X$ and $X\in\X$. Then
\begin{enumerate}[label = (\roman*)]
\item \label{lemma-item3} If $\A$ contains a cone $M$, then $\f (X) = 0$, for all $X \in M$; in particular as $\{0\} $ is a cone, if $0 \in A$ then $\f(0)=0$.
\item $\f(X)=\infty$ if and only if $\{m\in\R_+^*\colon\,m^{-1}X\in A\} = \varnothing$ if and only if $\{m\in\R_+^*\colon\,m^{-1}X\notin A\} = \R_+^*$.
\end{enumerate}
Moreover, if $A$ is star-shaped, then
\begin{enumerate}[label = (\roman*), resume]
\item\label{star item f=0} $\f(X) = 0$ if and only if $\{m\in\R_+^*\colon\,m^{-1}X\in A\} = \R_+^*$ if and only if $\{m\in\R_+^*\colon\,m^{-1}X\notin A\} = \varnothing$.
\end{enumerate}
If in addition $0<\f(X)<\infty$, then one of the following holds:
\begin{enumerate}[label = (\roman*), resume]
\item $\{m\in\R_+^*\colon\,m^{-1}X\in A\} = [\f(X),\infty)$ and $\{m\in\R_+^*\colon\,m^{-1}X\notin A\} = (0,\f(X))$ (this is true in particular when $A$ is closed).
\item $\{m\in\R_+^*\colon\,m^{-1}X\in A\} = (\f(X),\infty)$ and $\{m\in\R_+^*\colon\,m^{-1}X\notin A\} = (0,\f(X)]$ (this is true in particular when $A$ is open).
\end{enumerate}
\end{lemma}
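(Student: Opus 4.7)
The plan is to treat the claims in sequence of increasing structural assumption. Part (i) follows directly: if $M\subseteq A$ is a cone and $X\in M$, then $m^{-1}X\in M\subseteq A$ for every $m>0$, making the set $\{m\in\R_+^*:m^{-1}X\in A\}$ equal to all of $\R_+^*$, with infimum $0$; the case $0\in A$ corresponds to the cone $\{0\}$. Part (ii) is formal: every non-empty subset of $\R_+^*$ has a non-negative real infimum, so $\f(X)=\infty$ occurs exactly when the defining set is empty (by the convention $\inf\varnothing=\infty$), and the complementary reformulation is pure set theory within $\R_+^*$.

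For part (iii), the only direction requiring work is $\f(X)=0\Rightarrow\{m>0:m^{-1}X\in A\}=\R_+^*$. I would select a sequence $m_n\downarrow 0$ with $m_n^{-1}X\in A$; for arbitrary $m>0$, eventually $m_n/m\in(0,1]$, whence star-shapedness yields $m^{-1}X=(m_n/m)(m_n^{-1}X)\in A$. The reverse direction is immediate.

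For parts (iv) and (v), the key structural fact, again a consequence of star-shapedness, is that $S\coloneqq\{m>0:m^{-1}X\in A\}$ is upward closed: if $m\in S$ and $m'\ge m$, then $m/m'\in(0,1]$ and $m'^{-1}X=(m/m')(m^{-1}X)\in A$. Given $0<\f(X)=y<\infty$, the definition of infimum forces $(y,\infty)\subseteq S\subseteq[y,\infty)$, so the only options are $S=[y,\infty)$ or $S=(y,\infty)$, with respective complements in $\R_+^*$ equal to $(0,y)$ or $(0,y]$. To determine which, I would invoke continuity of the map $T_X\colon m\mapsto m^{-1}X$ from $\R_+^*$ to $\X$, which holds because $\X$ is a topological vector space (so scalar multiplication is jointly continuous). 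Then $S=T_X^{-1}(A)$ inherits the topological type of $A$: closedness of $A$ forces $S$ to be closed in $\R_+^*$, hence $[y,\infty)$; openness of $A$ forces $S$ to be open in $\R_+^*$, hence $(y,\infty)$.

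No serious obstacle is anticipated; the one place the TVS assumption plays an explicit role is in the continuity argument used for parts (iv) and (v). Every other step is a careful unpacking of the defining infimum of $\f$ together with the star-shapedness hypothesis.
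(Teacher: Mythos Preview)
Your proposal is correct and follows essentially the same approach as the paper: both arguments hinge on the observation that star-shapedness makes the set $S=\{m>0:m^{-1}X\in A\}$ upward closed (hence an interval with right endpoint $\infty$ and left endpoint $\f(X)$), and both invoke continuity of the map $T_X\colon m\mapsto m^{-1}X$ to settle whether the left endpoint belongs to $S$ when $A$ is closed or open. Your treatment of (iii) via a sequence $m_n\downarrow 0$ is slightly more explicit than the paper's, which folds (iii) into the same interval argument, but the substance is identical.
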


\begin{proof}
The first item is from lemma 5.49 of \cite{aliprantis06}). The second is immediate. For the remaining assertions, let $T_X(m)\coloneqq m^{-1}X$ for $m\in\R_+^*$.  $T_X$ is clearly a continuous mapping from $\R_+^*$ to $\X$. We have $T_X^{-1}(A) = \{m\in\R_+^*\colon\,m^{-1}X\in A\}$ and similarly $T_X^{-1}(A^\complement) = \{m\in\R_+^*\colon\,m^{-1}X\notin A\}$. Assume now that $A$ is star-shaped and $m\in T_X^{-1}(A)$. Then, if $m'>m$, we have $m'\in T_X^{-1}(A)$ as well. This establishes that $T_X^{-1}(A)$ is always an interval with $\infty$ as its right endpoint, and by definition the left endpoint is $\f(X)$, thus establishing (ii), (iii) and (iv), where the topological assertions follow by continuity of $T_X$.
\end{proof}

We then have the following direct corollary on the relation between gauge and co-gauge.

\begin{coro}\label{coro cogauge}
Let $A\subseteq\X$ be star-shaped. Then the equality
\begin{equation} \label{coga igu}
\f (X) = \cog_{A^\complement} (X)
\end{equation} 
holds for all $X\in\X$.
\end{coro}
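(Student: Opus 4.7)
My plan is to reduce the corollary to a direct case analysis driven by \Cref{star}, which precisely describes the sublevel structure of $\{m\in\R_+^*\colon m^{-1}X\in A\}$ and its complement in $\R_+^*$ when $A$ is star-shaped. The key observation is that the sets appearing in the definitions of $\f(X)$ and $\cog_{A^\complement}(X)$ are complementary subsets of $\R_+^*$, namely $S_A(X)\coloneqq\{m\in\R_+^*\colon m^{-1}X\in A\}$ and $S_{A^\complement}(X)\coloneqq\{m\in\R_+^*\colon m^{-1}X\in A^\complement\}=\R_+^*\setminus S_A(X)$. Star-shapedness forces $S_A(X)$ to be an interval with right endpoint $\infty$ (and therefore $S_{A^\complement}(X)$ to be an interval with left endpoint $0$), so the infimum of the former and the supremum of the latter must coincide whenever both are finite.

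I would split the argument into three cases. First, if $\f(X)=0$, then \Cref{star}\ref{star item f=0} yields $S_A(X)=\R_+^*$ and thus $S_{A^\complement}(X)=\varnothing$, whence $\cog_{A^\complement}(X)=\sup\varnothing=0=\f(X)$ by the convention fixed in \Cref{SID}. Second, if $\f(X)=\infty$, then by item (ii) of \Cref{star} we have $S_A(X)=\varnothing$, so $S_{A^\complement}(X)=\R_+^*$ and $\cog_{A^\complement}(X)=\sup\R_+^*=\infty=\f(X)$. Third, in the nondegenerate case $0<\f(X)<\infty$, the same lemma tells us that $S_{A^\complement}(X)$ equals either $(0,\f(X))$ or $(0,\f(X)]$, and in either case the supremum is exactly $\f(X)$.

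No step here is really an obstacle — all the work has already been done in \Cref{star}. The only thing one must be careful about is handling the boundary conventions $\inf\varnothing=\infty$ and $\sup\varnothing=0$ correctly so that the two degenerate cases match up; once the case split is laid out in this order, the identity $\f=\cog_{A^\complement}$ drops out immediately. Since the statement is claimed for all $X\in\X$ without further hypotheses on the position of $0$ relative to $A$, I would not need to invoke absorbency or radial boundedness anywhere in the argument.
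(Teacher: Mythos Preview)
Your proposal is correct and is exactly the argument the paper has in mind: the corollary is stated immediately after \Cref{star} as a ``direct corollary'', and your three-case analysis using items (ii), \ref{star item f=0}, and (iv)/(v) of that lemma is precisely how the identity $\f=\cog_{A^\complement}$ falls out.
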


\begin{lemma} \label{star-shaped no cone is radially bounded}
Let $A\subseteq\X$. If $A$ is closed, star-shaped, and contains a proper cone with vertex at some constant $x \in \R$, then $A$ is not radially bounded. Hence, if $A$ is closed, star-shaped, and radially bounded, then every proper cone with vertex at a constant intersects $A^\complement$.
\end{lemma}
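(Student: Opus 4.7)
The plan is as follows. Suppose $A$ is closed, star-shaped, and contains a proper cone with vertex at some constant $x \in \R$; by definition, this means $A \supseteq x + C$ for some cone $C$ with $C \neq \{0\}$. Pick any non-zero $Y \in C$, so that $\lambda Y \in C$ for every $\lambda \geq 0$, and consequently the entire ray $\{x + \lambda Y \colon \lambda \geq 0\}$ is contained in $A$.

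Next, I would exploit star-shapedness to ``drag'' this ray back toward the origin. Fixing $\alpha > 0$ and taking $\lambda > \alpha$, the scalar $\mu \coloneqq \alpha/\lambda$ lies in $(0,1)$, so star-shapedness applied to $x + \lambda Y \in A$ yields
\[
\mu(x + \lambda Y) = (\alpha/\lambda)\, x + \alpha Y \;\in\; A.
\]
Because $\X$ is a topological vector space, scalar multiplication is continuous, so $(\alpha/\lambda)\, x \to 0$ as $\lambda \to \infty$, and therefore $(\alpha/\lambda)\, x + \alpha Y \to \alpha Y$ in $\X$. Closedness of $A$ then forces $\alpha Y \in A$. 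As $\alpha \geq 0$ was arbitrary, the whole ray $\R_+\, Y$ lies inside $A$.

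This directly contradicts radial boundedness: we have $Y \in A \setminus \{0\}$ and $\delta Y \in A$ for \emph{every} $\delta \geq 0$, so no finite $\delta_Y$ as in the definition of radial boundedness can exist. The second assertion --- that every proper cone with vertex at a constant must intersect $A^\complement$ whenever $A$ is closed, star-shaped, and radially bounded --- is merely the contrapositive of what we just proved, since ``$A$ does not contain the cone'' is equivalent to ``the cone meets $A^\complement$''. I do not foresee any serious obstacle: the only non-trivial observation is the ``sliding'' trick of scaling by $\alpha/\lambda \to 0$ along the ray $x + \lambda Y$ so that the translate by $x$ vanishes in the limit while the direction $\alpha Y$ survives.
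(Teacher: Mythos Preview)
Your proof is correct and follows essentially the same approach as the paper's: both pick a non-zero direction $Y$ in the cone, scale $x+\lambda Y\in A$ by a factor tending to zero while choosing $\lambda$ so that the $Y$-component survives, and then invoke closedness to place the entire ray $\R_+ Y$ inside $A$. Your presentation is in fact slightly cleaner, handling all $\alpha>0$ at once rather than first treating the case $\alpha=1$ separately as the paper does.
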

\begin{proof}
As $A$ contains a proper cone with vertex at some constant $x\in\R$, there exists a non-zero $X \in \X$ such that $\{x + \lambda X\colon\, \lambda \ge0 \} \subseteq A$. As $A$ is star-shaped, we have that $ k(x + \lambda X) \in A$ for all $ k \in [0,1]$ and all $\lambda \ge0$; in particular, taking $\lambda=1/k$, we have $kx + X\in A$ for all $k\in(0,1]$ and, as $A$ is closed, $X = \lim_{k\downarrow0}kx+X\in A$. To conclude that $A$ is not radially bounded, it is sufficient to show that there is no $ \delta_X>0$ such that $\delta X \notin A $ for $\delta \geq \delta_X$. So, let us fix an arbitrary $\delta_X>0$ and put $k_n=1/n$ and let $\lambda_n = \delta_X / k_n$. As $A$ is closed, we have $\lim_{n \to \infty}(k_n x + k_n \lambda_n X) \in A $. Now, clearly the preceding limit equals $\delta_X X$ and so, as $\delta_X$ was chosen arbitrarily, we can conclude that $A$ is not radially bounded.
\end{proof}

\begin{remark}
A quick inspection of the proof of \Cref{star-shaped no cone is radially bounded} tells us that it remains true even when the vertex $x$ is not assumed to be a constant. In any case, we opt to state it for constant vertices since this is the case which is employed in the text.
\end{remark}

The next lemma shows that positive homogeneity is also a sufficient condition ensuring that an arbitrary positive homogeneous functional $f$ (which does not assume negative values) is the \pl\ of \emph{some} subset of $\X$. We opt to state the result as it appears in \cite{aliprantis06}, where it is assumed at the outset that $\mathrm{range}(f)\subseteq \R_+$. This assumption can be easily dropped; if so, the set $V$ appearing in \Cref{lemma2} is no longer (necessarily) absorbing. Instead, in this case the condition $0\in V$ must hold.

\begin{propo} \label{lemma2} ({Lemma} 5.50 and Theorem 5.52 of \cite{aliprantis06} )
Let $ A , \B \subseteq \X$ be non-empty, and let $f\colon\X \rightarrow \R_+$ be an arbitrary 
function. Then the following holds
\begin{enumerate}[label = (\roman*)]
\item \label{lemma-item6} $f$ is positive homogeneous if and only if it is the \pl\ of an absorbing set, in which case for every $V \subseteq \X$ satisfying
\[
\{X \in \X\colon\, f (X) < 1 \} \subseteq V \subseteq \Acc{1}{f},
\]
we have $\D_V=f$.
\item\label{lemma-item7} $f$ is sub-linear (positive homogeneous and convex) if and only if it is the \pl\ of a convex absorbing set $V$, in which case we may take $V = \Acc{1}{f}$.
\item $f$ is sub-linear and symmetric if and only if it is the \pl\ of a symmetric, convex, absorbing set $V$, in which case we may take $V = \Acc{1}{f}$.
\item\label{continuidades-item1} $f$ is sub-linear and lower-semicontinuous if and only if it is the \pl\ of an absorbing, closed convex set $V$, in which case we may take $V = \Acc{1}{f}$.
\item\label{continuidades-item2} $f$ is sub-linear and continuous if and only if it is the \pl\ of a convex neighborhood $V$ of zero, in which case we may take $V = \Acc{1}{f}$.
\item \label{lemma-item9} $f$ is sub-linear, symmetric and continuous if and only if it is the \pl\ of a unique closed, symmetric and convex neighborhood $V$ of zero, namely
$V = \Acc{1}{f}$.
\end{enumerate}
\end{propo}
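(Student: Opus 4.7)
The strategy is to establish item (i) as the foundation and then obtain items (ii)--(vi) by incrementally layering additional properties on the canonical choice $V := \Acc{1}{f}$. For the forward direction of (i), if $V$ is absorbing and $f = \D_V$, positive homogeneity follows from the substitution $m' = m/\lambda$ in the infimum defining $\D_V(\lambda X)$, together with $\D_V(0)=0$ (which holds since $V$ absorbing forces $0 \in V$). For the converse, taking $V = \Acc{1}{f}$, positive homogeneity of $f$ yields
\[
\D_V(X) = \inf\{m > 0 : f(X/m) \leq 1\} = \inf\{m > 0 : m \geq f(X)\} = f(X),
\]
and $V$ is absorbing because $f$ is finite: $\lambda X \in V$ whenever $0 \leq \lambda \leq 1/(1 + f(X))$. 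The sandwich claim then reduces, by monotonicity of the gauge under set inclusion, to the parallel computation $\D_{\{X\in\X\,:\,f(X)<1\}}(X) = \inf\{m>0 : f(X) < m\} = f(X)$, so that both extremes of the sandwich produce $f$ and hence $\D_V = f$ for any intermediate $V$.

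For the layered items, I would keep $V = \Acc{1}{f}$ and match each geometric or topological property on one side with a matching property on the other. Convexity of $f$ gives convexity of $V$ directly; conversely, convexity of $V$ yields sub-additivity of $\D_V$ via the standard decomposition: given $\alpha > f(X)$ and $\beta > f(Y)$ with $X/\alpha, Y/\beta \in V$, the identity
\[
\frac{X + Y}{\alpha + \beta} = \frac{\alpha}{\alpha + \beta} \cdot \frac{X}{\alpha} + \frac{\beta}{\alpha + \beta} \cdot \frac{Y}{\beta}
\]
places $(X+Y)/(\alpha+\beta) \in V$, whence $\D_V(X+Y) \leq \alpha + \beta$; letting $\alpha \downarrow f(X)$ and $\beta \downarrow f(Y)$ delivers sub-linearity. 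This handles items (ii) and (iii), symmetry being the immediate identification $V = -V \Leftrightarrow f(X) = f(-X)$. For (iv), lower-semicontinuity of $f$ is by definition closedness of each $\Acc{k}{f}$, which positive homogeneity reduces to closedness of $\Acc{1}{f}$. For (v), if $V$ is a neighborhood of $0$, then $f \leq 1$ on $V$ so $f$ is continuous at $0$, and sub-additivity propagates this to global continuity via $|f(X) - f(Y)| \leq \max\bigl(f(X-Y),\,f(Y-X)\bigr)$; conversely, continuity makes $\{X\in\X : f(X) < 1\}$ open and hence a neighborhood of $0$ contained in $V$.

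For item (vi), the only new content is uniqueness. Suppose $V$ is any closed, symmetric, convex neighborhood of $0$ satisfying $\D_V = f$. Item (i) forces $\{X \in \X : f(X) < 1\} \subseteq V \subseteq \Acc{1}{f}$. Continuity of $f$ (from (v)) gives $\Acc{1}{f} = \closure\{X \in \X : f(X) < 1\}$: for any $X$ with $f(X) = 1$, the sequence $X_n := (1 - 1/n)X$ satisfies $f(X_n) < 1$ by positive homogeneity and $X_n \to X$ by continuity of scalar multiplication, while the reverse inclusion uses lower-semicontinuity. Since $V$ is closed and sandwiched between a set and its closure, $V = \Acc{1}{f}$, establishing uniqueness. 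The principal obstacle throughout will be the propagation of continuity in item (v), since this step genuinely uses sub-additivity, not just positive homogeneity; without it, the implication ``$V$ is a neighborhood of $0$'' $\Rightarrow$ ``$f$ is continuous'' would fail, and the uniqueness argument in (vi) similarly hinges on the sub-linear identity $\Acc{1}{f} = \closure\{f < 1\}$ that emerges from combining positive homogeneity with continuity.
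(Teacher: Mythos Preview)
Your proof is correct and essentially self-contained. The paper does not actually prove this proposition: it simply records the statement and refers to Lemma~5.50 and Theorem~5.52 of \cite{aliprantis06}. So there is no ``paper's own proof'' to compare against beyond that citation; you have supplied the argument that the paper outsources.

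One small attribution slip in item~(vi): you write ``Item~(i) forces $\{X\in\X: f(X)<1\}\subseteq V\subseteq \Acc{1}{f}$,'' but item~(i) only asserts the implication in the \emph{other} direction (sandwich $\Rightarrow$ $\D_V=f$). What you actually need here is the elementary observation that for any star-shaped $V$ one has $\{X:\D_V(X)<1\}\subseteq V\subseteq\{X:\D_V(X)\le1\}$ (this is the content of \Cref{lemma-item4} in the paper, or Lemma~5.49 in \cite{aliprantis06}); since a convex neighborhood of $0$ is star-shaped, the sandwich follows. With that correction the uniqueness argument goes through exactly as you wrote it.
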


\begin{remark}
A locally convex topology is a topology generated by a family of seminorms. In particular, the neighborhood base at zero is given by the collection of all $\Acc{k}{p}$, with $k >0$ and $p$ {belonging to some collection of seminorms}. {Now, \Cref{lemma2} \cref{lemma-item9} actually tells us that each $p$ is the \pl\ of some unique closed, symmetric, convex neighborhood $A$ of zero, namely $A = \Acc{1}{p}$, with $p = \f$. Distinctively, Theorem 5.73 of \cite{aliprantis06} tell us that any locally convex topology is generated by the family of gauges of the symmetric convex closed neighborhoods of zero.}
\end{remark}

As convexity plays a central role in risk analysis and optimization, it is a relief to see that taking the convex hull of an acceptance set translates as expected into the corresponding \pl.

\begin{propo}\label{conv envelop}
Let $ A\subseteq\X$. If $0\in A$, then the \pl\ of the closed convex hull of $A$ is equal to the convex envelope of the \pl\ of $A$, i.e.,\ one has
\[
\D_{\clconv A} (X) = \conv\f (X)
\]
for all $X\in\X$.
\end{propo}

\begin{proof}
 First, notice that any lower-semicontinuous sub-linear function $g \geq 0$ that is dominated by $\f$ can be written as $g = \D_{C}$, with $C$ a closed convex set given by $C = \Acc{1}{g}\supseteq \Acc{1}{\f} \supseteq A$ (see \Cref{lemma}, \Cref{lemma2} and \Cref{deviation a.set}, {where the absorbing condition can be dropped by letting $g$ assume $+\infty$}). Reciprocally, if $C$ is any closed convex set such that $A\subseteq C$, then the sub-linear function $g\coloneqq \D_C\geq 0$ is dominated by $\f$. In summary, there is a one-to-one correspondence between the class $\mathfrak{S}_+(\f)$ comprised of all lower-semicontinuous sub-linear mappings $g\colon\X\to\R_+\cup\{+\infty\}$ dominated by $\f$ and the class $\mathfrak{C}$ comprised of all closed convex sets $C\supseteq A$. Therefore, since by definition $\clconv A = \bigcap_{C\in\mathfrak{C}} C$, an easy generalization of \cref{lemma-item5} in \Cref{lemma} entails
 \[
 \D_{\clconv A} (X) = \sup_{C\in\mathfrak{C}} \D_{C} (X) = \sup_{g\in \mathfrak{S}_+(\f)}g(X).
 \]
 
 Now, let $\mathfrak{S}(f)$ be the set of all lower-semicontinuous sub-linear functions dominated by a mapping $f$, and $\mathfrak{A}(f)$ the set of all continuous affine functions dominated by $f$.
 The supremum over $\mathfrak{S}_+(\f)$ in the above expression corresponds to the supremum over all lower-semicontinuous sub-linear functions with values in $\R_+\cup\{+\infty\}$ that are dominated by $\f$ and it clearly coincides with the supremum over all (not necessarily positive) lower-semicontinuous sub-linear functions that are dominated by $\f$. That is, we have
 \[
 \sup_{g\in \mathfrak{S}_+(\f)}g(X) = \sup_{g \in \mathfrak{S}(\f)} g(X).
 \]
 As any lower-semicontinuous sub-linear function can be written as the supremum of the continuous affine functions that it dominates (by taking its convex envelope), we have that
 \begin{align*}
  \sup_{f \in \mathfrak{S}(\f)} \sup_{g\in\mathfrak{A}(f)} g(X)
  &= \sup \Big\{g(X)\colon\, g\in\bigcup\nolimits_{f \in \mathfrak{S}(\f)} \mathfrak{A}(f) \Big\} 
  \\ & = \sup \Big\{g(X)\colon\, g\in\mathfrak{A}(\f)\Big\} .
  \\ &= \conv\f (X)
 \end{align*}
 and this completes the proof.
\end{proof}

\begin{remark}
 If the convex envelope of a function $f$ is defined as the supremum over the (not necessarily continuous) affine functions that it dominates, then $\conv f $ is not necessarily lower-semicontinuous. Nevertheless, the proposition above can easily be adapted to yield the equality $ \conv \D = \D_{\conv A}$ by changing convex, closed sets for convex sets and dropping all the requirements of continuity over $g,f$ and the affine functions appearing in the proof.
\end{remark}

\begin{lemma}\label{ph implies boundary}
Let $f\colon\X\to \R\cup\{\infty\}$. If $f$ is positive homogeneous, then the set $E \coloneqq \{X\in\X\colon\,f(X)=1\}$ has empty interior.
\end{lemma}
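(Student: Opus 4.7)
The plan is to argue by contradiction, exploiting only positive homogeneity together with the fact that scalar multiplication is continuous in the topological vector space $\X$. Suppose, toward a contradiction, that $E$ has non-empty interior, and pick some $X_0 \in \interior(E)$.

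First I would dispose of the case $X_0 = 0$ by observing that positive homogeneity (applied with $\lambda = 0$, using the convention $0 \cdot \infty = 0$ in case $f$ attains the value $+\infty$) forces $f(0) = 0 \ne 1$, so $0 \notin E$ and hence $X_0 \ne 0$.

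Next I would use continuity of the scalar multiplication map $\R \to \X$, $\lambda \mapsto \lambda X_0$, which sends $1$ to $X_0 \in \interior(E)$. Since $\interior(E)$ is open and contains $X_0$, the preimage of $\interior(E)$ under this map is an open set in $\R$ containing $1$; thus there exists $\epsilon \in (0,1)$ such that $\lambda X_0 \in E$ for every $\lambda \in (1-\epsilon,\, 1+\epsilon)$. For any such $\lambda$ (all of which are strictly positive since $\epsilon<1$), positive homogeneity gives
\[
1 = f(\lambda X_0) = \lambda\, f(X_0) = \lambda \cdot 1 = \lambda,
\]
which forces $\lambda = 1$ for every $\lambda$ in the nontrivial open interval $(1-\epsilon, 1+\epsilon)$, an absurdity. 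Hence $\interior(E) = \varnothing$.

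There is no real obstacle here; the only point that needs a moment of care is ensuring $X_0 \ne 0$ (so that the level equation $f(\lambda X_0) = \lambda$ actually pins down $\lambda$), and making sure the continuity-of-scalar-multiplication argument stays within the positive reals (achieved simply by taking $\epsilon < 1$).
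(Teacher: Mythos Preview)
Your proof is correct and follows essentially the same approach as the paper: both arguments use continuity of scalar multiplication at $1$ to find a nearby scalar multiple $(1+u)X_0$ still lying in $E$, and then derive a contradiction with positive homogeneity via $f((1+u)X_0)=1\ne(1+u)f(X_0)$. Your version is slightly more careful (you explicitly rule out $X_0=0$ and ensure $\epsilon<1$ so that $\lambda>0$), but the underlying idea is identical.
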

\begin{proof}
Let us proceed by contraposition by showing that if $E$ has non-empty interior, then $f$ is not positive homogeneous. Assume, then, that $X\in\interior E$, and let $V$ denote an open neighborhood of $X$ with $V\subseteq E$. By continuity of scalar multiplication, for small enough $u>0$ we have $(1+u)X\in V\subseteq E$. But then $f((1+u)X) = 1 < (1+u) f(X)$, so $f$ is not positive homogeneous.
\end{proof}

The following result characterizes polar sets through the \pl. {Recall that, by definition, the polar of a set $A\subseteq\X$ is given by
\(
A^{\odot} = \{X' \in \X'\colon\, \langle X, X' \rangle \leq 1\text{ for all }X \in A \}.
\)}

\begin{propo} \label{polar alter}
Let $A$ be star-shaped. Then it holds that
\begin{equation}\label{polar-alternative-representation}
A^\odot = \{X' \in \X' \colon\, \langle X, X' \rangle \leq \f (X) \text{ for all }X \in \X \} .
\end{equation}
\end{propo}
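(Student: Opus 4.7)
The plan is to prove the two inclusions separately, writing $B \coloneqq \{X'\in\X' : \langle X, X'\rangle \leq \f(X) \text{ for all } X\in\X\}$ for the right-hand side of \eqref{polar-alternative-representation}.

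For the inclusion $B \subseteq A^\odot$: since $A$ is star-shaped, $A = \sta(A)$, and then the inclusions $\{\f < 1\}\subseteq \sta(A)\subseteq \Acc{1}{\f}$ provided by \Cref{lemma-item4} give $\f(X)\le 1$ for every $X\in A$. Therefore, if $X'\in B$, then $\langle X, X'\rangle \leq \f(X)\leq 1$ for every $X\in A$, so $\sup_{X\in A}\langle X,X'\rangle \leq 1$, i.e., $X'\in A^\odot$.

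For the reverse inclusion $A^\odot \subseteq B$: fix $X'\in A^\odot$ and an arbitrary $X\in\X$; I want to show $\langle X, X'\rangle \leq \f(X)$. The case $\f(X)=\infty$ is immediate. Otherwise, star-shapedness of $A$ together with \Cref{star} (items \ref{star item f=0} and the subsequent ones) ensures that $m^{-1}X\in A$ for every $m>\f(X)$ — and indeed for every $m>0$ whenever $\f(X)=0$. Hence, for any such $m$, the definition of the polar gives $\langle m^{-1}X, X'\rangle \leq 1$, i.e., $\langle X, X'\rangle \leq m$. Taking the infimum over the admissible values of $m$ yields $\langle X, X'\rangle \leq \f(X)$, as required.

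There is essentially no serious obstacle; the only point that needs a moment's care is the handling of the boundary case $\f(X)=0$, where the set $\{m\in\R_+^*\colon m^{-1}X\in A\}$ equals all of $\R_+^*$ (again by \Cref{star}), so that letting $m\downarrow 0$ forces $\langle X,X'\rangle \leq 0 = \f(X)$. Star-shapedness is used in exactly one place — to guarantee that the sub-level statement $m>\f(X)\Rightarrow m^{-1}X\in A$ holds, rather than merely the weaker $m^{-1}X\in \sta(A)$ — so the hypothesis is used in an essential way.
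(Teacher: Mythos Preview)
Your proof is correct. Both inclusions are argued cleanly, and your use of \Cref{star} to obtain $m^{-1}X\in A$ for every $m>\f(X)$ (including the boundary case $\f(X)=0$) is precisely what star-shapedness buys here.

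The paper's proof reaches the same conclusion by a somewhat different organisation: rather than proving two inclusions, it partitions each side of \eqref{polar-alternative-representation} into three pieces according to whether $\f(X)=0$, $0<\f(X)<\infty$, or $\f(X)=\infty$, and shows the corresponding pieces coincide. Your argument is more direct and shorter; the paper's decomposition makes the role of each regime of $\f$ more explicit but at the cost of extra bookkeeping. Substantively the two arguments rely on the same two facts --- that $A\subseteq\Acc{1}{\f}$ and that $\{m>0:m^{-1}X\in A\}$ is an upward interval with left endpoint $\f(X)$ --- so the difference is organisational rather than conceptual.
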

\begin{proof}
Notice that we can write
\[A^\odot = \{X'\in\X'\colon\, \langle X,X'\rangle\leq 1 \text{ for all $X\in A$} \} = B_0\cap B\cap B_\infty,\]
where
\begin{align*}
&B_0 = \{X'\in\X'\colon\, \langle X,X'\rangle\leq 1 \text{ for all $X\in A$ such that $ \f (X) = 0$}\} ,\\
&B =\{X'\in\X'\colon\, \langle X,X'\rangle\leq 1 \text{ for all $X\in A$ such that $0< \f (X)<\infty$}\} ,\\
&B_\infty =\{X'\in\X'\colon\, \langle X,X'\rangle\leq 1 \text{ for all $X\in A$ such that $ \f (X)=\infty$}\} .
\end{align*}
Similarly, we can write the right-hand side in \eqref{polar-alternative-representation} as
\[\{X' \in \X' \colon\, \langle X, X' \rangle \leq \f (X) \text{ for all }X \in \X \} = B_0^*\cap B^*\cap B_\infty^*,\]
where
\begin{align*}
&B_0^* = \{X'\in\X'\colon\, \langle X,X'\rangle\leq 0 \text{ for all $X\in \X$ such that $\f (X) = 0$}\} ,\\
&B^* =\{X'\in\X'\colon\, \langle X,X'\rangle\leq \D_A(X) \text{ for all $X\in \X$ such that $0< \f (X)<\infty$}\} ,\\
&B_\infty^* =\{X'\in\X'\colon\, \langle X,X'\rangle\leq \infty \text{ for all $X\in \X$ such that $ \f (X)=\infty$}\} .
\end{align*}
Clearly $B^{\vphantom{*}}_\infty = B_\infty^* = \X'$ since $B_\infty$ is defined by a vacuous sentence and the upper bound $\f(X)=\infty$ in $B_\infty^*$ is non-binding. Thus, to establish the proposition it suffices to show that $B_0 = B_0^*$ and $B = B^*$.

For the equality $B_0 = B_0^*$, suppose $X'\in B_0$ and let $X\in \X$ be such that $\f(X) = 0$. If $\langle X, X'\rangle\leq 0$ then there is nothing to show as in this case $X'\in B_0^*$. If $\langle X,X'\rangle \geq 0$, then --- as $\f$ is positive homogeneous --- we have $\f(\lambda X) = 0$ for all $\lambda>0$ and, by assumption, $\langle \lambda X,X'\rangle\leq 1$ for all $\lambda>0$, which necessarily entails $\langle X,X'\rangle = 0$. Thus, $B_0\subseteq B_0^*$. That $B_0^*\subseteq B_0$ is obvious. Hence, $B_0 = B_0^*$

For the equality $B = B^*$, suppose $X'\in B$ and let $X\in \X$ be such that $0<\f(X)<\infty$. Writing $Y = X/\f(X)$, we have $Y\in A$ by \cref{propo2-item3} in \Cref{propo2} and $0<\f (Y)<\infty$ by positive homogeneity. Thus $\langle Y,X'\rangle\leq 1$ or, which is the same, $\langle X, X'\rangle \leq \D_A(X)$. The preceding argument shows that, $B\subseteq B^*$. Reciprocally, suppose $X'\in B^*$ and let $X\in A$ be such that $0<\f (X)<\infty$. Writing $Y = \D_A(X)X\in\X$, then again positive homogeneity entails $0<\f (Y)=\f (X)^2<\infty$. Thus, $\langle Y,X'\rangle\leq \D_A(Y)$ or, equivalently, $\langle \D_A(X)X,X'\rangle\leq \D_A(X)^2$, from which we deduce that $\langle X,X'\rangle\leq 1$ since $\f(X)\leq 1$. Therefore, $B^*\subseteq B$, which establishes the equality in $B = B^*$.
\end{proof}

\begin{ex}\label{bipolar-counterexample}
Let $\X = L^1$, so that $\X' = L^\infty$ and $\X''= \mathrm{ba}$, where $\mathrm{ba}$ is the set of all finitely additive measures on $(\Omega,\mathfrak{F})$ that are absolutely continuous w.r.t.\ $\Pro$. With the dual pair $\langle L^1,L^\infty\rangle$ in mind, if $A$ is the unit ball in $\X$, then clearly $A^\odot \supseteq \mathrm{ball}(L^\infty)$. To see that the converse inclusion $A^\odot \subseteq \mathrm{ball}(L^\infty)$ also holds, notice that if $X'\in\X'$ is such that $\Vert X'\Vert_\infty > 1$ then, since the random variable $X=\mathbb{I}_{[X'> \lambda]}/\Pro[X'>\lambda]$ belongs to $\mathrm{ball}(L^1)$ for any conformable $1<\lambda<\Vert X'\Vert_\infty$, we have for such an $X$
$$
\langle X, X'\rangle = \frac{1}{\Pro[X'>\lambda]}\int_{[X'>\lambda]} X'\,\mathrm{d}\Pro\geq \frac{1}{\Pro[X'>\lambda]}\int_{[X'>\lambda]}\lambda\,\mathrm{d}\Pro > 1,
$$
hence $X'\notin A^\odot$. Fix $B \coloneqq A^\odot$ and, now with the dual pair $\langle L^\infty, \mathrm{ba}\rangle$ in mind, notice that given any $X''\in\mathrm{ba}$ with total variation less than 1, clearly one has $\langle X'', X'\rangle\leq 1$ for all $X'\in B$. That is, $X''\in B^\odot$. However, since $L^1$ is not reflexive, not every such $X''$ is the image of an $X\in L^1$ via the canonical embedding. Therefore, $(A^\odot)^\odot \supsetneq A^{\odot\odot}$.
\end{ex}

\begin{ex}\label{law couter exemple}
Let $\Omega = \{0,1\}^\N$ be the Bernoulli space comprised of all sequences of $0$'s and $1$'s, that is, the generic element $\omega\in\Omega$ is of the form $\omega = (\omega_1,\omega_2, \dots)$ with $\omega_n\in\{0,1\} $ for all $n$. The probability measure $\Pro$ is defined, for each $n$ and each $n$tuple $x_1,\dots,x_n\in\{0,1\}$, via
\[
\Pro\{\omega\in\Omega\colon\,\omega_1=x_1,\dots,\omega_n=x_n,\omega_{n+1}\in\R,\omega_{n+1}\in\R,\dots\} = 1/2^n
\]
Now define $X_n(\omega) = n\times\mathbb I(\omega_n=1)$, and put $B = \{X_1,X_2,\dots\} $. Such $B$ is radially bounded, since for any fixed element $X_n\in B$, there is only one element of $B$ in the direction $\overline{0\,X_n}$. However, $\mathcal{L_B}$ is not radially bounded: indeed, since $nX_1 =_d X_n$, we have that $nX_1\in\mathcal{L}_B$ for all $n$, and thus $\cal L_B$ is not radially bounded in the direction of $X_1$. Similarly, $X_2/2 =_d X_{2n}/2n $ and thus we have $nX_2\in\cal L_B$ for all $n$, and so on.
\end{ex}

\section{Figures}\label{apen figs}

\begin{figure}[H]
\caption{Representation of the \pl\ $\f$ of a set $A$.}\label{fig:minkowski-gauge}
\begin{center}
\begin{tikzpicture}[scale=2]

\clip (0,0) (-2,-2)rectangle (2,2);
\filldraw[thick, fill=black!20, fill opacity=.5, rotate = 45] circle (1) ;

\fill (1,1.5) circle [radius=.03cm] node[anchor=north west,scale=.8]{$X$};
\fill (0.555,0.828) circle [radius=.03cm] node[anchor=north west,scale=.8]{$\dfrac{X}{\f(X)}$};
\draw[dashed,->, Red](0,0)--(1.2,1.8);
\fill (0,0) circle [radius=.03cm] node[anchor=north west,scale=.8]{$0$};
\draw (-.4,0.4) node[]{$A$};
\draw[thin, ->] (-1.5,0) -- (1.5,0);
\draw[thin, ->] (0,-1.5) -- (0, 1.5);
\end{tikzpicture}
\end{center}
\end{figure}

\begin{figure}[H]
\caption{A set $A$ which is radially bounded and strongly star-shaped. The ray $R_X$ is represented by the dashed line in red, which clearly ``leaves'' the set (as any such ray).}\label{fig radially bounded}
\begin{center}
\begin{tikzpicture}[scale=2]

\clip (0,0) (-2,-2)rectangle (2,2);

\fill[ fill=black!20, fill opacity=.5, rotate = 45] (-3,0) rectangle (3,1);

\fill (0,0) circle [radius=.03cm] node[anchor=north west,scale=.8]{$0$};
\draw[dashed,<->](-2,-2)--(2,2);
\draw[thin,->] (-2,0) -- (2,0);
\draw[thin,->] (0,-2) -- (0, 2);
\draw[<->, rotate = 45](-3,1)--(3,1) ;

\fill (-0.6,-0.1) circle [radius=.02cm] node[anchor=north west,scale=.8]{$X$};
\fill (-1.692,-0.282) circle [radius=.02cm] node[anchor=north west,scale=.8]{$\delta_X X$};
\fill (-1.92,-.32) circle [radius=.02cm] node[anchor=north west,scale=.8]{};
\fill (-1.95,-.12) circle [radius=.0cm] node[anchor=north west,scale=.8]{$\delta X$};
\draw[dashed,->, Red, rotate = 0](0,0)--(-1.992,-0.332);
\fill (0,0) circle [radius=.02cm] node[anchor=north west,scale=.8]{$0\in A$};
\fill (1,1) circle [radius=.02cm] node[anchor=north west,scale=.8]{$1\notin A$};

\draw[thin,<->] (-2,0) -- (2,0);
\draw[thin,<->] (0,-2) -- (0, 2);
\end{tikzpicture}
\end{center}
\end{figure}

\begin{figure}[H]
\caption{A set $A$ which is absorbing, radially bounded at non-constants, stable under scalar addition and star-shaped. The subspace $\R$ of constant random variables is represented by the thick black diagonal.}\label{fig stable scalar addition}
\begin{center}
\begin{tikzpicture}[scale=2]

\clip (0,0) (-2,-2)rectangle (2,2);
\fill[ fill=black!20, fill opacity=.5, rotate = 45] (-3,-0.5) rectangle (3,0.5);
\draw[<->, rotate = 45, dashed](-3,-.5)--(3,-.5) ;
\draw[<->, rotate = 45, dashed](-3,.5)--(3,.5) ;

\fill (1,1.5) circle [radius=.03cm] node[anchor=south west,scale=.8]{$X+1$};
\fill (0,0.5) circle [radius=.03cm] node[anchor=north west,scale=.8]{$X$};
\fill (-1,-0.5) circle [radius=.03cm] node[anchor=north,scale=.8]{$X-1$};
\fill (0,0) circle [radius=.03cm] node[anchor=north west,scale=.8]{$0$};
\fill (1,1) circle [radius=.03cm] node[anchor=north west,scale=.8]{$1$};
\draw[thick,<->](-2,-2)--(2,2);
\draw[thin,->] (-2,0) -- (2,0);
\draw[thin,->] (0,-2) -- (0, 2);
\end{tikzpicture}
\end{center}
\end{figure}

\begin{figure}[H]
\caption{A set $A$ which is absorbing and radially bounded. Notice that $\delta_X$ is not uniquely defined.} \label{fig absorbing set}
\begin{center}
\begin{tikzpicture}[scale=2]

\clip (0,0) (-2,-2)rectangle (2,2);

\filldraw[ fill=black!20, fill opacity=.5, rotate = 45, thick] circle (1.5);

\filldraw[ fill=white, thick] circle (1.2);

\filldraw[ fill=black!20, fill opacity=.5, rotate = 45, dashed] circle (0.8);

\fill (1.5,1.5) circle [radius=.03cm] node[anchor=north west,scale=.8]{$ X$};

\fill (0,0) circle [radius=.03cm] ;
\fill (0.2,0.2) circle [radius=.03cm] node[right=.1cm,scale=.8]{$ \delta_X X$};
\draw[dashed,->, Red](0,0)--(2,2);

\draw[thin,<->] (-2,0) -- (2,0);
\draw[thin,<->] (0,-2) -- (0, 2);
\draw[thick,-] (0,0) -- (0.2, 0.2);
\end{tikzpicture}
\end{center}
\end{figure}

\begin{figure}[H]
\caption{A set $A$ which is star-shaped set and radially bounded.} \label{fig star-shaped}
\begin{center}
\begin{tikzpicture}[scale=2]

\clip (0,0) (-2,-1.5)rectangle (2,1.5);

\draw[ fill=black!20, fill opacity=.5, thick] (0,1)--(-0.25,0.25)--(-1,0)--(0,0)-- (0,-1)--(0.25,-0.25)--(1,0)--(0.25,0.25)--(0,1);

\fill (-0.1,0.7) circle [radius=.02cm] node[anchor=north west,scale=.8]{$ X$};
\fill (-0.05,0.35) circle [radius=.02cm] node[anchor=north west,scale=.8]{$ \lambda X$};
\draw[dashed,->, Red](0,0)--(-0.28571428571,2);

\draw[thin,<->] (-2,0) -- (2,0);
\draw[thin,<->] (0,-1.5) -- (0, 1.5);
\end{tikzpicture}
\end{center}
\end{figure}

\begin{figure}[H]
\caption{A set $A$ which is strongly star-shaped, with $0\in\bd(A)$.\label{fig:strong-star}}
\begin{center}
\begin{tikzpicture}[scale=2]
\draw[thin,->] (-1.5,0) -- (2,0);
\draw[thin,->] (0,-.5) -- (0, 2.5);
\filldraw[thick, fill=black!20, fill opacity=.5, draw=black] (0,1) circle [radius=1cm];
\draw (-.5,1.2) node[]{$A$};
\draw[color=Red, dashed, ->] (0,0) -- (3,2);
\draw[thick,->] (0,0) -- (1.5,1) node[anchor=north west]{$X$};
\fill (0,0) circle [radius=.03cm] node[anchor=north west,scale=.8]{$0\in \mathrm{bd}(A)$};
\end{tikzpicture}
\end{center}
\end{figure}

\begin{figure}[H]
\caption{A star-shaped set $A$ (in gray) with convex complement for which $\f$ is not concave.}\label{fig counter concave}
\begin{center}
\begin{tikzpicture}[scale=1]

\clip (0,0) (-4,-2)rectangle (4,4);
\fill[fill=black!20, fill opacity=.8, rotate = 0](0,0) (-4,-4)rectangle (4,4);
\filldraw[ fill=white!20, rotate = 0] (-4,5)--(0,1)-- (4,5);

\filldraw[ fill=BrickRed!20, fill opacity=.50, dashed] (-5,5)--(0,0)-- (5,5);

\fill (1,3) circle [radius=.0cm] node[anchor=south west,scale=.8]{$A^\complement$};
\fill (1,-1) circle [radius=.0cm] node[anchor=south west,scale=.8]{$A$};
\fill (2.8,3.5) circle [radius=.0cm] node[anchor=north,scale=.8, rotate=45]{$\cone (A^\complement)$};

\fill (0,0) circle [radius=.05cm] node[anchor=north west,scale=.8]{$0$};

\fill (1,1) circle [radius=.05cm] node[anchor=west,scale=.8]{$Z$};
\fill (1,2) circle [radius=.05cm] node[anchor=west,scale=.8]{$W$};
\fill (1,0.5) circle [radius=.05cm] node[anchor=west,scale=.8]{$Y$};

\draw[thin,<->] (-4,0) -- (4,0);
\draw[thin,<->] (0,-2) -- (0, 4);
\end{tikzpicture}
\end{center}
\end{figure}

\end{document}